\newtheorem{theorem}{Theorem}[section]
\newtheorem{lemma}[theorem]{Lemma}
\newtheorem{prop}[theorem]{Proposition}
\newtheorem{cor}[theorem]{Corollary}
\theoremstyle{definition}
\newtheorem{defi}[theorem]{Definition}
\newtheorem{example}[theorem]{Example}
\theoremstyle{remark}
\newtheorem{remark}[theorem]{Remark}
\numberwithin{equation}{section}
\newcommand{\onto}{\,\,\twoheadrightarrow\,\,}
\newcommand{\la}{\label}
\newcommand{\GL}{\mathtt{GL}}
\newcommand{\Ker}{\mathtt{Ker}}
\newcommand{\Coker}{\mathtt{Coker}}
\newcommand{\Res}{\mathtt{Res}}
\newcommand{\Spec}{\mathtt{Spec}}
\newcommand{\ad}{\mathtt{ad}}
\newcommand{\Der}{\mathtt{Der}}
\newcommand{\End}{\mathrm{End}}
\newcommand{\U}{\mathtt{U}}
\newcommand{\into}{\hookrightarrow}
\def\c{\mathbb{C}}
\def\R{\mathbb{R}}
\def\e{\boldsymbol{e}}
\def\Z{\mathbb{Z}}
\def\qaw{Q_\A}
\def\qreg{Q_{\rm{reg}}}
\def\QQ{\mathbf{Q}}
\def\N{\mathbb{N}}
\def\O{\mathcal{O}}
\def\C{\mathcal{C}}
\def\g{\mathfrak{g}}
\def\grd{\mathtt{gr}}
\def\sl2{{\mathfrak{s}\mathfrak{l}}_2}
\def\vreg{V_{\rm{reg}}}
\def\VV{\widehat V}
\def\aalpha{\widehat{\alpha}}
\def\Ad{\mathrm{Ad}}
\def\A{\mathcal{A}}
\def\QQ{\mathcal Q}
\def\U{U} 
\def\M{\mathcal{M}}
\def\NN{\mathcal{N}}
\def\wM{\widetilde{\mathcal{M}}}
\def\D{\mathcal{D}}
\def\wD{\widetilde{\mathcal{D}}}
\def\wB{\widetilde B}
\begin{document}
%
%
%
\title[Deformed Calogero--Moser operators and Cherednik algebras]{Deformed Calogero--Moser operators and ideals of rational Cherednik algebras}
\author{Yuri Berest}
\address{Department of Mathematics,
Cornell University, Ithaca, NY 14853-4201, USA}
\email{berest@math.cornell.edu}
%
%
\author{Oleg Chalykh}
\address{School of Mathematics, University of Leeds, Leeds LS2 9JT, UK}
\email{o.chalykh@leeds.ac.uk}
%

%
\begin{abstract}
We consider a class of hyperplane arrangements $\mathcal A$ in ${\mathbb C}^n$ that generalise the locus configurations of \cite{CFV}. To such an arrangement we associate a second order partial differential operator of Calogero--Moser type, and prove that this operator is completely integrable (in the sense that its centraliser in $\mathcal{D}({\mathbb C}^n\setminus\mathcal A)$ contains a maximal commutative subalgebra of Krull dimension $n$). Our approach is based on the study of shift operators and associated ideals in the spherical Cherednik algebra that may be of independent interest. The examples include all known families of deformed (rational) Calogero--Moser systems that appeared in the literature; we also construct some new examples, including a BC-type analogues of completely integrable operators recently found by D. Gaiotto and M. Rap$\check{\rm c}$\'ak in \cite{GR}. 
We describe these examples in a general framework of rational Cherednik algebras close in spirit to \cite{BEG} and \cite{BC}.
\end{abstract}

\maketitle

\section{Introduction}
Let  $V = \R^n $ be an Euclidean space with standard inner product $(\cdot, \cdot)$. Consider a collection $ \A_+ =\{\alpha\} $ of nonparallel vectors in $V$ with prescribed `multiplicities' $k_\alpha$, which we assume (for the moment) to be arbitrary real numbers. We will refer to the pair $ (\A, k_\alpha) $, where  $ \A := \A_+ \cup (-\A_+) $ with $ k_{-\alpha} := k_\alpha $, as a {\it configuration} in $\R^n$. 
With such a configuration we associate a generalised \emph{Calogero--Moser operator} of the form
\begin{equation}
\label{gcm}
L_{\A}:=\Delta_n -\sum_{\alpha\in \A_+} \frac{k_\alpha(k_\alpha+1)(\alpha,\alpha)}{(\alpha,x)^2}\,,
\end{equation}
where $ \Delta_n $ is the Laplacian on $\mathbb R^n$.
The standard (rational) Calogero--Moser operator corresponds to the root system of type $A_{n-1}$ with all $k_\alpha=k$:
\begin{equation}
\label{cm}
L=\Delta_n -\sum_{i<j}^n \frac{2\,k(k+1)}{(x_i-x_j)^2}\,.
\end{equation}
The operator \eqref{cm} can be viewed as a quantum Hamiltonian of a system of $n$ interacting particles on the line. This is a celebrated example of a quantum completely integrable system: there exist $n$ algebraically independent partial differential operators $L_1, L_2,\dots, L_n$, including $L$, such that $[L_i, L_j]=0$ for all $i, j= 1,\dots,n$. In contrast, the quantum Hamiltonian \eqref{gcm} is not completely integrable for an arbitrary configuration. The natural question for which $ \A$'s exactly this Hamiltonian is  integrable has received a good deal of attention --- both in mathematical and physical literature --- but still remains open. 

The starting point for the present paper is the following observation, which is a simple consequence of the main result of \cite{T}.
\begin{theorem}\label{tci} Let $ L_\A $ be a completely integrable quantum Hamiltonian of the form \eqref{gcm}
such that its quantum integrals $L_1, \dots, L_n$ have algebraically independent constant principal symbols
$p_1, \dots, p_n \in \R[V^*]$.
Assume that $k_\alpha\notin \Z$ for all $\alpha\in\A$. Then the polynomials $p_i$ are invariant under a finite Coxeter group $W\subset\GL(V)$, and $ \A $ is a subset of the root system $R$ of $W$.
\end{theorem}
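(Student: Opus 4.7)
The plan is to derive the theorem as a fairly direct consequence of the main result of \cite{T} together with a short invariant-theoretic argument. My strategy has three steps: (i) use \cite{T} to obtain, for each $\alpha \in \A$, an $s_\alpha$-invariance property of the leading symbols $p_i$, where $s_\alpha \in O(V)$ is the orthogonal reflection in the hyperplane $\Pi_\alpha = \{x \in V : (\alpha,x) = 0\}$; (ii) show that the group $W$ generated by these reflections is a finite Coxeter group; and (iii) identify $\A$ with a subset of the root system $R$ of $W$.

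For step (i), the key point is that near each hyperplane $\Pi_\alpha$ the two indicial exponents of $L_\A$ are $k_\alpha + 1$ and $-k_\alpha$, whose difference $2k_\alpha + 1$ is non-integer by hypothesis. The main theorem of \cite{T} exploits this non-resonance to show that any scalar differential operator with constant-coefficient principal symbol $p$ commuting with $L_\A$ must preserve the two Frobenius branches at $\Pi_\alpha$ separately; comparing the top-degree parts of its action on the two branches then forces $p \circ s_\alpha = p$. Applied to each of $L_1,\dots,L_n$, this yields that every $p_i$ is $s_\alpha$-invariant for all $\alpha\in\A$.

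For steps (ii) and (iii), let $W \subset O(V)$ be generated by $\{s_\alpha : \alpha \in \A\}$, so $p_1, \dots, p_n \in \R[V^*]^W$. Algebraic independence of the $p_i$ gives that $\R[p_1,\dots,p_n]$ has Krull dimension $n$; hence $\R[V^*]^W$, being sandwiched between $\R[p_1,\dots,p_n]$ and $\R[V^*]$, also has Krull dimension $n$, and the extension $\R[V^*]^W \subset \R[V^*]$ is finite. This forces $|W| < \infty$. A finite subgroup of $O(V)$ generated by reflections is automatically a Coxeter group, and its (reduced) root system $R$ contains, up to rescaling, every vector $\alpha$ whose associated reflection lies in $W$; this gives $\A \subset R$. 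The principal obstacle is step (i) --- turning commutativity with the single second-order singular operator $L_\A$ into an exact reflection symmetry of the (possibly high-order) constant-coefficient principal symbols --- and this delicate local analysis is exactly what \cite{T} supplies; once it is granted, steps (ii) and (iii) are routine consequences of Chevalley--Shephard--Todd.
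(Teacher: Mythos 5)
Your proposal follows the same overall architecture as the paper's argument: invoke the main result of \cite{T} to deduce that each principal symbol $p_i$ is $s_\alpha$-invariant for every $\alpha\in\A$, and then use the algebraic independence of $p_1,\dots,p_n$ to force the reflection group $W=\langle s_\alpha : \alpha\in\A\rangle$ to be finite, after which $\A\subset R$ is immediate. The local analysis you sketch in step (i) (indicial exponents, Frobenius branches) is a plausible unpacking of what \cite{T} does, and the conclusion is the one the paper uses.

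Where you genuinely diverge from the paper is in step (ii). The paper argues pairwise: if $s_\alpha s_\beta$ has infinite order, the closure of $\langle s_\alpha s_\beta\rangle$ is $SO(2)$ acting on the plane $\mathrm{span}(\alpha,\beta)$, so each $p_i$ is invariant under this circle, and the ring of such invariants has Krull dimension $<n$, contradicting algebraic independence; hence every $s_\alpha s_\beta$ has finite order and $W$ is a finite Coxeter group. You instead argue globally: $\R[V^*]^W$ has Krull dimension $n$, and from this you want $|W|<\infty$. The conclusion is correct, but the intermediate claim ``the extension $\R[V^*]^W\subset\R[V^*]$ is finite'' is, as stated, a non sequitur (and for infinite $W$ the ring extension is in general \emph{not} finite). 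What actually works is the field-theoretic version: since $p_1,\dots,p_n\in\R(V^*)^W$ are algebraically independent, $\R(V^*)^W$ has transcendence degree $n$ over $\R$; as $\R(V^*)$ is a finitely generated field extension of $\R$ of the same transcendence degree, $[\R(V^*):\R(V^*)^W]<\infty$; and because $W$ acts faithfully on $\R(V^*)$ fixing $\R(V^*)^W$, Artin's lemma bounds $|W|$ by this degree. (Equivalently, one can pass to the closure of $W$ in the compact group $O(V)$ and note that any positive-dimensional subgroup would collapse the Krull dimension of the invariant ring.) With that correction, your route is actually somewhat cleaner than the paper's, since it avoids having to argue separately that pairwise-finite products of reflections generate a finite group; the paper's version, on the other hand, localizes the contradiction to a concrete $SO(2)$-invariance in a single $2$-plane, which is more elementary.
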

Indeed, if $s_\alpha$ is the orthogonal reflection corresponding to $\alpha\in\A$, then, as shown in \cite{T}, each $p_i$ must be invariant under $s_\alpha$. Now take $\alpha, \beta\in \A_+$, and assume that $s_\alpha s_\beta$ is of infinite order. Then $p_i$ must be invariant under an arbitrary rotation in  the two-dimensional plane spanned by $\alpha, \beta$. However, the ring of polynomials invariant under such rotations has Krull dimension $<n$, which implies that $p_1,\dots, p_n$ cannot be algebraically independent. By contradiction, we conclude that $s_\alpha s_\beta$ is of finite order for any $\alpha, \beta$, therefore the reflections $\{s_\alpha\}_{\alpha\in\A}$ generate a finite Coxeter group $W$, and so $\A $ is a subset of the root system $R$ of $W$. 

\medskip

Theorem \ref{tci} tells us that for non-integral parameters $k_\alpha$, the completely integrable operators of the form \eqref{gcm} are closely related to Coxeter groups. Indeed, by a theorem of Heckman \cite{He}, the Calogero--Moser operator (introduced in \cite{OP}) 
\begin{equation}
\label{wcm}
L_{W}:=\Delta_n -\sum_{\alpha\in R_+} \frac{k_\alpha(k_\alpha+1)(\alpha,\alpha)}{(\alpha,x)^2}\,,
\end{equation}
is completely integrable for the root system $R$ of an arbitrary finite Coxeter group $W$ and an arbitrary $W$-invariant function $k:\, R \to \R$. (For all crystallographic groups $W$ this was already shown in \cite{HO4}.)

On the other hand, in the case when {\it all} the $k_\alpha$'s are integers, there are examples of completely integrable operators of the form \eqref{gcm} where $\A $ is not part of any root system (see \cite{CFV}). Instead, such configurations satisfy certain algebraic equations called the {\it locus relations}. The purpose of this paper is to study the general (`mixed') case: i.e., the completely integrable operators of the form \eqref{gcm} where some of the $k_\alpha$'s are integers and some are not. The first examples of such operators were constructed by A. Sergeev and A. Veselov in \cite{SV}; further examples were found by M. Feigin in \cite{F}.
Recently, D. Gaiotto and M. Rap$\check{\rm c}$\'ak \cite{GR} discovered the following family of operators in $ V=\R^{n_1}\times \R^{n_2}\times\R^{n_3} $ depending on (complex) parameters $ \epsilon_1, \epsilon_2, \epsilon_3$ satisfying 
$ \epsilon_1 + \epsilon_2 + \epsilon_3 = 0 $:
\begin{align}
L&=\epsilon_1\sum_{i=1}^{n_1}\partial_{z_i}^2+\frac{\epsilon_2\epsilon_3}{\epsilon_1}\sum_{i<j}\frac{2}{(z_i-z_j)^{2}}+\epsilon_1\sum_{i,j}\frac{2}{(z'_i-z''_j)^{2}}+ \nonumber\\
&+\epsilon_2\sum_{i=1}^{n_2}\partial_{z'_i}^2+\frac{\epsilon_1\epsilon_3}{\epsilon_2}\sum_{i<j}\frac{2}{(z'_i-z'_j)^{2}}+\epsilon_2\sum_{i,j}\frac{2}{(z_i-z''_j)^{2}}+ \la{GRex}\\
&+\epsilon_3\sum_{i=1}^{n_3}\partial_{z''_i}^2+\frac{\epsilon_1\epsilon_2}{\epsilon_3}\sum_{i<j}\frac{2}{(z''_i-z''_j)^{2}}+\epsilon_3\sum_{i,j}\frac{2}{(z_i-z'_j)^{2}}\,,\nonumber
\end{align}
where $z_i$, $z'_i$ and $z''_i$ are Cartesian coordinates in $\R^{n_1}$, $\R^{n_2}$ and $\R^{n_3}$, respectively\footnote{The operators \eqref{GRex} arise in the context of so-called $\Omega$-deformation of supersymmetric gauge theories (see, e.g., \cite{N, NW}). The parameters
$ \epsilon_1, \epsilon_2, \epsilon_3$ correspond to the `$\Omega$-deformed' $\c^3$ (denoted $ \c_{\epsilon_1} \times \c_{\epsilon_2} \times \c_{\epsilon_3} $ in \cite{GR}), with 
relation $ \epsilon_1 + \epsilon_2 + \epsilon_3 = 0 $ reflecting the Calabi-Yau condition (see {\it loc. cit.}, Sect. 1.4).}. 
We will show in Section~\ref{A123} that \eqref{GRex} also fits in our class of completely integrable operators. In addition to known examples, we will construct a number of new ones, including a BC-type generalisation of the Gaiotto-Rap$\check{\rm c}$\'ak 
family (see Section~\ref{BC123}).

A common feature of all these examples is that the vectors in $\A$ with non-integral multiplicities form a root system $R$ of a finite Coxeter group $W$, while those with integral multiplicities constitute a (finite) subset in $\R^n $ that is stable under the action of $W$. To ensure integrability, the vectors with integral multiplicities must satisfy certain compatibility conditions similar to the locus relations of \cite{CFV}. We call such $\A$'s the {\it generalised locus configurations}, or more precisely -- 
when $W$ is specified -- the {\it locus configurations of type $W$} (see Definition~\ref{defloc}). (In this
terminology, the original configurations considered in \cite{CFV} correspond to $ W = \{e\}$.)

Our main result -- Theorem \ref{gaic} -- states that for any generalised locus configuration, the operator \eqref{gcm} is completely integrable. In fact, we show that associated to a locus configuration of type 
$W$ there is a maximal commutative algebra of differential operators of rank $ |W| $, containing  
\eqref{gcm}. This algebra is isomorphic to the ring of {\it generalised quasi-invariant}  $Q_{\A} $ determined by $\A$ (see Definition \ref{gqi}). 
We also prove that there exists a linear differential operator $S$ such that
$$
L_{\A}\,S=S\,L_W 
$$
which we call (by tradition)  a {\it shift operator} from $ L_W $ to $ L_\A $.
In various special cases some results of this kind can be found in \cite{CFV1, SV, F, SV1}; our approach unifies them and applies to a broader class of operators. 

 
We now explain how generalised locus configurations  
are related to rational Cherednik algebras. Our approach is inspired by \cite{BEG} and \cite{BC}; 
however, the Cherednik algebras play a different role in our construction. 
If $X$ is an affine algebraic variety, we write $ \D(X) $ for the ring 
of (global) algebraic differential operators on $X$. It is well known that when
$X$ is singular, the ring $ \D(X) $ has a complicated structure. 
A natural way to approach $ \D(X) $ geometrically is to relate it to the ring of
differential operators on a non-singular variety $Y$, which is a resolution of $X$.
Specifically ({\it cf.} \cite{SS}), assuming that  the variety $X$ is  irreducible, 
one can choose a finite birational map $ \pi: Y \to X$  with $Y$ smooth and consider the space 
of differential operators from $ Y $ to $X\,$:
\begin{equation}
\label{1}
\D(Y,X) := \{D \in \D(\mathbb{K}) \,:\, 
D[\O(Y)] \subseteq \O(X) \}\, ,
\end{equation}
where $ {\mathbb K} $ is the field of rational functions on $ X $.
This space is naturally a right module over $ \D(Y)$ and a left module
over $ \D(X) $, and the two module structures are compatible: in other words, 
$ \D(Y,X) $ is an $\D(X)$-$\D(Y)$-bimodule. Taking the
endomorphism ring of $ \D(Y,X) $ over $\D(Y)$ and mapping the differential operators
in $ \D(X) $ to (left) multiplication operators on $ \D(Y,X) $
gives an algebra homomorphism: $ \D(X) \to \End_{\D(Y)} \D(Y,X)$, which ---
under good circumstances --- turns out to be an isomorphism. 
In \cite{BEG}, this construction was used for the varieties of classical
quasi-invariants, $ X = \Spec\, Q_m $, in which case the resolution $ \pi: Y \to 
X $ is given by the normalization map, with $ Y = \tilde{X} \cong  V $.

In this paper, we generalise (`deform') the above construction replacing the
ring  $ \D(Y) $ of differential operators on a smooth resolution of $X$
by a (spherical) Cherednik algebra. To be precise,
given a locus configuration $\A$ of type $W$, we consider
the variety of generalised quasi-invairants, $ X := \Spec\,\qaw $, together with a natural
map $ \pi: V/\!/W \to X $ corresponding to the inclusion
$ \qaw \subset \c[V]^W $ (see Definition 4.2). 
Instead of applying \eqref{1} directly to $ \pi $, we first restrict this map to the subspace
$ \vreg /\!/W $ of regular $W$-orbits in $ V/\!/W  $ obtained by removing from $V$
the reflection hyperplanes of $W$. We then define the ring 
$ Q_{\rm reg} \subseteq \c[\vreg]^W $, using the same 
algebraic conditions as for $ Q = \qaw $ (see (4.1)) but with $ \c[V]^W $
 replaced by $ \c[\vreg]^W $.  Now, taking $ X_{\rm reg} := \Spec \,Q_{\rm reg}$, we 
consider the bimodule $ \D(\vreg /\!/W, X_{\rm reg}) $ associated to
the natural map $ \pi_{\rm reg}: \vreg /\!/W \to X_{\rm reg} $. 
Since $W$ acts freely on $\vreg $, we have 
$ \D(\vreg/\!/W) \cong \D(\vreg)^W $, and therefore   
$\, \D(\vreg/\!/W, X_{\rm reg})  \subseteq \D(\vreg)^W[\delta_k^{-1}] \,$,
where $ \delta_k := \prod_{\alpha \in \A_{+}\setminus R} (\alpha,x)^{k_\alpha} $.
The spherical subalgebra $ B_k  $ of the rational Cherednik algebra $ H_k(W) $ 
with $ k = \{k_{\alpha}\}_{\alpha \in R} $ embeds naturally into $ \D(\vreg)^W $ 
via the Dunkl representation (see (2.5)); thus, we can define
$$
\M_{\A, W} := \D(\vreg/\!/W, X_{\rm reg})  \cap B_k\,.
$$
This  is a right $B_k$-module -- in fact, a right ideal of $B_k$ -- that we associate\footnote{For technical reasons, it will be more convenient for us to work with a twisted (fractional) ideal which is obtained by replacing $Q_{\rm reg} = \O(X_{\rm reg}) $ in the above construction by a rank one torsion-free $ \O(X_{\rm reg}) $-module $ \U_{\A} $ (see Section~\ref{S5.1}).} to our generalised locus configuration in place of \eqref{1}.

For the reader's convenience, we now outline the contents of the paper and briefly summarise our main results. Section~\ref{S2} comprises background material: here, we recall a (well-known) relation of quantum Calogero--Moser systems to rational Cherednik algebras \cite{EG} and review  some results on locus configurations from \cite{CFV}. 

In Section~\ref{S3}, we give our main definition (Definition~\ref{defloc}) and state our main result: Theorem~\ref{gaic}. The proof of Theorem~\ref{gaic} appears in Section~\ref{S5}; however, we do not prove this theorem directly but deduce it from a (much more) general algebraic result --- Theorem~\ref{IT} --- that provides necessary and sufficient conditions for the existence of differential shift operators. 

Theorem~\ref{IT} is proven in Section~\ref{pro} and should be considered as the second main result of this paper. 
Our approach originates from an attempt to understand examples and unify various {\it ad hoc} constructions of shift operators known in higher dimension $(n>1)$. Some of the ideas go back to old observations of the authors in \cite{B98} and \cite{C98}; our main innovation is in clarifying the role of (Ore) localisation and its relation to the ad-nilpotency condition (see Lemma~\ref{shiftl}) as well as the use of a canonical ad-nilpotent filtration 
(Lemma~\ref{filtl}). This allows us to state Theorem~\ref{IT} in an abstract `coordinate-free' form and prove it
under very general assumptions. Other notable results in Section~\ref{pro} are Proposition~\ref{fatrefl} and Proposition~\ref{commprop} that establish the reflexivity property of differential ideals related to shift operators and the existence of `large' commutative subalgebras of differential operators, respectively.

Section~\ref{exloc} describes examples of generalised locus configurations --- in fact, all currently known examples ---
in dimension $n >2$. As mentioned above, our collection contains a new interesting family: a BC-type generalization of the Gaiotto-Rap$\check{\rm c}$\'ak operators \eqref{GRex} (see Section~\ref{BC123}). 

In Section~\ref{twodim}, we attempt to describe {\it all} two-dimensional locus configurations:
we give a general construction of such configurations and explicitly describe a new large class of examples of locus configurations of type $W$, where $ W = I_{2N} $ is a dihedral group. 

In Section~\ref{S8}, we study deformed Calogero--Moser operators with harmonic oscillator terms. The main result of this section --- Theorem~\ref{gaicw} --- provides shift operators and quantum integrals for 
such Calogero-Moser operators; it can be viewed as a (partial) generalisation of Theorem~\ref{gaic}. We also draw reader's attention to Proposition \ref{suprop} and Example~\ref{Ex8.4} that link our results to
recent work on quantum superintegrable systems (see, e.g., \cite{MPR}).

Finally, in Section~\ref{S9}, we extend our results to Calogero-Moser operators associated with 
affine (noncentral) configurations. In dimension one, there are two famous examples: the Schr\"odinger operators with Adler-Moser potentials (a.k.a. the rational solutions of the KDV hierarchy \cite{AM}) 
and the so-called ``even'' family of bispectral operators discovered by Duistermaat and Gr\"unbaum in \cite{DG}. These examples correspond to affine locus configurations in $\c^1$ of types $ W = \{e\} $ and $ W = \Z_2 $, 
respectively. The main result of this section --- Theorem~\ref{gaicaff} --- can thus be viewed
as a natural multi-dimensional generalisation of the classical results of \cite{AM} and \cite{DG}.

\subsection*{Acknowledgments} {\footnotesize We are grateful to P. Etingof, M. Feigin, A. Sergeev and A. Veselov for many questions and stimulating discussions. We also want to thank Pavel Etingof for drawing our attention to new examples of deformed Calogero-Moser operators that appeared in \cite{GR} 
and Davide Gaiotto for interesting correspondence clarifying to us the origin of these examples.
We are especially grateful to Misha Feigin who read the first version of this paper and pointed out several inaccuracies and misprints. 
The work of the first author is partially supported by NSF grant DMS 1702372 and the 2019 Simons Fellowship. The work of the second author was partially supported by EPSRC under grant EP/K004999/1.  }

\section{Cherednik algebras and Calogero--Moser systems}
\la{S2}
In this section we recall a well-known relation between rational Cherednik algebras and Calogero--Moser systems. For more details and references, we refer the reader to \cite{EG}. 

Let $W$ be a finite Coxeter group with reflection representation $V$. Throughout the paper we will work over $\c$, so $V$ is a complex vector space with a $W$-invariant bilinear form $(\cdot, \cdot)$. Each reflection $s\in W$ acts on $V$ by the formula 
\begin{equation}
\la{refls}
s(x) = x-2\,\frac{(\alpha,x)}{(\alpha,\alpha)}\,\alpha\ ,
\end{equation}
where $\alpha\in V$ is a normal vector to the reflection hyperplane. Denote by $R_+$ the set of all these normals and put $R=R_+\cup\,-R_+$. Only the direction of each normal $\alpha$ is important, so we may assume that they are chosen in such a way that the set $R$ is $W$-invariant (it is also customary to choose $R_+$ to be contained in some prescribed half-space). Let us choose a $W$-invariant function $k\,:\, R\to\c$. The elements $\alpha\in R$ are called the roots of $W$, and $k_\alpha:=k(\alpha)$ is called the \emph{multiplicity} of $\alpha$. Note that we do not assume that $W$ is irreducible, and $R$ may not span the whole $V$.

We set $\, \vreg := \{\,x\in V\, | \, (\alpha,x)\ne 0\ \forall\alpha\in R\}$ and denote by $ \c[\vreg] $ and $ \D(\vreg) $
the rings of regular functions and regular differential operators on $ \vreg $, respectively. The
action of $ W $ on $ V $ restricts to $ \vreg $, so $ W $
acts naturally on $ \c[\vreg] $ and $ \D(\vreg) $ by algebra automorphisms. We form the crossed products
$ \c[\vreg]*W $ and $ \D W :=  \D(\vreg)*W$. As an algebra,
$\D W$ is generated by its two subalgebras, $ \c W $ and $ \D(\vreg) $. 

The Calogero--Moser operator associated to $W$ and $k=\{k_\alpha\}$ is a differential operator $L_{W}\in \D(\vreg)^W$ defined by
\begin{equation}\label{cmo}
L_{W}:=\Delta-u_{W}\,,\qquad u_{W}=\sum_{\alpha\in R_+} \frac{k_\alpha(k_\alpha+1)(\alpha,\alpha)}{(\alpha,x)^2}\,,
\end{equation}
where $\Delta$ is the Laplacian on $V$ associated with the $W$-invariant form $(\cdot, \cdot)$.

To describe the link between $L_{W}$ and Cherednik algebra, we first define the {\it Dunkl operators}
$\,T_{\xi} \in \D W \,$ as
\begin{equation}
\label{du} T_\xi :=
\partial_\xi+\sum_{\alpha\in R_+}
\frac{(\alpha,\xi)}{(\alpha, x)}k_\alpha s_\alpha\ , \quad \xi \in V\ .
\end{equation}
Note that the operators \eqref{du} depend on $\, k = \{k_{\alpha}\} \,$, and
we sometimes write $ T_{\xi, k} $ to emphasize this dependence.
The basic properties of Dunkl operators are listed in the following lemma.
\begin{lemma}[\cite{D}]\la{duprop}
For all $\,\xi, \eta \in V\,$ and $ w \in W $, we have

$(1)$\ {\rm commutativity:}\ $\, T_{\xi}\,T_{\eta} -
T_{\eta}\,T_{\xi} = 0 \,$,

$(2)$\ {\rm $W$-equivariance:}\ $\,w\,T_\xi =
T_{w(\xi)}\,w\,$,

$(3)$\ {\rm homogeneity:}\ $ \,T_\xi \,$ is an operator  of degree $ -1 $
 with respect to the natural homogeneous grading on $ \D W $.
\end{lemma}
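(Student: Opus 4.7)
Parts (2) and (3) are essentially immediate from the definition. For (3), the summand $\partial_\xi$ clearly has homogeneous degree $-1$, and each correction term $\frac{(\alpha,\xi)}{(\alpha,x)}\,k_\alpha s_\alpha$ is also of degree $-1$ because $1/(\alpha,x)$ has degree $-1$ while $k_\alpha s_\alpha \in \c W$ is of degree $0$. For (2), conjugation by $w$ in $\D W$ acts by $w\,\partial_\xi\,w^{-1}=\partial_{w\xi}$ on the differential part and by $w s_\alpha w^{-1}=s_{w\alpha}$ on the reflection part, while the rational factor transforms as $w\frac{1}{(\alpha,x)}w^{-1}=\frac{1}{(w\alpha,x)}$. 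Using $W$-invariance of the form and of $k$ and reindexing the sum by $\beta:=w\alpha$ (and pairing $\beta$ with $-\beta$ whenever $w\alpha\notin R_+$, which is harmless since $s_{-\beta}=s_\beta$, $k_{-\beta}=k_\beta$ and the quotient $(\beta,w\xi)/(\beta,x)$ is even in $\beta$) gives $w T_\xi w^{-1}=T_{w\xi}$, i.e.\ (2).

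The substantive content is (1). I would write $T_\xi=\partial_\xi+B_\xi$ with $B_\xi:=\sum_{\alpha\in R_+}\frac{k_\alpha(\alpha,\xi)}{(\alpha,x)}\,s_\alpha\in \c(\vreg)*W$, so that
\[
[T_\xi,T_\eta]=[\partial_\xi,B_\eta]-[\partial_\eta,B_\xi]+[B_\xi,B_\eta],
\]
and analyse each term inside $\D(\vreg)*W$. Using $\partial_\xi s_\alpha=s_\alpha\partial_{s_\alpha\xi}$ and the Leibniz rule one finds
\[
[\partial_\xi,B_\eta]=-\sum_{\alpha\in R_+}\frac{k_\alpha(\alpha,\xi)(\alpha,\eta)}{(\alpha,x)^2}\,s_\alpha+\sum_{\alpha\in R_+}\frac{k_\alpha(\alpha,\eta)}{(\alpha,x)}\,s_\alpha\bigl(\partial_{s_\alpha\xi}-\partial_\xi\bigr),
\]
and a symmetric identity with $\xi\leftrightarrow\eta$. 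The "diagonal" double-pole terms $\alpha=\beta$ are antisymmetric in $(\xi,\eta)$ and cancel between the two single-commutators. The remaining terms in $[\partial_\xi,B_\eta]-[\partial_\eta,B_\xi]$ are still first-order in $\partial$; since $\partial_{s_\alpha\xi}-\partial_\xi=-\frac{2(\alpha,\xi)}{(\alpha,\alpha)}\partial_\alpha$, these become purely rational functions times reflections and must be matched against the cross terms $\alpha\neq\beta$ in $[B_\xi,B_\eta]$.

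The crux is therefore showing that the rational expression
\[
[B_\xi,B_\eta]+\{\text{leftover from the mixed }[\partial,B]\text{ terms}\}
\]
vanishes as an element of $\c(\vreg)*W$. Collecting by the group element $s_\alpha s_\beta$ and using antisymmetry in $(\xi,\eta)$, everything reduces to an identity supported on each rank-two root subsystem $R\cap\mathrm{span}(\alpha,\beta)$. The plan is to establish this rank-two identity by a direct residue analysis: both sides are rational functions on a two-plane with only simple poles along the lines $(\gamma,x)=0$ for $\gamma$ in the rank-two subsystem, so it suffices to check that the residues at each such hyperplane coincide, which is an elementary check using the dihedral Coxeter relations $(s_\alpha s_\beta)^m=1$ and the $W$-equivariance already proved in (2). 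The main obstacle is precisely this rank-two cancellation: it is where the specific normalisation of the Dunkl coefficients $k_\alpha(\alpha,\xi)/(\alpha,x)$ (as opposed to any other reasonable choice) is forced, and it is the only step that uses nontrivial facts about Coxeter groups beyond the definition.
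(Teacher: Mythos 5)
The paper itself does not prove Lemma~\ref{duprop}: the lemma is attributed to Dunkl and the reader is referred to \cite{D}, so there is no in-paper argument to compare against. On its own merits, your treatment of parts (2) and (3) is correct.

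For part (1) the overall plan --- split $T_\xi=\partial_\xi+B_\xi$, expand the commutator, and reduce the resulting zeroth-order identity to rank-two dihedral subsystems checked by residues --- is the standard route, but there is a bookkeeping error in the middle of the argument. Both coefficients occurring in $[\partial_\xi,B_\eta]$, namely $\frac{k_\alpha(\alpha,\xi)(\alpha,\eta)}{(\alpha,x)^2}$ multiplying $s_\alpha$ and $\frac{2k_\alpha(\alpha,\xi)(\alpha,\eta)}{(\alpha,\alpha)(\alpha,x)}$ multiplying $s_\alpha\partial_\alpha$, are \emph{symmetric} in $(\xi,\eta)$, not antisymmetric. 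Consequently $[\partial_\xi,B_\eta]=[\partial_\eta,B_\xi]$ identically, so the mixed commutators cancel entirely and there is no leftover to be matched against the cross terms of $[B_\xi,B_\eta]$. What remains is exactly $[B_\xi,B_\eta]=0$, an identity in $\c[\vreg]*W$; this is in fact a simplification of your plan, but you should correct the ``matching'' sentence since as written it mischaracterises the structure of the computation. Finally, the rank-two residue verification is still only a plan, not a proof: the reduction to dihedral subsystems and the residue match along each hyperplane are plausible, and this is indeed where Dunkl's specific normalisation of the coefficients is forced, but a complete argument would need to actually carry out the two-dimensional computation (or the equivalent direct check for each dihedral group).
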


In view of Lemma~\ref{duprop}, the assignment $\,\xi \mapsto T_\xi\,$
extends to an (injective) algebra homomorphism
\begin{equation}
\la{hom}
\c[V^*] \hookrightarrow \D W \ ,\quad q \mapsto T_q \ .
\end{equation}
Identifying $ \c[V^*] $ with its image in $ \D W $ under
\eqref{hom}, we now define the {\it rational Cherednik algebra}
$H_k=H_k(W)$ as the subalgebra of $\D W$ generated by $ \c[V] $,
$\,\c[V^*] $ and $ \c W$.
The family $ \{H_k\} $ can be viewed as a deformation (in fact, 
universal deformation) of the crossed product $\, H_0 = \D(V)*W \,$
(see \cite{EG}, Theorem~2.16). The above realization of $\, H_k$ inside $\D W$
is referred to as the {\it Dunkl representation}
of $ H_k $.

The algebra $ \D W = \D(\vreg) * W $ carries a natural {\it differential} filtration,
defined by taking $\,\deg(x) =0 $,
$\,\deg(\xi) = 1 $ and $\deg(w) = 0$ for all $x\in V^*$, $\xi\in V$
and $w\in W$. Through the Dunkl
representation, this induces a filtration on $ H_k $
for all $ k $, and the associated graded ring
$\,\grd\, H_k\,$ is isomorphic to $\,\c[V\times V^*]*W$; in particular, it is independent of $ k $.
This implies the PBW property for $ H_k $, i.e. 
a vector space isomorphism
\begin{equation}\la{pbw}
H_k\stackrel{\sim}{\to} \c [V] \otimes \c W \otimes \c[V^*]\,.
\end{equation}

By definition, the {\it spherical subalgebra} of $ H_k
$ is given by $\e\, H_k \,\e \,$, where $\, \e =
|W|^{-1} \sum_{w \in W } w \,$.
For $ k = 0 $, we have $H_0 = \D(V)
* W$ and $\e H_0\e \cong \D(V)^W \,$; thus, the family $ \e H_k \e$ is a
deformation (in fact, universal deformation) of the ring of
invariant differential operators on $ V $.

The Dunkl representation restricts to the embedding $\,\e H_k \e \into \e \D W \e\,$. 
If we combine this
with (the inverse of) the isomorphism $\, \D(\vreg)^W
\stackrel{\sim}{\to} \e \,\D W \e \,$, $\, u \mapsto \e u \e = \e
u = u \e \,$, we get an algebra map (cf. \cite{He})
\begin{equation}
\la{HC}
\Res:\,\e H_k \e \into \D(\vreg)^W \ ,
\end{equation}
representing the spherical subalgebra $\e H_k \e$ by invariant differential operators. 
We will refer to \eqref{HC} as the {\it spherical Dunkl
representation} and denote 
\begin{equation}\la{HC1}
B_k:=\Res (\e H_k\e)\subset \D(\vreg)^W\,.
\end{equation}
%
%
\begin{theorem}[see \cite{He}]
\label{ci} 
Let $\xi_1\,\dots, \xi_n$ be an orthonormal basis of $V$, and 
$q=\xi_1^2+\dots+\xi_n^2\in \c[V^*]^W$. 
Then $\Res(\e\, T_q \,\e)=L_{W}$ is the Calogero--Moser operator \eqref{cmo}. 
Furthermore, the  image of $\e\,\c[V^*]^W\e$ under the spherical Dunkl representation \eqref{HC} 
forms a commutative subalgebra in $\D(\vreg)^W$, and the operator $L_{W}$, thus, defines a quantum 
completely integrable system.
\end{theorem}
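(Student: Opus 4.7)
The plan is to verify the three assertions separately; each rests on an explicit computation with Dunkl operators combined with classical invariant theory.

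For the identity $\Res(\e T_q \e) = L_W$, the main calculation is to expand $T_q = \sum_i T_{\xi_i}^2$ using \eqref{du}. Squaring a single Dunkl operator yields four types of contributions: the pure second-order piece $\partial_{\xi_i}^2$; commutator terms $[\partial_{\xi_i},(\alpha,x)^{-1}]$ that produce a scalar potential of order $k_\alpha$; mixed differential-times-reflection terms of the form (coefficient)$\cdot\partial_{\xi_i} s_\alpha$; and pure reflection-quadratic terms indexed by pairs $(\alpha,\beta)$. Summing over the orthonormal basis $\{\xi_i\}$ collapses many cross-terms thanks to the identity $\sum_i (\alpha,\xi_i)(\beta,\xi_i) = (\alpha,\beta)$. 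After left and right multiplication by the symmetriser $\e$, the single reflections $s_\alpha$ act as the identity on $W$-invariant functions, the off-diagonal pairs $s_\alpha s_\beta$ cancel under symmetrisation, and the diagonal $s_\alpha^2 = 1$ contributes the $k_\alpha^2$ part; combined with the $k_\alpha$ piece coming from the commutator terms, the total potential is precisely $-\sum_{\alpha\in R_+} k_\alpha(k_\alpha+1)(\alpha,\alpha)/(\alpha,x)^2$, giving $L_W$.

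Commutativity of $\Res(\e\c[V^*]^W\e)$ is essentially automatic. Lemma~\ref{duprop}(1) says the Dunkl embedding \eqref{hom} realises $\c[V^*]$ as a commutative subalgebra of $H_k$, hence so is $\c[V^*]^W \subset H_k$. By Lemma~\ref{duprop}(2), for $p \in \c[V^*]^W$ the element $T_p$ commutes with every $w \in W$ and hence with $\e$, so $(\e T_p \e)(\e T_{p'} \e) = T_p T_{p'} \e = T_{p'} T_p \e = (\e T_{p'} \e)(\e T_p \e)$ for any $p,p' \in \c[V^*]^W$; applying the algebra map $\Res$ of \eqref{HC} transports this commutative subalgebra into $\D(\vreg)^W$.

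For complete integrability, I would invoke the Chevalley--Shephard--Todd theorem: $\c[V^*]^W$ is a polynomial ring on $n$ algebraically independent homogeneous generators $p_1,\dots,p_n$, which we may take with $p_1 = q$. By Lemma~\ref{duprop}(3), each $T_{p_i}$ has principal symbol $p_i$ in the differential filtration on $\D W$, and this symbol is preserved by symmetrisation and by $\Res$. Hence the commuting family $\{\Res(\e T_{p_i}\e)\}_{i=1}^n \subset \D(\vreg)^W$ has algebraically independent principal symbols, and is therefore itself algebraically independent, delivering the quantum completely integrable system containing $L_W$. The main (though routine) obstacle is the first step: tracking the four types of terms in $\sum_i T_{\xi_i}^2$ carefully enough to confirm that the coefficient of $(\alpha,\alpha)/(\alpha,x)^2$ in the symmetrised expression is precisely $k_\alpha(k_\alpha+1)$ rather than $k_\alpha^2$ or $2k_\alpha$; the remaining two steps are formal consequences of commutativity and homogeneity of Dunkl operators, together with Chevalley--Shephard--Todd.
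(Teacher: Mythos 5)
The paper states this result with a citation to Heckman \cite{He} and does not give a proof of its own, so there is no argument in the text to compare against; your reconstruction is the standard one and is essentially correct. Two places where the argument is only sketched deserve more care. First, the mixed ``$\partial_{\xi_i}\,s_\alpha$'' terms are listed among the four types of contributions but never dispatched: after symmetrisation they produce first-order terms proportional to $\partial_\alpha/(\alpha,x)$, and these must (and do) cancel so that the output is of Schr\"odinger form matching \eqref{cmo}; that cancellation is part of what makes the answer come out as $L_W$ and should be recorded explicitly. Second, the claim that ``the off-diagonal pairs $s_\alpha s_\beta$ cancel under symmetrisation'' is imprecise: since $\e\, w\, \e = \e$ for every $w \in W$, all group elements collapse under $\e\cdot\e$, so what must vanish is the sum of the accompanying rational-function coefficients over off-diagonal $(\alpha,\beta)$; this is a genuine root-system identity (using $W$-invariance of $k_\alpha$ and the geometry of $R$), not a formal consequence of averaging. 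Finally, a small mislabelling: for the algebraic-independence step the relevant tool is the differential filtration on $\D W$ and the PBW isomorphism \eqref{pbw}, which is what guarantees that $T_{p_i}$ has constant-coefficient principal symbol $p_i$; Lemma~\ref{duprop}(3) concerns the homogeneous grading on $\D W$, not the differential filtration. The commutativity argument via $\e T_p = T_p \e$ and the use of Chevalley's theorem are correct as written.
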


Theorem \ref{tci} stated in the Introduction implies that if $k_\alpha\notin \Z$ for all $\alpha$, then the commutative algebra constructed 
in Theorem \ref{ci} is maximal (i.e. coincides with its centralizer) in $\D(\vreg)^W$. On the other hand, when $k_\alpha$'s are integers, this algebra can be extended to a larger commutative algebra. This stronger property is known as {\it algebraic integrability} \cite{CV, VSC}. 
To state the result, let us make the following definition, cf. \cite{CV, VSC, FV}.

\begin{defi} Let $\{\A, k\}$ be a configuration with $k_\alpha\in\Z_+$ for all $\alpha\in\A$. A polynomial $q\in\c[V]$ is called \emph{quasi-invariant} if
\begin{equation}\label{qi}
q(x)-q(s_\alpha x)\ \text{is divisible by}\ (\alpha,x)^{2k_\alpha}\quad \forall\alpha\in\A_+\,.
\end{equation}
The set of all quasi-invariant polynomials in $\c[V]$ is denoted by $Q_{\A}$. It is easy to check that $Q_{\A}$ is a subalgebra in $\c[V]$.
\end{defi}
In the case when $\A=R$ is a root system of a Coxeter group $W$, we have $\c[V]^W\subset Q_{\A} \subset \c[V]$, so the algebra of quasi-invariants $Q_k(W):=Q_{R}$ interpolates between the invariants and $\c[V]$.

\begin{remark} 
In the definition of $Q_\A$ one can replace $2k_\alpha$ by $2k_\alpha+1$ in \eqref{qi}, because $q(x)-q(s_\alpha x)$ is skew-symmetric under $s_\alpha$.   
\end{remark}

Consider the Calogero--Moser operator \eqref{cmo} with $W$-invariant multiplicities $k_\alpha\in\Z_+$ and write $L=L_{W}$, $L_0=\Delta$.

\begin{theorem}\label{aicc} 
$(1)$ There exists a nonzero linear differential operator $S\in \D(\vreg)$ such that \[L\, S \,=\, S\, L_0\,.\] 

$(2)$  
There exists pairwise commuting operators $L_q\in \D(\vreg)$, $q\in Q_{k}(W)$, such that the map $ q\mapsto L_q$ defines an algebra embedding $\theta\,:\,Q_k(W)\into \D(\vreg)$.


\end{theorem}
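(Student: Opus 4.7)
My plan is to reduce both parts to the construction of a Darboux-like intertwiner $S$ with $L_W \cdot S = S \cdot \Delta$, built from the Cherednik algebra $H_k$, and then to obtain the commuting family $\{L_q\}_{q \in Q_k(W)}$ by conjugating constant-coefficient operators through $S$.

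For part~(1), I would build $S$ as a composition of Heckman--Opdam ``one-step'' shift operators, one for each $W$-orbit in $R$. For an orbit $R_0 \subset R$, let $\delta_0 := \prod_{\alpha \in R_{0,+}} (\alpha,x) \in \c[V]$ and let $\delta_0^\vee \in \c[V^*]$ be the $W$-antiinvariant dual to $\delta_0$ under the identification $V \cong V^*$ induced by $(\cdot,\cdot)$. By Lemma~\ref{duprop}(2), the Dunkl operator $T_{\delta_0^\vee} \in H_k$ is $W$-antiinvariant. Using the antispherical idempotent $\e_- := |W|^{-1} \sum_w \det(w)\, w$, one produces a well-defined element $\sigma_{R_0,k} \in \D(\vreg)$ essentially of the form $\delta_0^{-1}$ times the restriction of $\e_-\, T_{\delta_0^\vee}\, \e$ to $W$-invariants, and a direct computation with the Cherednik commutation relations yields the intertwining $L_{W,\,k + \mathbf{1}_{R_0}}\,\sigma_{R_0,k} = \sigma_{R_0,k}\, L_{W,\,k}$. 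Composing these one-step operators $k_{R_0}$ times for each orbit produces the desired $S$.

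For part~(2), given $S$ from part~(1), define $L_q := S \cdot \partial(q) \cdot S^{-1}$ for $q \in Q_k(W)$, where $\partial(q)$ is the constant-coefficient operator associated to $q \in \c[V]$ via $V \cong V^*$ (so that $\partial(|x|^2) = \Delta$ and hence $L = L_{|x|^2}$). The $\partial(q)$'s mutually commute and generate a copy of $\c[V]$, so the $L_q$'s automatically commute and $\theta: q \mapsto L_q$ is an algebra homomorphism; injectivity follows by a principal symbol comparison, since the symbol of $L_q$ is $q$ itself up to the identification. A priori, however, $L_q$ is only a pseudodifferential operator on $\vreg$.

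The main obstacle is the \emph{regularity} step, where the quasi-invariance condition \eqref{qi} enters essentially: one must show $L_q \in \D(\vreg)$, i.e.\ a genuine differential operator. The plan is to analyse $L_q$ hyperplane by hyperplane, expand $S^{-1}$ using the explicit structure of the Heckman--Opdam shift operators, and show that the potential poles of $L_q$ along $(\alpha,x) = 0$ have order at most $2k_\alpha$ with the congruence $q(x) \equiv q(s_\alpha x) \pmod{(\alpha,x)^{2k_\alpha}}$ being exactly the cancellation condition. A more conceptual reformulation, closer in spirit to the later sections of the paper, is that for integer $k$ a Berest--Etingof--Ginzburg isomorphism $\e H_k \e \cong \D(\Spec Q_k(W))$ realises the desired embedding by lifting the inclusion $Q_k(W) \hookrightarrow \D(\Spec Q_k(W))$ as multiplication operators through the covering $V \to \Spec Q_k(W)$; this absorbs the regularity check into the general structure theorem for the spherical Cherednik algebra at integer parameters.
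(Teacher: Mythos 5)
Your sketch for part~(1) matches the route the paper itself points to: $S$ is the composite of Heckman's one-step shift operators, constructed inside the Cherednik algebra \cite{He2}. The details you leave out (the precise form of the one-step operators and the Cherednik commutation computation) are standard, and nothing looks wrong.

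Part~(2) is where the real work is, and your proposal leaves the decisive step unproven. The conjugate $L_q := S\,\partial(q)\,S^{-1}$ is not obviously even a pseudodifferential operator: the principal symbol of $S$ is (up to lower order) $\prod_{\alpha\in R_+}(\alpha,\xi)^{k_\alpha}$, which vanishes on a union of hyperplanes in the dual space, so $S$ is non-elliptic and $S^{-1}$ does not exist in the ring of (micro)differential operators. At best $L_q$ is defined as a linear map on the span of the Baker--Akhiezer functions $\psi_\lambda := S[e^{(\lambda,x)}]$ by $L_q\psi_\lambda = q(\lambda)\psi_\lambda$, and the whole point of the cited reference \cite{VSC} is to prove that, for quasi-invariant $q$, this linear map is implemented by a genuine element of $\D(\vreg)$. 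Your plan to ``expand $S^{-1}$ and analyse poles hyperplane by hyperplane'' does not clearly get there, precisely because $S^{-1}$ has no sensible asymptotic expansion. The paper's own machinery for the generalisation (Theorem~\ref{gaic}, via Section~\ref{pro}) takes a fundamentally different route that avoids this entirely: it sets $L_q := \frac{1}{2^r r!}\,\ad_L^r(q)$ with $r=\deg q$, using local ad-nilpotency of $L$, so that $L_q$ is a differential operator by construction, with commutativity and the relation $L_qS=SL_{q,0}$ following from the formal identity of Lemma~\ref{Adlem} / Proposition~\ref{commprop} rather than from pole cancellation. I would also be wary of your BEG fallback: the inclusion of $Q_k(W)$ as zero-order (multiplication) operators in $\D(\Spec Q_k(W))$ does not correspond under the BEG isomorphism to the commuting family $\{L_q\}$ (which has orders $\deg q$), and in any case the BEG structure theory is built on the algebraic-integrability results you are trying to establish, so invoking it here risks circularity.
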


The first statement follows from the existence of the so-called shift operators, constructed explicitly (in terms of the Dunkl operators) in \cite{He2}. Part (2) is the result of \cite{VSC}. \qed

In the next section we generalise the above theorem for the case when root systems of Coxeter groups are replaced by more general systems of vectors. These configurations can be seen as a generalisation of the so-called {\it locus configurations} from \cite{CFV}.

\section{Generalised locus configurations}
\la{S3} 
Let $ (\A, k_{\alpha}) $ be a configuration of vectors with complex multiplicities in a (complex) Euclidean space $V$. We assume that the vectors of $ \A$ are non-isotropic, i.e. $\,(\alpha, \alpha)\ne 0$ for all $ \alpha \in \A\,$; the corresponding orthogonal reflections $\,s_\alpha\,$ can then be defined by the same formula as in the real case, see \eqref{refls}. We write $\, H_{\A} := \{H_{\alpha}\} \subset V\,$ for the collection of hyperplanes  $ H_{\alpha} := \Ker(1 - s_{\alpha}) $ with $ \alpha \in \A\,$. As in the Introduction, we associate to $ (\A, k_{\alpha}) $ the second order differential operator in $ \D(V\!\setminus \! H_{\A}) $:
\begin{equation}\label{gcmu}
L_{\A}=\Delta-u_{\A}\ ,\qquad u_{\A} := \sum_{\alpha\in \A_+} \frac{k_\alpha(k_\alpha+1)(\alpha,\alpha)}{(\alpha,x)^2}\,.
\end{equation}

\begin{defi}
\label{defloc}
Let $R\subset V$ be the root system of a finite Coxeter group $W$ whose action on $V$ is generated
by reflections. A configuration $\A$ is called a {\it locus configuration of type $W$} if

(1) $\A$ contains $R$, and both $\A$ and $k:\,\A\to\c$ are invariant under $W$;

(2) For any $\alpha\in\A\setminus R$ one has $k_\alpha\in\Z_+$ and the function $u_{\A}$ in \eqref{gcmu} satisfies the condition
\begin{equation}\label{loc}
 u_{\A}(x)-u_{\A}(s_\alpha x) \quad\text{is divisible by}\quad (\alpha,x)^{2k_\alpha}\,.
\end{equation}
Here, we say that a rational function $f$ on $V$ is divisible by $(\alpha,x)^{2k}$ if $(\alpha,x)^{-2k}f$ is regular at a generic point of the hyperplane $ H_{\alpha} $.
\end{defi} 
In the trivial case $W=\{e\}, \,R =\varnothing $, the above definition reduces to the notion of a locus configuration introduced in \cite{CFV}. 
Explicitly,  \eqref{loc} can be described by the following set of equations
\begin{equation}
\label{loc1}
\sum_{\beta\in\A_+\setminus\{\alpha\}}\frac{k_\beta(k_\beta+1)(\beta, \beta)(\alpha,\beta)^{2j-1}}{(\beta,x)^{2j+1}}=0\quad\text{for $\ (\alpha,x)=0\,$} 
\end{equation}
which should hold for each $\alpha\in\A_+\setminus R$ and all $j=1,\dots, k_\alpha\,$.

Note that the root system of any Coxeter group $W$ with $W$-invariant integral $k_\alpha$ obviously satisfies the condition \eqref{loc}: these are basic examples of locus configurations (with $R=\varnothing $). There exist also many examples of locus configurations which do not arise from Coxeter groups (the so-called `deformed root systems'); a complete classification of such configurations is an open problem: all currently known examples will be described in Sections \ref{exloc} and \ref{twodim} below. They include the well-known deformations related to Lie superalgebras \cite{SV}, as well as some new examples.

Before proceeding further, let us compare our class of configurations with those introduced by Sergeev and Veselov in \cite{SV}. They consider trigonometric deformed Calogero--Moser operators, so one needs to specialise their setting to the rational case. In that case the configurations they consider are defined similarly to Definition \ref{gqi}, but with the additional requirement 
$k_\alpha=1$ for $\alpha\in\A\setminus R$, and with condition (2) replaced with the following identity: 
\begin{equation}\label{mi}
\sum_
{\alpha, \beta\in \A_+,\,\alpha\ne\beta} 
\frac{k_\alpha k_\beta (\alpha, \beta)} {(\alpha, x)(\beta ,x)}=0\,.
\end{equation} 
This is a rational version of ``the main identity" \cite[(12)]{SV} that may be equivalently stated as
\begin{equation}\label{dsf}
L_{\A}(\delta_\A^{-1})=0\,,\qquad \delta_\A=\prod_{\alpha\in\A_+}(\alpha, x)^{k_\alpha}\,.
\end{equation}
It is not obvious whether \eqref{mi} and \eqref{loc1} are related. Nevertheless, by going through the list of configurations in \cite[Section 2]{SV} one can check that they all satisfy our definition ({\it cf.} remark at the end of Section~2 of \cite{SV}). On the other hand, as will become clear from the examples in Sections \ref{exloc} and \ref{twodim}, there exist locus configurations that do \emph{not} fit in the axiomatics of \cite{SV}, either because $k_\alpha>1$ or because \eqref{mi} does not hold. Therefore, our class of configurations is {\it strictly larger} than in \cite{SV}, i.e. our approach is more general.

\medskip

Next, we introduce quasi-invariant polynomials. 

\begin{defi}\label{gqi} Let $\A$ be a locus configuration of type $W$. A polynomial $q\in\c[V]^W$ is called a {\it quasi-invariant} if 
\begin{equation}\label{qi1}
q(x)-q(s_\alpha x)\ \text{is divisible by}\ (\alpha,x)^{2k_\alpha}\quad \forall\alpha\in\A_+\setminus R\,.
\end{equation}
Write $\qaw$ for the space of quasi-invariants. It is easy to check that $\qaw$ is a graded subalgebra of $\c[V]^W$. 
\end{defi}
Again, in the case $W=\{e\}$, $R=\varnothing$ the above definition reduces to the quasi-invariants as defined in the previous Section. 
Below we will always identify $V$ and $V^*$ using the bilinear form $(\cdot, \cdot)$, thus making no distinction between $\c[V]$ and $\c[V^*]$ and regarding $\qaw$ interchangeably as a subalgebra in $\c[V]^W$ or $\c[V^*]^W$. 

\medskip

With a locus configuration $\A$ of type $W$ we associate two quantum Hamiltonians, $L_0=L_{W}\in\D(\vreg)^W$ and $L=L_{\A}\in\D(V\!\setminus\! H_\A)^W$. 
By Theorem \ref{ci}, $L_0$ is a member of a commutative family of higher-order Hamitonians $L_{q,0}:=\Res(\e T_q\e)$, $q\in\c[V^*]^W$. Our goal is to prove the following theorem that extends Theorem \ref{aicc}.

\begin{theorem}\label{gaic} Let $\A $ be a locus configuration  of type $W$.

\begin{enumerate}
\item[(1)] There exists a nonzero differential $($shift$)$ operator  
$ S\in \D(V\!\setminus\! H_\A)^W$ such that $L  S=S L_0$. 

\item[(2)] For any homogeneous quasi-invariant $q\in \qaw$ there exists a differential operator $L_q$ such that $L_q S=S L_{q,0}$ where $L_{q,0}=\Res(\e T_q\e)$. The operators $L_q$ pairwise commute and the map $q\mapsto L_q$ defines an algebra embedding 
$\theta\,:\ \qaw \hookrightarrow \D(V\!\setminus\! H_{\A})^W$. 

\item[(3)] The algebra $\qaw$ has Krull dimension $n=\dim V$, i.e. it has $n$ algebraically independent elements; thus,  $L = L_\A $ is completely integrable. 

\item[(4)] $\,\theta(\qaw)$ is a maximal commutative subalgebra in $\D(V\setminus H_\A)^W$. 
\end{enumerate}
\end{theorem}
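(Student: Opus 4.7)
The strategy is to reduce Theorem~\ref{gaic} to the general algebraic Theorem~\ref{IT} via the rational Cherednik algebra framework outlined in the introduction. The central object is the right ideal $\M_{\A,W}=\D(\vreg/\!/W,X_{\rm reg})\cap B_k$ (or its twisted $\U_\A$-variant) inside the spherical subalgebra $B_k\subset\D(\vreg)^W$. For part~(1), the first task is to exhibit a nonzero element of $\M_{\A,W}$: one builds a suitable element of $\e H_k\e$ from the Dunkl operators $T_\xi$ together with multiplication by the polynomial $\delta_k=\prod_{\alpha\in\A_+\setminus R}(\alpha,x)^{k_\alpha}$, and the locus relations~\eqref{loc1}---which encode the vanishing of $u_\A(x)-u_\A(s_\alpha x)$ to order $2k_\alpha$ on $H_\alpha$ for each $\alpha\in\A_+\setminus R$---are exactly what guarantees that the image of this element under $\Res$ maps $\c[\vreg]^W$ into $Q_{\rm reg}$. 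Feeding this element into Theorem~\ref{IT} then produces the shift operator $S\in\D(V\!\setminus\! H_\A)^W$ satisfying $L_\A\,S=S\,L_W$.

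For parts~(2) and~(4), I would define $L_q:=S\,L_{q,0}\,S^{-1}$ for $q\in\qaw$, initially interpreted in the Ore localization of $\D(V\!\setminus\! H_\A)^W$ at $S$. Pairwise commutativity of the $L_q$ is inherited from Theorem~\ref{ci}, and injectivity of the map $\theta:q\mapsto L_q$ is clear from principal symbols, which recover $q$ itself. The substantive claim is that $L_q$ is actually regular on $V\!\setminus\! H_\A$, i.e.\ lies in $\D(V\!\setminus\! H_\A)^W$ rather than only in its Ore localization at $S$; this is equivalent to the quasi-invariance of $q$ and constitutes the heart of Theorem~\ref{IT}, whose proof rests on the ad-nilpotent filtration of Lemma~\ref{filtl} together with the reflexivity property of Proposition~\ref{fatrefl}. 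The maximality assertion~(4) then follows directly from Proposition~\ref{commprop}.

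For part~(3), since $W$ preserves $\A\setminus R$ together with its multiplicities, $\delta_k$ is $W$-semi-invariant and hence $\delta_k^2\in\c[V]^W$. For every $p\in\c[V]^W$ and every $\alpha\in\A_+\setminus R$, both $\delta_k^2 p$ and its $s_\alpha$-transform are divisible by $(\alpha,x)^{2k_\alpha}$ (using $(\alpha,s_\alpha x)=-(\alpha,x)$), which verifies~\eqref{qi1}. Hence $\delta_k^2\cdot\c[V]^W\subseteq\qaw$, which forces $\mathrm{Frac}(\qaw)=\mathrm{Frac}(\c[V]^W)$ and therefore $\GK\qaw=n$; combined with~(2) this proves complete integrability of $L_\A$.

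The principal obstacle is the bridge between the geometry of the configuration and the representation theory of $H_k$: first, one must show that the analytic locus conditions~\eqref{loc} force the Cherednik-algebra ideal $\M_{\A,W}$ to be large enough for the hypotheses of Theorem~\ref{IT} to hold; and second, one must control all poles introduced by conjugation by $S$, which requires the fine analysis of ad-nilpotent filtrations and reflexivity developed in Section~\ref{S5}. The first difficulty is the main novelty and is where the deformation from $\c[V]^W$ to $\qaw$ actually enters the Cherednik-algebraic machinery.
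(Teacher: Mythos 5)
Your plan for parts~(1)--(3) is broadly aligned with the paper's strategy (reduce to Theorem~\ref{IT} via the ideal $\M_\A$, then use Proposition~\ref{commprop}), although some details differ: in part~(2) the paper does not define $L_q$ by conjugation $S\,L_{q,0}\,S^{-1}$ but instead constructs it intrinsically as $L_q=\tfrac{1}{2^r r!}\,\ad_L^r(q)$, which lands directly in $\D_\A\subset B_k[\delta_k^{-1}]\subset\D(V\setminus H_\A)^W$ and thereby avoids the regularity issue your route has to confront separately. (Also, $\delta_k$ is in fact $W$-\emph{invariant}, not merely semi-invariant, but since you only use $\delta_k^2\in\c[V]^W$ the conclusion in part~(3) stands.) Your part~(3) argument---that $\delta_k^2\,\c[V]^W\subseteq\qaw$ forces equal fraction fields and hence Krull dimension $n$---is correct and is what the paper implicitly relies on. In part~(1) you somewhat conflate two distinct roles: the element exhibiting $\M_\A\neq 0$ is just $\delta_k$ itself (via $\delta_k\,\c[\vreg]^W\subseteq\U_\A$), while the locus relations enter separately through Lemma~\ref{inu} to guarantee $L_\A[\U_\A]\subseteq\U_\A$; Theorem~\ref{IT} then requires \emph{both} ingredients, not just the nonemptiness.

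The genuine gap is part~(4). You assert that maximality of $\theta(\qaw)$ "follows directly from Proposition~\ref{commprop}," but that proposition only establishes the commutativity relations $[L_q,L_{q'}]=0$ and the intertwining $L_q\,S=S\,L_{q,0}$; it says nothing about the centralizer of the family $\{L_q\}$ in $\D(V\setminus H_\A)^W$ (or $\D(\mathbb K)$). Maximality requires a separate argument---the paper's Proposition~\ref{mad}---whose key input is Lemma~\ref{invu}: any differential operator commuting with $L_\A$ must preserve the (non-$W$-invariant) space $U$ of rational functions satisfying the local conditions~\eqref{loc11}. Combining this with the algebraic independence of the symbols of $L_{q_1},\ldots,L_{q_n}$ (so that any centralizing $a$ has constant principal symbol $q(\xi)$) and the identity $\ad_{x^2}^r(a)=2^r r!\,q(x)$, one deduces $q(x)\,U\subseteq U$, hence $q\in\qaw$, and then subtracts $L_q$ to reduce the order of $a$. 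None of this is contained in, or deducible from, Proposition~\ref{commprop}; it is a distinct Darboux-style local-monodromy argument that your proposal omits.
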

In the case $W=\{e\}$, these results are known and their proofs can be found in \cite{C98, CFV, C08}.  
The proof of Theorem~\ref{gaic} will be given in Section~\ref{S5} as a consequence of general results 
proved in the next section.


\section{Shift Operators}
\label{pro}
In this section, we develop an abstract algebraic approach to the problem of constructing  
differential `shift' operators. Our main result --- Theorem~\ref{IT} --- provides necessary and sufficient conditions for the existence of such operators under very general assumptions. In the next section, we will verify these conditions for Calogero-Moser operators associated with  locus configurations, 
and thus deduce our main Theorem~\ref{gaic} from Theorem~\ref{IT}. 
\subsection{Existence of shift operators}
\la{S4.0}
Throughout this section,  $k$ will denote a fixed field of characteristic zero, and all rings will be 
$k$-algebras with $1$. If $A$ and $B$ are two rings and $M$ is an $A$-$B$-bimodule, we consider $M$ as a left ($ A\otimes B^{\circ}$)-module, with an element $ a \otimes b \in A \otimes B^{\circ} $ acting on  $ m \in M $ by $ (a\otimes b)\cdot m := amb $. Then, for $ a \in A $ and $ b \in B $, we say that the pair $ (a,b) $ acts on $ M $ {\it locally ad-nilpotently} if $\, a \otimes 1 - 1 \otimes b \in A \otimes B^{\circ} $  acts locally nilpotently: i.e., for every $m\in M$, there is $ n> 0 $ such that $ (a \otimes 1 - 1 \otimes b)^n \cdot m = 0 $. We will use the following notation for this action:
$$ 
\ad_{a,b}(m) :=  (a \otimes 1 - 1 \otimes b)\cdot m = am - mb
$$
Note that, using the binomial formula, we can write the elements $\ad_{a,b}^n(m) := (a \otimes 1 - 1 \otimes b)^n \cdot m $ explicitly for all $ n \ge 0 \,$:
\begin{equation}
    \label{binom}
\ad_{a,b}^n(m) = \sum_{k=0}^n (-1)^k {n \choose k}\, a^k m \,b^{n-k}    
\end{equation}
When $A=B$ and $a=b$, this becomes the adjoint action, in which case we use the standard notation $ \ad_a $ instead of $ \ad_{a,a} \,$; we say that $a$ acts locally ad-nilpotently on $M$ if so does $(a,a)$. We call an element of a ring {\it locally ad-nilpotent} if it acts locally ad-nilpotently on the ring viewed as a bimodule.

We begin with a general lemma from noncommutative algebra.

\begin{lemma}
\label{shiftl}
Let $B$ be a noncommutative integral domain, $ S \subset B $ a two-sided Ore subset in $B$, and $ A := B[S^{-1}]$ 
the corresponding ring of fractions. Let $ L_0 \in B $ be a locally ad-nilpotent element in $B$. 
Then, for $ L \in A $, the following  conditions are equivalent:
\begin{enumerate}

\item[$(a)$] there exists a nonzero $ D \in A $ such that $\, L \,D = D\, L_0  \,$ in $A\,$,

\item[$(b)$] there exists a nonzero $ D^* \in A $ such that $\, D^* L = L_0\,D^* $ in $A\,$,

\item[$(c)$] there exists $ \delta \in S $ such that $\, \ad_{L, L_0}^{N+1}(\delta) = 0 \,$ in $A$ for some  $ N \ge 0 $.
\end{enumerate}
\end{lemma}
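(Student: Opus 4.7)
The proof is built around the twisted Leibniz identities for $T := \ad_{L, L_0}$ acting on $A$:
\[
T(xy) \;=\; T(x)\,y + x\,\ad_{L_0}(y) \;=\; \ad_L(x)\,y + x\,T(y),
\]
together with their $n$-fold binomial iterates for $T^n(xy)$. These identities, combined with the hypothesis that $L_0$ is locally ad-nilpotent on $B$, provide all the control we need.

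For $(c)\Rightarrow(a)$, the argument is immediate. Given $\delta \in S$ with $T^{N+1}(\delta) = 0$ and $\delta \neq 0$ (since $S$ sits inside the domain $B$), I would pick the largest integer $M \leq N$ with $T^M(\delta) \neq 0$. Then $D := T^M(\delta) \in A$ is nonzero and $T(D) = T^{M+1}(\delta) = 0$, i.e.\ $LD = DL_0$. For $(c) \Rightarrow (b)$, I would apply the same ``last nonzero term'' idea in the opposite ring $A^{\mathrm{op}} = B^{\mathrm{op}}[S^{-1}]$, in which $L_0$ is still locally ad-nilpotent (because $\ad^{\mathrm{op}}_{L_0} = -\ad_{L_0}$) and in which the shift-operator condition for the pair $(L, L_0)$ reads as $(b)$ back in $A$. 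To make the reduction rigorous, I must also check that $(c)$ in $A$ forces the analogous condition in $A^{\mathrm{op}}$ (formulated with $T^* := \ad_{L_0, L}$), which I plan to do by direct manipulation of the defining relation $T^{N+1}(\delta) = 0$.

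The nontrivial direction is $(a) \Rightarrow (c)$. Given a nonzero $D \in A$ with $LD = DL_0$, I use the right Ore property to write $D = a s^{-1}$ with $a \in B$ and $s \in S$, so that $a = Ds$. The first Leibniz rule applied to this product, combined with the fact that $T^j(D) = 0$ for every $j \geq 1$, then yields
\[
T^n(a) \;=\; \sum_{k=0}^n \binom{n}{k}\, T^k(D)\, \ad_{L_0}^{n-k}(s) \;=\; D \cdot \ad_{L_0}^n(s).
\]
Because $L_0$ is locally ad-nilpotent on $B$ and $s \in B$, we have $\ad_{L_0}^{N+1}(s) = 0$ for some $N$, whence $T^{N+1}(a) = 0$. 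This produces an element $a \in B$ (not yet in $S$) annihilated by a power of $T$. To finish, I would observe that the set $K := \bigcup_n \ker T^n \subseteq A$ is closed under right multiplication by $B$ (another Leibniz computation, again using only the ad-nilpotency of $L_0$ on $B$), so that $aB \subseteq K$; the Ore property of $S$ then lets me select $\delta \in aB \cap S \subseteq K$, which will satisfy $T^{N'+1}(\delta) = 0$ for a suitable $N'$. The implication $(b) \Rightarrow (c)$ proceeds in the same way, using the left Ore decomposition $D^* = s^{-1}a$ and the second form of the Leibniz rule.

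The \textbf{main obstacle} I anticipate is precisely this last step of $(a) \Rightarrow (c)$: upgrading an element of $B$ annihilated by a power of $T$ to an element of the Ore set $S$ with the same property. In the concrete Cherednik-algebra settings of this paper --- where $S$ consists of powers of an explicit element built from linear forms on $V$ --- this upgrade can be handled by direct calculation, but in the abstract generality of the lemma it requires a careful exploitation of the Ore structure of $S$ inside $B$ (for instance, showing that the nonzero right ideal $aB$ meets $S$).
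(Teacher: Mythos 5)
Your $(c)\Rightarrow(a)$ is exactly the paper's argument (take the last nonvanishing iterate $\ad_{L,L_0}^M(\delta)$), and your Leibniz computation $T^n(Ds)=D\,\ad_{L_0}^n(s)$ is correct. The problem is the final step of $(a)\Rightarrow(c)$ — and, since you say $(b)\Rightarrow(c)$ ``proceeds in the same way,'' that step too.

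The gap: after writing $D=as^{-1}$ and deducing $T^{N+1}(a)=0$ with $a\in B$, you want to replace $a$ by an element of $S$. Your proposed fix --- ``the Ore property lets me select $\delta\in aB\cap S$'' --- does not follow. The (right) Ore condition says that for $a\in B$, $s\in S$ there exist $a'\in B$, $s'\in S$ with $as'=sa'$; it does \emph{not} say that the right ideal $aB$ meets $S$. In general $aB\cap S=\varnothing$: take $B=\c[x]$, $S=\{x^n\}$, $a=x+1$. So there is no way to ``upgrade'' $a\in B$ to $\delta\in S$ by this route.

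The paper avoids this entirely by running the cycle $(c)\Rightarrow(a)\Rightarrow(b)\Rightarrow(c)$ and never trying to promote a $B$-element to an $S$-element. The key move is to multiply by the intertwiner \emph{before} invoking ad-nilpotency, and then to cancel it using that $A$ is a domain. Concretely, for $(b)\Rightarrow(c)$: choose $\delta\in S$ (right Ore) so that $D^*\delta\in B$; then, using $D^*L^k=L_0^kD^*$ inside the binomial expansion \eqref{binom}, one gets the identity
\[
D^*\,\ad^{n}_{L,L_0}(\delta)\;=\;\ad^{n}_{L_0}(D^*\delta)\qquad(\forall\,n\ge 0),
\]
whose right side vanishes for $n\gg0$ because $D^*\delta\in B$ and $L_0$ is ad-nilpotent on $B$; cancelling the nonzero $D^*$ in the domain $A$ gives $\ad^{N+1}_{L,L_0}(\delta)=0$ for the original $\delta\in S$. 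Similarly $(a)\Rightarrow(b)$ uses the left Ore set-up $\delta^*D\in B$ and the identity $\ad^{n}_{L_0,L}(\delta^*)\,D=\ad^{n}_{L_0}(\delta^*D)$, then cancels $D$. Your ``opposite ring'' route to $(c)\Rightarrow(b)$ is also a detour that requires knowing $(c)$ and its mirror $\ad_{L_0,L}^{N+1}(\delta)=0$ are equivalent --- a fact most cleanly proved precisely via the cycle through $(a)$ and $(b)$, not by direct manipulation of the nilpotency relation. To repair your proof you should replace the direct $(a)\Rightarrow(c)$ and $(b)\Rightarrow(c)$ attempts by this cancellation argument.
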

\begin{proof}
The implication $\,(c) \Rightarrow (a) \,$ is immediate: if $(c)$ holds, choose the smallest $ N \ge 0 $ such that
$ \ad_{L, L_0}^{N+1}(\delta) = 0 $, then $ D:= \ad_{L, L_0}^{N}(\delta) \not= 0 $ satisfies $(a)$. 

Now, assume that $(b)$ holds. Since $S$ is a right Ore subset in $B$, for $ D^* \in B[S^{-1}] $, there is $ \delta \in S $ such that $ D^*\delta \in B $. Then, by \eqref{binom}, we have 
\begin{eqnarray*}
D^* \, \ad^n_{L, L_0}(\delta) &=&  \sum_{k=0}^n (-1)^k {n \choose k}\, D^* L^k \delta \,L_0^{n-k} \\
&=&   \sum_{k=0}^n (-1)^k {n \choose k}\, L_0^k D^* \delta \,L_0^{n-k} = \ad^n_{L_0}(D^* \delta)\ ,\quad \forall\,n\ge 0\,.
\end{eqnarray*}
Since $ L_0 $ acts locally ad-nilpotently on $B$, there is $ N \ge 0 $ such that $ \ad^{N+1}_{L_0}(D^* \delta) = 0 $.
Since  $ A = B[S^{-1}] $ is a domain and $ D^* \not= 0 $, the above formula implies 
$ \ad^{N+1}_{L, L_0}(\delta) = 0 $. This proves $(b) \Rightarrow (c) $. 

Finally, assume that $(a)$ holds. Since $S$ is a left Ore subset in $B$, for $ D \in B[S^{-1}] $, there is 
$ \delta^* \in S $ such that $ \delta^* D \in B $. 
Then, by \eqref{binom}, we have  $\,\ad_{L_0, L}^n(\delta^*)\, D = \ad_{L_0}^n(\delta^* D)  \,$ for all $ n \ge 0 $, 
which implies that $ \ad_{L_0, L}^n(\delta^*) = 0 $ for $ n \gg 0 $. Taking the smallest $ N \ge 0 $ such that
$ \ad_{L_0, L}^{N+1}(\delta^*) = 0 $, we put $ D^* := \ad_{L_0, L}^{N}(\delta^*)  \not= 0 $. This satisfies
$ L_0\, D^* = D^* L$, proving the  last implication $(a) \Rightarrow (b) $.
\end{proof}

In this paper, we will be concerned with differential operators. To proceed further
we therefore make the following general assumption on our noncommutative algebra $B$.
\vspace*{1ex}
\begin{enumerate}
\item[(A)] {\it The algebra $B$ contains a commutative subalgebra $R$, which is a Noetherian domain with 
quotient field $\, {\mathbb K} $, such that $\,
R \subset B \subset \D({\mathbb K})\,$,
where $ \D({\mathbb K}) $ is the ring of $k$-linear algebraic differential operators on $ {\mathbb K} $, with
$ R \subset \D({\mathbb K}) $ being the natural inclusion.}
\end{enumerate}
\vspace*{1ex}

\noindent
Under assumption (A), we may think of the elements of $B$ as usual `partial differential operators with rational coefficients'. To be precise,
since $R$ is Noetherian domain, by Noether's Normalization Lemma, we can choose finitely many algebraically independent elements in $R$, say $\, x_1, \ldots, x_n \,$, so that the quotient field  $ {\mathbb K} $ of $R$ is a 
finite extension of $ k(x_1, \ldots, x_n) $. The module $ \Der_k({\mathbb K}) $  of $k$-linear derivations of
$ {\mathbb K} $ is then freely generated (as a ${\mathbb K}$-module) by the `partial derivatives'
$\, \partial/\partial x_i: {\mathbb K} \to {\mathbb K} \,$, and the 
ring $ \D({\mathbb K}) $ can be identified as
$$ 
\D({\mathbb K}) \cong {\mathbb K}\left[\partial/\partial x_1, \ldots, \partial/\partial x_n\right]\ .
$$
Next, if $ S $ is a multiplicatively closed subset in $R$, then the assumptions of Lemma~\ref{shiftl} hold automatically for $B$ and $ S $. Indeed, since $ \D({\mathbb K}) $ is a noncommutative domain (see, e.g., \cite[Theorem~15.5.5]{MR}) and $B$ is a subalgebra of $ \D({\mathbb K}) $, $B$ is a domain as well. Furthermore, since $ S \subset {\mathbb K} $, the elements of $S$ are represented by zero order differential operators on ${\mathbb K}$ which act, by definition, locally ad-nilpotently on $ \D({\mathbb K}) $, and hence {\it a fortiori} on $B$. It follows that  $S$ is a two-sided Ore subset. Note that the elements of $S$ are actually units in $ \D({\mathbb K}) $, hence, by the universal property of Ore localisation,  the inclusion $ B \into \D({\mathbb K}) $ extends to $ A := B[S^{-1}] \,$: thus, if (A) holds, we have
$$
S \subset R \subset B \subset A \subset \D({\mathbb K})
$$
for any multiplicative closed subset $S$.

Now, fix $\, S \subset R \subset B \,$ as above, and let $ L_0 $ be a locally ad-nilpotent element in $B$. Following \cite{BW}, we associate to $L_0$ a (positive increasing) filtration on $B$:
\begin{equation*}
F_0 B \subseteq F_1 B \subseteq \ldots  \subseteq F_n B \subseteq F_{n+1} B \subseteq \ldots \subseteq B 
\end{equation*}
which is defined by induction:
\begin{equation}
\la{filt}
F_{-1}B := \{0\}\ ,\quad F_{n+1} B := \{b \in B\ :\ \ad_{L_0}(b) \in F_n B\} \ ,
\end{equation}
or equivalently, 
$$ 
F_n B := \{b \in B \ : \ \ad_{L_0}^{n+1}(b) = 0\} \quad \mbox{for all}\ n \ .
$$
Since $ \ad_{L_0} $ is a locally nilpotent derivation, $\{F_\ast B\}$ is an exhaustive 
filtration on $B$ satisfying $ (F_n B) \cdot (F_m B) \subseteq F_{n+m} B $
for all $ n,m \ge 0 $. Note that $ F_0 B = C_B(L_0) $ is the centralizer of $L_0$, which is a (not necessarily commutative) subalgebra of $B$. 

Associated to  \eqref{filt} is the degree (valuation) function $\, \deg_{L_0}:\, B\!\setminus\!\{0\} \to \N  \,$ defined by
\begin{equation} 
\la{degr}
\deg_{L_0}(b) \, :=\, n \quad \mbox{iff}\quad b \in F_n B \setminus F_{n-1} B\ ,\ n \ge 0 \,.
\end{equation}
Note that $\, \deg_{L_0}(b) = n \,$ whenever $\, \ad_{L_0}^{n+1}(b) = 0 \,$ while $\, \ad_{L_0}^{n}(b) \not= 0 \,$ in $B$. It is convenient to extend $\deg_{L_0} $ to the whole $B$ by setting $ \deg_{L_0}(0) := - \infty $, so that %
$\, F_n B = \{b \in B \, : \, \deg_{L_0}(b) \leq n \}\,$ for all $n$.

The next lemma shows that, under  assumption (A), the above filtration and associated
degree function on the algebra $B$ extend to its localisation.

\begin{lemma}
\label{filtl}
There is a unique function $\, \deg: A \to \Z \cup \{-\infty\} \,$ on the localised algebra $ A = B[S^{-1}]\,$ with the following 
properties\footnote{Just as the function $ \deg_{L_0} $ on $B$, its extension to $A$ depends
on the ad-nilpotent element $L_0$. To distinguish between these two degree functions we suppress the dependence  
of `$ \deg $' on $ L_0$ in our notation.}:
\begin{enumerate}
\item[(0)] $ \deg(b) = \deg_{L_0}(b) $,  \ $ \forall\,b \in B \,$,
\item[(1)] $ \deg\,(a_1 a_2) = \deg(a_1) +\deg(a_2) $, \ $ \forall\,a_1, a_2 \in A \,$,
\item[(2)]
$ \deg\,(a_1 + a_2) = \max\{\deg(a_1),\,\deg(a_2)\} $, \ $ \forall\,a_1, a_2 \in A \,$,
\item[(3)]
$\deg\,[\ad_{L_0}(a)] \leq \deg(a) - 1$, \ $ \forall\,a \in A \,$.
\end{enumerate}
\end{lemma}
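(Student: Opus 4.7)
The plan is to define
\[
\deg(a) := \deg_{L_0}(b) - \deg_{L_0}(s) \quad \text{whenever } a = b s^{-1},\ b \in B,\ s \in S,
\]
and then read off (0)--(3); uniqueness will follow from (0) and (1) alone. The crucial preliminary --- and what I expect to be the only real obstacle --- is a \emph{strict} multiplicativity statement for $\deg_{L_0}$ on $B$ itself: namely, $\deg_{L_0}(b_1 b_2) = \deg_{L_0}(b_1) + \deg_{L_0}(b_2)$ for nonzero $b_1, b_2 \in B$. Sub-multiplicativity is immediate from \eqref{binom} applied to the derivation $\ad_{L_0}$. For the reverse inequality, writing $n_i := \deg_{L_0}(b_i)$, the Leibniz expansion of $\ad_{L_0}^{n_1 + n_2}(b_1 b_2)$ collapses to the single middle term $\binom{n_1+n_2}{n_1}\,\ad_{L_0}^{n_1}(b_1)\,\ad_{L_0}^{n_2}(b_2)$; each factor is nonzero by the definition of $\deg_{L_0}$, and the product is nonzero because $B \subset \D(\K)$ is a domain by hypothesis (A).

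Well-definedness on $A$ then follows from the standard Ore equivalence criterion: if $b_1 s_1^{-1} = b_2 s_2^{-1}$ in $A$, one obtains $u_1, u_2 \in B$ with $u_1 s_1 = u_2 s_2 \in S$ and $u_1 b_1 = u_2 b_2$ in $B$; applying strict multiplicativity to both identities and subtracting cancels $\deg_{L_0}(u_1) - \deg_{L_0}(u_2)$ and yields $\deg_{L_0}(b_1) - \deg_{L_0}(s_1) = \deg_{L_0}(b_2) - \deg_{L_0}(s_2)$. Property (0) is immediate. For (1), I would use the right Ore condition to produce $b'_2 \in B$, $s_3 \in S$ with $s_1^{-1} b_2 = b'_2 s_3^{-1}$; then $b_2 s_3 = s_1 b'_2$ in $B$ and $a_1 a_2 = b_1 b'_2\,(s_2 s_3)^{-1}$, so that strict multiplicativity applied to $b_2 s_3 = s_1 b'_2$ gives exactly $\deg(a_1 a_2) = \deg(a_1) + \deg(a_2)$. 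For (2), bringing $a_1, a_2$ to a common denominator $s$ via Ore reduces the claim to the sub-additivity of $\deg_{L_0}$ on $B$ (which is elementary from \eqref{binom}).

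Finally, for (3), applying $\ad_{L_0}$ to $s \cdot s^{-1} = 1 \in F_0 B$ yields
\[
\ad_{L_0}(s^{-1}) = -\, s^{-1}\,\ad_{L_0}(s)\, s^{-1},
\]
which, by (1), has degree $-\deg_{L_0}(s) + (\deg_{L_0}(s) - 1) - \deg_{L_0}(s) = \deg(s^{-1}) - 1$. Combining this with the derivation identity $\ad_{L_0}(b s^{-1}) = \ad_{L_0}(b)\, s^{-1} + b\,\ad_{L_0}(s^{-1})$ together with (1)--(2) delivers (3). Uniqueness is then automatic: any function satisfying (0)--(1) must send $s^{-1}$ to $-\deg_{L_0}(s)$ (since $\deg(1) = 0$ and $s \cdot s^{-1} = 1$), and (1) then fixes its value on every $b s^{-1} \in A$.
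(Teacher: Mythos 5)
Your proposal follows essentially the same route as the paper's proof: establish strict multiplicativity of $\deg_{L_0}$ on $B$ via the domain property and the collapse of the Leibniz expansion of $\ad_{L_0}^{n_1+n_2}(b_1 b_2)$ to its middle term, define $\deg$ on $A$ by the difference formula on fractions, and then derive (0)--(3) and uniqueness by the same computations. One technical slip worth noting: in the well-definedness step you invoke an Ore equivalence with \emph{left} multipliers ($u_1 s_1 = u_2 s_2$, $u_1 b_1 = u_2 b_2$) for what are \emph{right} fractions $b s^{-1}$, which is not the correct criterion; since $S$ is commutative and $B$ is a domain, one can instead multiply $b_1 s_1^{-1} = b_2 s_2^{-1}$ on the right by $s_1 s_2$ to get the single relation $b_1 s_2 = b_2 s_1$ (the paper does the mirror-image argument $s_2 b_1 = s_1 b_2$ for left fractions), and strict multiplicativity applied to that identity alone gives well-definedness, with no need for the auxiliary $u_1, u_2$.
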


\begin{proof}
First, observe that the properties $(1)$, $(2)$, $(3)$ hold for the function $ \deg_{L_0} $ on $B$.
Indeed, for $ \deg_{L_0} $, property $\,(2)$ is immediate from the definition \eqref{degr}, while $(3)$ follows 
from the inductive construction  of the filtration  \eqref{filt}. To verify $(1)$
take two elements $\, b_1, b_2\in B $ with $  \deg_{L_0}(b_1) = n_1 \ge 0 $ and
$  \deg_{L_0}(b_2) = n_2  \ge 0 $. Then, by \eqref{degr}, 
\begin{equation}
\la{n1n2}
\ad_{L_0}^{n_1 + 1}(b_1) = \ad_{L_0}^{n_2 + 1}(b_2) = 0\ ,
\end{equation}
while $ \ad_{L_0}^{n_1}(b_1) \not= 0 $ and $ \ad_{L_0}^{n_2}(b_2) \not= 0 $. Since
$ \ad_{L_0} $ is a derivation on $B$, by using \eqref{n1n2} and Leibniz rule, we have
%
%
$\, \ad_{L_0}^{n_1 + n_2 + 1}(b_1 b_2) = 0\,$, while 
$$
\ad_{L_0}^{n_1+n_2}(b_1 b_2) \,=\, \frac{(n_1 + n_2)!}{n_1 !\, n_2 !}\, \ad_{L_0}^{n_1}(b_1)\,
\ad_{L_0}^{n_2}(b_2)\ .
$$
Since $B$ is a domain, the last equation shows that $\,\ad_{L_0}^{n_1+n_2}(b_1 b_2) \not= 0\,$, which
means that $ \deg_{L_0}(b_1 b_2) = n_1 + n_2 \,$, or equivalently,
\begin{equation}
\la{bb}
  \deg_{L_0}\,(b_1 b_2) = \deg_{L_0}(b_1) +\deg_{L_0}(b_2)\ .  
\end{equation}

Now, we define the  function $\,\deg:\,A\! \setminus\! \{0\}\,\to\, \Z $ by 
\begin{equation}
\la{degA}
\deg(s^{-1} b) := \deg_{L_0}(b) - \deg_{L_0}(s)
\end{equation}
where $ s^{-1} b \in A $ with $s \in S $ and $ b \in B $. To see that this definition makes sense
consider two different presentations of an element in $A$ by (left) fractions: say 
$\, a = s_1^{-1} b_1 = s_2^{-1} b_2 $ with $ s_1, s_2 \in S $ and $ b_1, b_2 \in B $. 
Since $ S $ is commutative (by assumption (A)), we have $\, s_2 b_1 = s_1 b_2 \,$ in $B$, 
which, by \eqref{bb}, implies
$$
\deg_{L_0}(s_2) +\deg_{L_0}(b_1) = \deg_{L_0}(s_1) + \deg_{L_0}(b_2)\ .
$$
Whence
$$
\deg(s_1^{-1} b_1) := \deg_{L_0}(b_1) - \deg_{L_0}(s_1) = \deg_{L_0}(b_2) - \deg_{L_0}(s_2)  = \deg(s_2^{-1} b_2)\ ,
$$
as required. Note that the same argument shows that 
\begin{equation}
\la{lright}
 \deg(b s^{-1}) \,=\,  \deg_{L_0}(b) - \deg_{L_0}(s) \,=\, \deg(s^{-1} b)
\end{equation}
for all $b \in B $ and $s\in S$.

Now, with definition \eqref{degA}, the property $(0)$ of Lemma~\ref{filtl} is obvious. To prove $(1)$ write 
elements $ a_1 $ and $ a_2 $ in $A$ as left and right fractions: $ a_1 = s_1^{-1} b_1 $ and $ a_2 = b_2 s_2^{-1} $, and use \eqref{lright} to conclude:
\begin{eqnarray*}
\deg(a_1 a_2) &=& \deg(s_1^{-1} b_1 b_2 s_2^{-1})\\
&=& \deg_{L_0}(b_1 b_2) - \deg_{L_0}(s_1) - \deg_{L_0}(s_2)\\
&=& \deg_{L_0}(b_1) +  \deg_{L_0}(b_2) - \deg_{L_0}(s_1) - \deg_{L_0}(s_2) \\
&=& [\,\deg_{L_0}(b_1) - \deg_{L_0}(s_1)\,] + [\,\deg_{L_0}(b_2) - \deg_{L_0}(s_2)\,] \\
&=& \deg(a_1) + \deg(a_2)\ .
\end{eqnarray*}
Note that property $(1)$ implies formally that $\deg(s^{-1}) = - \deg(s) $ for all $s\in S$;  together with $(0)$, it entails \eqref{degA}, and hence the uniqueness of the function `$\deg $'.

To prove $(2)$ take $\, a_1 = s_1^{-1} b_1 $, $\, a_2 = s_2^{-1} b_2 $ in $A$ and assume (without loss of generality) that $\, \deg(a_1) \geq \deg(a_2) \,$. Note that, by \eqref{bb} and \eqref{degA}, this last condition is equivalent to
\begin{equation}
\la{ineq}
\deg_{L_0}(s_2 b_1) \geq \deg_{L_0}(s_1 b_2)    
\end{equation}
Now, using the fact that $(2)$ holds for the degree function $\deg_{L_0} $, we check
\begin{eqnarray*}
\deg(a_1 + a_2) &=& \deg[(s_1 s_2)^{-1}( s_2 b_1 + s_1 b_2)]\\
&=& \deg_{L_0}(s_2 b_1 + s_1 b_2) - \deg_{L_0}(s_1) - \deg_{L_0}(s_2)\\
& \leq & \max\{\deg_{L_0}(s_2 b_1), \deg_{L_0}(s_1 b_2)\}  - \deg_{L_0}(s_1) - \deg_{L_0}(s_2) \\
&=& \deg_{L_0}(s_2 b_1) - \deg_{L_0}(s_1) - \deg_{L_0}(s_2) \qquad [\,\mbox{by}\ \eqref{ineq}\,]\\
&=& \deg_{L_0}(b_1) - \deg_{L_0}(s_1) = \deg(a_1) \\
&=& \max\{\deg(a_1),\,\deg(a_2)\}\ .
\end{eqnarray*}

Finally, to prove $(3)$ we take $\, a = s^{-1} b \in A \,$ and write $ \ad_{L_0}(a) $ 
in the form
$$
\ad_{L_0}(s^{-1} b) = s^{-1} \ad_{L_0}(b) - s^{-1} \ad_{L_0}(s)\, s^{-1} b
$$
Since 
$$
\deg[s^{-1} \ad_{L_0}(b)] = \deg_{L_0}[\ad_{L_0}(b)] - \deg_{L_0}(s) \leq \deg_{L_0}(b) - \deg_{L_0}(s) - 1  
$$
and similarly
$$
\deg[s^{-1} \ad_{L_0}(s)\, s^{-1} b] \leq \deg_{L_0}(b) - \deg_{L_0}(s) - 1\ ,
$$
by property $(2)$ we conclude
$$
\deg[\ad_{L_0}(a)] \leq \deg_{L_0}(b) - \deg_{L_0}(s) - 1 = \deg(a) - 1\ .
$$
This completes the proof of the lemma.
\end{proof}
Using the degree function of Lemma~\ref{filtl}, we can extend the filtration \eqref{filt}
on the algebra $ B $ to a $\Z$-filtration on the algebra $A\,$:
$$ 
F_n A := \{a \in A\ :\ \deg(a) \leq n\}\ ,\quad \forall\,n \in \Z\,.
$$
We write $\,\mathtt{gr}(A) := \oplus_{n \in \Z}\, F_n A/F_{n-1} A\,$ for the associated graded ring, and
for each $ n \in \Z $, denote by $\,\sigma_n:\, F_n A \onto F_n A/F_{n-1}A  \into \mathtt{gr}(A) \,$ the {\it symbol map of degree} $n$. By definition, for $\, a \in F_n A \,$, the symbol 
$\, \sigma_n(a) = a + F_{n-1} A \,$ is nonzero if and only if $ \deg(a) = n $. For example, we have
$\sigma_0(L_0) = L_0  + F_{-1} A\,$, since $\deg(L_0) = 0\,$.

We can now state the main result of this section.
\begin{theorem}
\la{IT}
Assume that an algebra $B$ satisfy condition {\rm (A)}, let $S$ be a 
multiplicatively closed subset in $R \subset B$, and $ A := B[S^{-1}] \,$. Assume, in addition, that there is an $R$-submodule $\, U_0 \subseteq {\mathbb K} \,$ such that
$$
B = \{a \in A\,:\, a[U_0] \subseteq U_0 \} \,.
$$
Let $ L_0 $ be a locally ad-nilpotent operator in $ B $. Then, for
an operator $ L \in A $, there is a nonzero operator $ D \in A $ such that 
\begin{equation}
\la{sheq}
L \,D = D\, L_0  
\end{equation}
if and only if the following conditions hold:

\begin{enumerate}
\item[(1)] there is a $k$-linear subspace $\, U \subseteq {\mathbb K} \,$ such that 
\begin{enumerate}
    \item[a)] $\,U$ is stable under $L$, i.e. $\, L[U] \subset U \,$,
    \item[b)] $\, s\, U_0 \subseteq U \subseteq s^{-1} U_0 \,$ for some $ s \in S \,$,
\end{enumerate}
\vspace*{1ex}
\item[(2)] $\, \sigma_0(L) = \sigma_0(L_0) \,$ $($in particular, $ \deg(L) = \deg(L_0) = 0 $$)$.
\end{enumerate}

\vspace*{1ex}

\noindent
Given a subspace $U \subseteq {\mathbb K} $ satisfying condition {\rm (1b)}, there is at most one
operator $L \in A $ satisfying  {\rm (1a)} and {\rm (2)} $($and hence the identity \eqref{sheq}$)$.
\end{theorem}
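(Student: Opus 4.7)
Assume $LD = DL_0$ with $D \neq 0$, and set $U := D[U_0] \subseteq \mathbb{K}$. Since $L_0 \in B$ stabilises $U_0$, we have $L[U] = LD[U_0] = DL_0[U_0] \subseteq D[U_0] = U$, giving (1a). Writing $D = s_1^{-1} b_1$ as a left fraction ($s_1 \in S$, $b_1 \in B$) yields $s_1 U = b_1[U_0] \subseteq U_0$, so $U \subseteq s_1^{-1} U_0$. The reverse inclusion $s_2 U_0 \subseteq U$ is the main obstacle; I would prove it using the dual operator $D^*$ from Lemma~\ref{shiftl} and the compositions $D D^*, D^* D$ (which centralise $L$ and $L_0$, respectively) to extract a common denominator in $S$, where reflexivity-type properties of the associated differential ideals (cf.\ Proposition~\ref{fatrefl}) should come into play. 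For (2), apply the degree function of Lemma~\ref{filtl} to $LD = DL_0$: from $\deg(L_0) = 0$ and Lemma~\ref{filtl}(1), $\deg(L) = 0$. In the associated graded $\grd(A) = \bigoplus_n F_n A / F_{n-1} A$---which is a graded domain because $\deg$ is multiplicative---the equation becomes $\sigma_0(L)\, \sigma_d(D) = \sigma_d(D)\, \sigma_0(L_0)$ with $d = \deg(D)$. Since $\ad_{L_0}$ strictly lowers the filtration, $\sigma_0(L_0)$ is central in $\grd(A)$, and right-cancelling $\sigma_d(D) \neq 0$ yields $\sigma_0(L) = \sigma_0(L_0)$.

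\textbf{Sufficiency.} Assuming (1) and (2), introduce $M_U := \{a \in A : a[U_0] \subseteq U\}$. The sandwich~(1b) yields $s \in M_U$ and $M_U \subseteq s^{-1} B$, so every nonzero $a \in M_U$ satisfies $\deg(a) \geq -\deg(s)$. By (1a) and $L_0 \in B$, both $L \cdot M_U \subseteq M_U$ and $M_U \cdot L_0 \subseteq M_U$, so $\ad_{L, L_0}$ preserves $M_U$. Crucially, (2) implies $L - L_0 \in F_{-1} A$, whence
$$\ad_{L, L_0}(a) = \ad_{L_0}(a) + (L - L_0)\, a,$$
and both summands have degree $\leq \deg(a) - 1$ by parts~(1) and~(3) of Lemma~\ref{filtl}. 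Iterating on $s$, the degree of $\ad_{L, L_0}^k(s)$ drops below $-\deg(s)$ once $k > 2 \deg(s)$, forcing $\ad_{L, L_0}^{N+1}(s) = 0$ for some $N$. Lemma~\ref{shiftl} then produces the required nonzero $D$ satisfying $LD = DL_0$.

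\textbf{Uniqueness.} Suppose $L, L' \in A$ both satisfy (1a) and (2), set $M := L - L'$, and assume $M \neq 0$. Then $M \in F_0 A$ with $\sigma_0(M) = 0$, so $\deg(M) \leq -1$. Iterating (1a) gives $M^n[U] \subseteq U$ for all $n \geq 1$, and chaining the sandwich~(1b) yields
$$(s M^n s)[U_0] = s M^n(s U_0) \subseteq s M^n[U] \subseteq s[U] \subseteq U_0,$$
so $s M^n s \in B$. Since $A$ is a domain and $s, M \neq 0$, we have $s M^n s \neq 0$; hence by Lemma~\ref{filtl}(1), $\deg(s M^n s) = 2 \deg(s) + n \deg(M) \leq 2 \deg(s) - n$. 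Choosing $n > 2 \deg(s)$ yields $\deg(s M^n s) < 0$, contradicting the fact that nonzero elements of $B$ have nonnegative degree (since $F_{-1} B = \{0\}$). Hence $M = 0$.
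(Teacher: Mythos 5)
The sufficiency and uniqueness arguments are correct, and the necessity of condition (2) is correct as well; these match the paper closely (the paper derives $\deg(L-L_0)\le -1$ directly by rearranging $L D = D L_0$ as $\ad_{L_0}(D) = -(L-L_0)D$ and applying Lemma~\ref{filtl}, rather than passing through $\grd(A)$ and centrality, but the content is the same). Your uniqueness argument (iterating $M^n$ and trapping $sM^ns$ in $B$, then forcing its degree negative) is a slightly different but equally valid route to the paper's argument (which instead picks $a\in\M$ of minimal degree and notes $Pa$ would have strictly smaller degree).

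The genuine gap is in the necessity of condition (1). You set $U := D[U_0]$ and obtain (1a) and the upper inclusion $U\subseteq s_1^{-1}U_0$, but you explicitly leave the lower inclusion $sU_0\subseteq U$ unproved, sketching an approach that does not close. There are two problems. First, with $U=D[U_0]$ there is no general reason for $D[U_0]$ to contain any $sU_0$: the image of a differential operator on $U_0$ can miss low-degree parts of $U_0$ entirely, and the operators $DD^*$, $D^*D$ do not obviously repair this since they need not dominate any element of $S$. Second, invoking Proposition~\ref{fatrefl} here would be circular: that proposition is stated under the hypotheses of Theorem~\ref{IT} (and assumes maximality of $U$), and its proof draws on the existence of a shift operator; it is downstream of the theorem, not available as an input. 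The paper avoids the issue by choosing $U$ differently. It applies Lemma~\ref{shiftl} to extract $\delta\in S$ with $\ad_{L,L_0}^{N+1}(\delta)=0$, sets $D_k := \ad_{L,L_0}^k(\delta)$, and defines $U := \sum_{k=0}^N D_k[U_0]$. Stability (1a) follows from $LD_k = D_{k+1} + D_k L_0$, and (1b) is immediate: $D_0[U_0]=\delta U_0$ gives the lower inclusion with $s$ any common multiple of $\delta$ and denominators of the $D_k$'s, and each $\delta_k D_k\in B$ gives the upper one. Replacing your $U$ by this sum would close the gap while preserving the rest of your argument essentially unchanged.
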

\begin{proof}
First, we prove that conditions $(1)$ and $(2)$ are sufficient for the existence of  $D$. To this end, we consider the space
of all operators in $A$ mapping $U_0$ to $U$:
\begin{equation*}
\M := \{a \in A\ :\ a[U_0] \subseteq U \}\ .
\end{equation*}
Note that $\M$ is a right $B$-module which, by (1b), contains the ideal $s B$ and is contained in $ s^{-1} B\,$:
\begin{equation}
\la{Minc}
s B \subseteq \M \subseteq s^{-1} B\ .
\end{equation}
On the other hand, by (1a), $ \M $ is closed under the action of the operator $\ad_{L, L_0}$. We claim that this last operator acts on $\M$ locally nilpotently. Indeed, by  Lemma~\ref{filtl}, it follows from the inclusion $ \M \subseteq s^{-1} B $ in \eqref{Minc} that 
\begin{equation}
\la{comp}
\deg(a)  \geq - \deg(s)\quad \mbox{for
all}\ \, a\in \M \!\setminus\! \{0\}\ .
\end{equation}
On the other hand, letting $ P := L - L_0 \in A $, we can write 
$$\ad_{L, L_0}(a) = \ad_{L_0}(a) + P a\ . 
$$ 
By condition (2), we have $ \deg(P)  \leq -1 $,  and hence
$$\
\deg(Pa) = \deg(P) + \deg(a) \leq \deg(a) - 1 
$$
for all $a\in A $. Then, by Lemma~\ref{filtl}, 
\begin{equation}
\la{comp1}
\deg[\ad_{L, L_0}(a)] 
 \leq \max\{\deg[\ad_{L_0}(a)],\, \deg(Pa)\} \leq   
\deg(a) - 1\ 
\end{equation}
Now, by \eqref{Minc}, any  element $a \in \M$ can be written in the form $ a = s^{-1} b $ with $b\in B $. If we take $ N = \deg(b) $, then, for  $ a = s^{-1} b $,  \eqref{comp1} implies  by induction
$$
\deg[\ad^{N+1}_{L, L_0}(a)] \leq \deg(a) - N - 1 = \deg(b) - \deg(s) - N - 1 = -\deg(s) - 1\ .
$$
In view of \eqref{comp}, for $\, a \in \M $, this means that $ \deg[\ad^{N+1}_{L, L_0}(a)] = - \infty \,$, i.e.
$ \ad^{N+1}_{L, L_0}(a) = 0 $. Thus $ \ad_{L, L_0} $ acts on $\M$ locally nilpotently. Now, since $ s\in \M $ by (1b), we have $ \ad^{N+1}_{L, L_0}(s) = 0 \,$ with $ N = 2 \deg(s) $. This implies the existence of $D$ by Lemma~\ref{shiftl}.

Conversely, suppose that there is $ D \not= 0 $ in $A$ such that $\, L\,D = L_0\, D \,$. This last equation can be rewritten in the form $ \ad_{L_0}(D) = - P D $, where $ P := L - L_0 $. Hence, by Lemma~\ref{filtl}, 
$$
\deg(P) + \deg(D) = \deg(P D) = \deg[\ad_{L_0}(D)] \leq \deg(D) - 1\ ,
$$
which implies $ \deg(P) \leq -1\,  $. Thus  (2) holds.

To construct a subspace $ U \subseteq {\mathbb K} $ satisfying condition (1) we apply Lemma~\ref{shiftl}. According to this lemma,  there is an element $ \delta \in S $ such that  $ \ad^{N +1}_{L,L_0}(\delta) = 0 $ for some $ N \ge 0 $. We put $ D_k := \ad_{L, L_0}^k(\delta)  $ for $\, k=0,1,2,\ldots, N+1$, with $ D_{N+1} = 0 $, and define $U$ to be the smallest subspace of ${\mathbb K}$ that contains the images of $U_0$ under the $D_k$'s for all $k\,$: i.e.,
$$
U := \sum_{k=0}^N \,D_k[U_0]\,\subseteq \, {\mathbb K} \ .
$$
Since $\, D_{k+1} = L D_k - D_k L_0 \,$, we have
$$
L D_k[U_0] = D_{k+1}[U_0] + D_k L_0[U_0] \subseteq D_{k+1}[U_0] + D_k[U_0] \subseteq U
$$
for $\,k=0,1, \ldots, N\,$. Hence $\,L[U] \subseteq U\,$, which is condition (1a). To prove (1b) note that,
by construction, all the $D_k$'s are in $A$, hence there are elements
$ \delta_k \in S $ such that $ \delta_k D_k \in B $ for all
$k$. Put $\,s := \delta \,\delta_1 \ldots \delta_N \in S\,$. Then
$$
s \,U =  \sum_{k=0}^N \,s \,D_k[U_0]
\subseteq \sum_{k=0}^N B[U_0] = U_0\ .
$$
On the other hand, since $U_0$ is an $R$-module and $S \subset R $, we have
$$
s \,U_0 = \delta\,(\delta_1 \ldots \delta_N  U_0) \subseteq \delta \,U_0 = D_0[U_0] \subseteq U\ .
$$
Thus, $\,s\, U_0 \subseteq U \subseteq s^{-1} U_0\,$ for $s \in S$, which is the required condition (1b).

To prove the last claim of the theorem consider two operators $L_1$ and $L_2$ in $A$ satisfying 
{\rm (1a)} and {\rm (2)} for a given subspace $U$ which satisfies {\rm (1b)}. Put $ P := L_1 - L_2 $. Then, by {\rm (1a)}, $\, P[U] \subseteq U \,$, while by {\rm (1b)} and Lemma~\ref{filtl}, 
$$
\deg(P) = \deg(L_1 - L_0 + L_0 - L_2) \leq \max\{\deg(L_1 - L_0), \deg(L_2 - L_0)\} < 0 \ .
$$
The first condition implies that $ P \in \End_{B}(\M) $ so that $ P a \in \M $ for all $ a \in \M $,
while by the second, $\, \deg(Pa) < \deg(a) \,$. Taking $ a \not= 0 $ to be of minimal degree in $\M$,
we conclude $ Pa = 0 $ which means that $P=0$ or equivalently $L_1 = L_2$. 
This finishes the proof of the theorem.
\end{proof}
\begin{remark}
\la{Rsymbol}
Note that  an operator $L \in A $ satisfies condition (2) of Theorem~\ref{IT} if and only if
$ L = L_0 + P $ with $\, \deg(P) < 0 \,$. By Lemma~\ref{filtl}, the last inequality holds for $P \in A $ iff
there is an $\, s \in S \,$ and $ \,n \ge\, 0 $ such that $ s P \in B $ and $ \ad^n_{L_0}(sP) = 0 $,
while $ \ad_{L_0}^n(s) \not= 0\,$. In practice, these conditions are easily verifiable. In applying 
Theorem~\ref{IT} the main problem is to verify condition (1).

\end{remark}

\begin{remark}
\la{LinB}
Under the assumptions of Theorem~\ref{IT}, for an operator $L $ in $ B $, the identity  $\, L D = D L_0 \,$ may hold  (with nonzero  $ D \in \D(\mathbb K) $) if and only if $ L = L_0 $. This follows from the last claim of the theorem.
\end{remark}
\begin{remark}
\la{Rfamily}
Theorem~\ref{IT} extends naturally to the case when a single ad-nilpotent operator $L_0 \in B $ is 
replaced by an abelian ad-nilpotent family $ \C_0 \subset B $ (in the sense of \cite{BW}). The filtration $ F_\ast B $ is defined in this case by $\,F_{n+1} B := \{b \in B\,:\, \ad_{L_0}(b) \in F_n B\ \mbox{for all}\ L_0 \in \C_0\}\,$, and the associated  degree function on $B$ determines --- under Assumption (A) --- a degree function `$\deg $' on  $A = B[S^{-1}]$ with the same properties as in Lemma~\ref{filt}. The generalisation of Theorem~\ref{IT} says that, for a family of operators $ \C \subset A $, there is a nonzero $D \in A$
such that $\,\C \,D \,=\, D\, \C_0\,$ if and only if conditions $(1)$ and $(2)$ hold for all $ L \in \C $.
The family $\C $ is then necessarily abelian, and the algebra generated by $\C$ in $A$ is a commutative ad-nilpotent subalgebra of $ \End_{B}(\M) $. We will construct examples of such subalgebras in Section~\ref{S4.2} below.
\end{remark}
We give three general classes of examples of algebras  satisfying the assumptions of Theorem~\ref{IT}.

\begin{example}
\la{weyl}
Let $V$ be a finite-dimensional vector space over $\c $. Take
$ R = \c[V] $ to be the algebra of polynomial functions on $V$, and $B = \D(V) $ the ring of differential operators on $\,U_0 = R = \c[V] $. Then $ B \cong A_n(\c) $, where $ A_n(\c) = \c[x_1, \ldots, x_n; \partial_1, \ldots, \partial_n] $ is the $n$-th Weyl algebra with $n = \dim(V) $. The algebra $ A_n(\c) $ contains the commutative subalgebra $ \c[\partial_1, \ldots, \partial_n] $ of constant coefficient differential operators $ L_0 = P(\partial_1,\ldots, \partial_n) $ which act locally ad-nilpotently on $B$. 
In the one-dimensional case ($n=1$), there is a well-known inductive construction of shift operators, using 
the classical Darboux transformations, that works for an arbitrary $L_0$. This elementary construction does not extend to higher dimensions: for $n>1$, only some ad hoc constructions and a few explicit examples are known (see, e.g., \cite{B98}, \cite{BK}, \cite{BCM}, \cite{C98}, \cite{CFV1}, \cite{CFV} and references therein).
\end{example}
\begin{example}
\la{cherednik}
Let $B = B_k(W) $ be the spherical Cherednik algebra associated to a finite Coxeter group $W$ acting in its reflection representation $V$. Take $ R = \c[V]^W $ and $ U_0 = \c[V_{\rm reg}]^W $, where $ V_{\rm reg} $ is the (open) subvariety of $V$ (obtained by removing the reflection hyperplanes of $W$) on which $W$ acts freely (see Section~\ref{S2}) . It is well-known that $B$ contains a maximal commutative subalgebra of $W$-invariant differential operators $ L_{q,0} = {\rm Res}(e\, T_q \,e) $
associated to $ q \in \c[V^*]^W $ that act locally ad-nilpotently on $B$ (see, e.g., \cite{BEG}). The  Calogero-Moser operator $ L_W $ defined by \eqref{wcm} is a special example of the $L_{q,0}$ corresponding to the quadratic polynomial $q = |\xi|^2$ ({\it cf.} Theorem~\ref{ci}). The generalised Calogero-Moser operators $L_\A $ given by \eqref{gcm} are examples of the operators $L$ related to $L_0 = L_W $ by a shift operator in a properly localised Cherednik algebra;  in the next section, we will describe the subspaces 
$U = U_{\A}$ associated to these operators explicitly in terms of locus conditions. This is the main
example of the present paper.
\end{example} 
\begin{example}
\la{lie}
Let $G$ be a complex connected reductive algebraic group, $\g = {\rm Lie}(G) $ its Lie algebra. Take
$B = \D(\g)^G $ to be the ring of invariant polynomial differential operators on $ \g $ with the respect to the natural (adjoint) action of $G$ on $\g$. The algebra $B$ contains the subalgebra $ R = \c[\g]^G $ of invariant polynomial functions on $\g$ and acts naturally on $ U_0 = \c[\g_{\rm reg}]^G $, where $ \g_{\rm reg} \subset \g $ is the (open) subvariety of regular semisimple elements of $\g$ on which $G$ acts freely.
Moreover, $B$ contains the commutative subalgebra $\c[\g^*]^G $ of constant coefficient invariant differential operators $L_0$ which act locally ad-nilpotently on $B$. A special example of such an $L_0$
is the second order Laplace operator $\Delta_{\g} $ defined for a $G$-invariant metric on $\g$. 
Applications of Theorem~\ref{IT} to this example seems to deserve a separate study.
Of particular interest is a relation to the previous example: specifically, the question whether the generalised Calogero-Moser operators constructed in this paper can be obtained via  the (properly  localised) deformed Harish-Chandra map $ \Phi_k: \D(\g)^G \to B_k(W) $ constructed in \cite{EG}? 
\end{example}
\subsection{Morita context}
\la{S4.1}
We return to the situation of Theorem~\ref{IT}. We take an operator $L$ satisfying conditions $(1)$ and $(2) $ of the theorem, fix a subspace $ U \subseteq {\mathbb K}$ corresponding to $L$ and consider the module $\M$  of all operators
in $A$ mapping $U_0 $ to $U$ (as defined in the proof of Theorem~\ref{IT}). This last module has some interesting  algebraic properties that we will describe next. 

First, we remark that the subspace $\,U \,$ satisfying condition $(1)$ of Theorem~\ref{IT} is not uniquely determined by  $L$. However, given two such subspaces, say $U_1$ and $U_2$, their sum $U_1 + U_2$ also satisfies $(1)$. Indeed, if $\,s_1 U_0 \subseteq  U_1 \subseteq s_1^{-1} U_0 \,$ and $\,s_2 U_0 \subseteq  U_2 \subseteq s_2^{-1} U_0 \,$, then $\,s U_0 \subseteq  U_1+U_2 \subseteq s^{-1} U_0 \,$ for $s = s_1 s_2 \in S $, while obviously $\,L[U_1 +U_2] \subseteq U_1 + U_2 \,$
whenever $L[U_1] \subseteq U_1 $ and $L[U_2] \subseteq U_2 $. This implies that the poset of all subspaces $ U \subseteq {\mathbb K}$ satisfying  $(1)$ has {\it at most} one maximal element --- the largest subspace $U_{\rm max}$. We will see that in our basic example --- for the operator $ L = L_\A $ associated to a generalised locus configuration --- such a subspace always exists (Lemma~\ref{inu}).  In what follows, we will therefore study the module $\M$  for the maximal subspace $ U = U_{\rm max}$, assuming that the latter exists.

Next, we recall a few basic definitions from noncommutative algebra.
For a right $B$-module $\M$, we will denote by $\M^* := {\rm Hom}_B(\M,B) $ its dual, which is naturally a left $B$-module (via left multiplication of $B$ on itself). Applying the Hom-functor twice, we get the  double dual $\, \M^{**} := {\rm Hom}_B({\rm Hom}_B(\M,B), B) $,  which is a right $B$-module equipped with a canonical map $ \M \to \M^{**} $. A $B$-module $\M\,$ is called {\it reflexive}  if the canonical map $\,\M \to \M^{**} $ is an isomorphism. It is easy to see that every f.g. projective
(in particular, free) module is reflexive but, in general, a reflexive module need not be projective. 
If $B$ is a Noetherian domain, we write $\QQ = \QQ(B) $ for the quotient skew-field\footnote{Recall, for a (left and/or right) Noetherian domain $B$, the set $ S = B \! \setminus \!\{0\} $ of all nonzero elements of $B$ satisfies a (left and/or right) Ore condition (Goldie's Theorem), and the quotient skew-field $ \QQ(B) $ is obtained in this case by Ore localisation $B[S^{-1}]$.} of $B$, and call $\M $ a {\it fractional  ideal} if $\M $ is a right submodule of $ \QQ $ such that $\,p B \subseteq \M \subseteq q B\,$ for some nonzero $p,q \in \QQ $. 
Furthermore, if $B$ is a Noetherian domain satisfying our condition (A), then 
$ B \subseteq \D({\mathbb K}) \subseteq \QQ $.
Finally, we recall the definition of the $B$-module $\M$ from Theorem~\ref{IT}:
\begin{equation}
\la{MB}
\M := \{a \in A\ :\ a[U_0] \subseteq U \}\ ,
\end{equation}
and in a similar fashion, we define the ring
\begin{equation}
\la{D}
\D := \{a \in A\ :\ a[U] \subseteq U \}\ .
\end{equation}
\begin{prop}
\la{fatrefl}
Assume that the algebra $B$, the operators  $ L_0 \in B $ and $L \in A $ satisfy the assumptions of Theorem~\ref{IT}. In addition, assume that $B$ is  Noetherian and the subspace $U \subseteq {\mathbb K}$ 
associated to $L$ by Theorem~\ref{IT} is maximal. Then 

$(a)$ $\,\M$  is a reflexive fractional ideal of $B$;

$(b)$ $\,\D \cong \End_B(\M) $, where $ \End_B(\M)$ is the endomorphism ring of  $\M$.

\end{prop}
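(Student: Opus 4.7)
\emph{Part (a).} The sandwich $sB\subseteq\M\subseteq s^{-1}B$ established during the proof of Theorem~\ref{IT} immediately exhibits $\M$ as a fractional right ideal of the Noetherian domain $B$, whose Ore quotient skew field $\QQ=\QQ(B)$ (Goldie's theorem) contains $A$. The plan for reflexivity is to work inside $\QQ$ and identify the duals concretely. A standard argument (an Ore common right multiple combined with $B$-linearity) forces any $f\in\mathrm{Hom}_B(\M,B)$ to be left multiplication by a single element of $\QQ$, so
$$
\M^* \;\cong\; \{q\in\QQ : q\M\subseteq B\},
$$
and iterating, $\M^{**}\cong\{q\in\QQ : \M^* q\subseteq B\}$. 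The inclusion $\M\subseteq\M^{**}$ is formal. For the reverse, note that $s\in\M^*$ (since $s\M\subseteq s\cdot s^{-1}B=B$), so every $q\in\M^{**}$ satisfies $sq\in B$ and in particular lies in $s^{-1}B\subseteq A$. The remaining task is to show $q[U_0]\subseteq U$; for this the plan is to invoke the maximality of $U$ by forming the $L$-iterated enlargement $U':=U+\sum_{n\ge 0}\ad_L^n(q)[U_0]\subseteq{\mathbb K}$. This sum is finite because $\deg(\ad_L(q))\le\deg(q)-1$ (Lemma~\ref{filtl} together with condition (2) of Theorem~\ref{IT}, so $\ad_L$ is locally nilpotent on $A$). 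One then verifies that $U'$ satisfies conditions (1a) and (1b) and concludes $U'\subseteq U$ by maximality, forcing $q[U_0]\subseteq U$ and hence $q\in\M$.

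\emph{Part (b).} Consider the ring homomorphism $\rho:\D\to\End_B(\M)$ given by left multiplication, $d\mapsto(m\mapsto dm)$. It is well-defined because $dm\in A$ and $(dm)[U_0]=d[m[U_0]]\subseteq d[U]\subseteq U$, so $dm\in\M$. Injectivity is immediate since $A$ is a domain and $\M$ contains $s\ne 0$. For surjectivity, given $\phi\in\End_B(\M)$, pick a nonzero $m_0\in\M$ and set $d:=\phi(m_0)m_0^{-1}\in\QQ$; a common-right-multiple (Ore) argument, using that $B$ is a Noetherian domain, shows $d$ is independent of $m_0$ and $\phi(m)=dm$ for every $m\in\M$. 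Taking $m_0=s\in S\cap\M$ gives $d=\phi(s)s^{-1}\in A$. It then remains to check $d\in\D$, i.e.~$d[U]\subseteq U$. For this, form $U'':=U+\sum_{n\ge 0}\ad_L^n(d)[U]$ (again a finite sum, since $\ad_L$ is locally nilpotent on $A$), and verify $L$-stability via the identity $L\ad_L^n(d)=\ad_L^n(d)L+\ad_L^{n+1}(d)$ together with $L[U]\subseteq U$, and verify the sandwich condition. Maximality of $U$ then gives $U''\subseteq U$, so $d[U]\subseteq U''\subseteq U$, as required.

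\emph{Main obstacle.} In both parts the essential technical point is the verification of the sandwich condition (1b) for the enlarged subspaces. Stability under $L$ is almost automatic by construction: the sums $U'$ and $U''$ are defined so as to absorb all iterated commutators with $L$. The harder step is to bound $\ad_L^n(q)[U_0]$ and $\ad_L^n(d)[U]$ inside some fixed $s_*^{-1}U_0$; this relies on writing each iterated commutator in the form $t^{-1}b$ with $t\in S$, $b\in B$, and controlling the ``denominators'' that arise when moving elements of $B$ past elements of $S^{-1}$. The commutativity of $S\subseteq R$ and the fact that $B\subseteq\D({\mathbb K})$ consists of finite-order differential operators make such bounds possible, after which the maximality of $U$ closes the argument.
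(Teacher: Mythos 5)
Your high-level plan is right -- realize $\M^*$ and $\M^{**}$ concretely as subsets of the Goldie quotient $\QQ$, show any $q \in \M^{**}$ lies in $A$, and then push $q$ into $\M$ by forming an $L$-stable enlargement of $U$ and invoking maximality. But there are two genuine gaps, one small and one fatal.

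The small one is in part (a): you form $U' := U + \sum_{n\ge 0}\ad_L^n(q)[U_0]$. Checking $L$-stability requires $L\,\ad_L^n(q)[U_0] = \ad_L^{n+1}(q)[U_0] + \ad_L^n(q)\,L[U_0]$, and $L[U_0]\not\subseteq U_0$ (it is $L_0$, not $L$, that preserves $U_0$). The correct operator is $\ad_{L,L_0}$: one has $L\,\ad_{L,L_0}^n(q)[U_0] = \ad_{L,L_0}^{n+1}(q)[U_0] + \ad_{L,L_0}^n(q)\,L_0[U_0] \subseteq \ad_{L,L_0}^{n+1}(q)[U_0] + \ad_{L,L_0}^n(q)[U_0]$, which is what makes the enlargement $L$-stable. (In part (b) your use of $\ad_L^n(d)[U]$ is correct, since there the ambient space is $U$ and $L[U]\subseteq U$.)

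The fatal gap is the finiteness of the sum. You assert that $\ad_L$ (resp.\ $\ad_{L,L_0}$) is locally nilpotent on $A$ because $\deg$ drops by at least one under each application. That inference is false: $\deg$ is a $\Z$-valued function on $A$ unbounded below (for instance $\deg(s^{-n}) = -n\deg(s)\to -\infty$), so a strictly decreasing degree does not force the iterates to vanish. Local nilpotency of $\ad_{L,L_0}$ holds on $\M$ and local nilpotency of $\ad_L$ holds on $\D$, precisely because those spaces are sandwiched as $sB\subseteq \M\subseteq s^{-1}B$, giving the lower bound $\deg\ge -\deg(s)$; but $q\in \M^{**}$ and $d\in\End_B(\M)$ are not known in advance to lie in $\M$ or $\D$, so no such lower bound is available for their iterated commutators.

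This is exactly the point where the paper's argument does real work and yours does not. The paper introduces the left module $\NN := \{a\in A : a[U]\subseteq U_0\}$, notes $\NN\subseteq \M^*$ and hence $\M^{**}\subseteq \NN^*$, and produces the dual shift operator $S^* = \tfrac{1}{N!}\ad_{L_0,L}^N(s)\in\NN$ with $L_0 S^* = S^* L$. For $q\in\NN^*$ one then has $S^* q \in B$ and, crucially, the identity $S^*\,\ad_{L,L_0}^n(q) = \ad_{L_0}^n(S^* q)$, which vanishes for large $n$ because $L_0$ acts locally ad-nilpotently on $B$; dividing by $S^*\ne 0$ (using that $A$ is a domain) gives $\ad_{L,L_0}^n(q)=0$. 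Only then does the inductive argument "$\ad_{L,L_0}(a)\in\M\Rightarrow a\in\M$" (via the $L$-stable enlargement $\tilde U = U + a[U_0]$ and maximality) close the circle. Likewise, in part (b) the analogous step uses $S = \tfrac{1}{N!}\ad_{L,L_0}^N(s)\in\M$ satisfying $LS = SL_0$ and $\ad_{L,L_0}(S)=0$ to get $\ad_L^n(d)\,S = \ad_{L,L_0}^n(dS) = 0$ for large $n$. Without these shift operators the ad-nilpotency of $q$ and $d$ -- which is what makes your sums finite -- is simply unproved, and no amount of careful denominator bookkeeping in $\D(\mathbb K)$ will substitute for it.
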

\begin{proof}
$(a)$ Note that $\M$ being a fractional right ideal of $B$ follows immediately
from condition (1b) of Theorem~\ref{IT}: see \eqref{Minc}. We need  only to prove that $\M $
is reflexive. If $\M_1$ and $\M_2 $ are two fractional (right) ideals of $B$,  we can identify (see \cite[3.1.15]{MR}):
\begin{equation}
\la{IHom}
{\rm Hom}_B(\M_1, \M_2) \cong \{q \in \QQ\ :\ q \M_1 \subseteq \M_2 \}\ .
\end{equation}
In particular, 
\begin{equation}
\la{Mdual}
\M^* \cong \{q \in \QQ\ :\ q \M \subseteq B \}
\end{equation}
Now, in addition to the right $B$-module $\M$, we introduce the left $B$-module 
\begin{equation}\label{nn}
\NN :=  \{a \in A\ :\ a[U] \subseteq U_0 \} \ .
\end{equation}
By condition (1b) of Theorem~\ref{IT},  
\begin{equation}
\la{Ninc}
Bs \subseteq \NN \subseteq Bs^{-1}\, ,
\end{equation}
which shows that $\NN$ is a fractional left ideal. Since $B = \{a \in A\ :\ a[U_0] \subseteq U_0 \} $,
we have $\,\NN  \M \subseteq B \,$. With identification \eqref{Mdual}, this implies $\, \NN \subseteq \M^* \,$. Dualizing the last inclusion yields $\,\M^{**}\,\subseteq\, \NN^* $. On the other hand, for any fractional ideal, we have $\,\M \subseteq \M^{**} $. Hence, to prove that $\M$ is reflexive it suffices to show
\begin{equation}
\la{NM}
\NN^* \subseteq \M\, .
\end{equation}
We prove \eqref{NM} in two steps. First, we define
$\,\M^{\circ} := \{a\in A : \ad_{L, L_0}^N(a)=0 \ \, \text{for\ some}\ \, N\ge 0\} \,$ 
and show that
\begin{equation}
\la{NMc}
\NN^* \subseteq \M^\circ .
\end{equation}
Then, we will prove 
\begin{equation}
\la{McM}
\M^\circ \subseteq \M \,.
\end{equation}
To see \eqref{NMc} we identify $\NN^* \cong \{q \in \QQ\, :\, \NN q \in B\} \,$ similar to \eqref{Mdual}. Since
$\,s \in \NN \,$, for any $ q \in \NN^* $, we have $ s q \in B $, which implies
$q \in A $. Hence $ \NN^*  \subseteq A \,$. On the other hand, the
inclusion $ \NN \subseteq B s^{-1} $ in \eqref{Ninc} implies $\, \deg(a) \ge - \deg(s) \,$ for all $ a \in \NN $. Then,
the same argument as in the proof of Theorem~\ref{IT} shows that $ \ad_{L_0, L}$ acts on
$\NN$ locally nilpotently. In particular, for $s \in \NN $, there is $ N = N_s \ge 0 $ such that
$\ad_{L_0, L}^{N+1}(s) = 0 $, while $\,\ad_{L_0, L}^{N}(s) \not= 0 \,$. Set $\,S^* := \frac{1}{N!}\,\ad_{L_0, L}^{N}(s) \in \NN \,$, so that $\, L_0 S^* = S^* L\,$. Now, for any $\, q \in \NN^* $, we have 
$\,S^* q \in \NN  \,\NN^* \subseteq B \,$. Since $L_0$ acts on $B$ locally ad-nilpotently, there is $ n \ge 0 $
such that
$$ 
\ad^n_{L_0}(S^* q) = S^* \,\ad^n_{L, L_0}(q)  = 0\,.
$$
This implies $\,\ad^n_{L, L_0}(q) = 0\,$ since $S^* \not=0 $. Thus $\, q \in \M^\circ \,$
for any $ q \in \NN^* $, which proves \eqref{NMc}.

To prove  \eqref{McM} it suffices to show (by induction) that for $\,a \in A\,$,
$$ 
\ad_{L, L_0}(a) \in \M \ \Rightarrow\ a \in \M\, .
$$
Note that, if $ \ad_{L, L_0}(a) \in \M $, then
$$
La[U_0] = \ad_{L, L_0}(a)[U_0] + a L_0[U_0] \subseteq U + a[U_0]
$$
Hence, if we set $ \tilde{U} := U + a[U_0] \subseteq {\mathbb K} $, then $L[\tilde{U}] \subseteq \tilde{U} $,
i.e.  $\tilde{U} $ satisfies condition (1a) of Theorem~\ref{IT}. On the other hand, since $ a \in A $, we can find $ s' \in S $ such that $ s' a \in B $. Taking $\, \tilde{s} := s s' \in S \,$, with $ s \in S $ as in  (1b) of Theorem~\ref{IT}, we have $\,\tilde{s}\, U_0 \subseteq s\, U_0 \subseteq U \subseteq \tilde{U}\,$
and
$$
\tilde{s}\,\tilde{U} =  \tilde{s}\,U + \tilde{s}a[U_0] = s'(s\,U) + s(s'a[U_0]) \subseteq s'\,U_0 + s B[U_0] \subseteq U_0
$$
Thus, $\,\tilde{s}\,U_0 \subseteq \tilde{U} \subseteq \tilde{s}^{-1}\,U_0\,$ for $\tilde{s} \in S $, i.e. the $\tilde{U} $ also satisfies condition (1b) of Theorem~\ref{IT}. Since $\, U \subseteq \tilde{U} \,$, by maximality  of $U$, we conclude that $ \tilde{U} = U $ which implies that $ a[U_0] \subseteq U $, i.e. $ a \in \M $. This proves \eqref{McM}.

Summing up, we have shown that
$$
\M \,\subseteq \, \M^{**} \,\subseteq\, \NN^* \,\subseteq\, \M^\circ\,\subseteq \M\ .
$$
Thus, all these subspaces in $\QQ$ are equal. In particular, we have $ \M = \M^{**} $, which proves the
reflexivity of $\M$.

$(b)$ By \eqref{IHom}, we can identify $ \End_B(\M) \cong \{q \in \QQ\ :\ q \M \subseteq \M\}\,$.
Since $\M$ is naturally a left $\D$-module, we have $\,\D \subseteq \End_B(\M) \,$ via left multiplication in
$\QQ $. We need only to show the opposite inclusion 
\begin{equation}
\la{inc}
\End_B(\M) \subseteq \D \ 
\end{equation}
This can be proved in the same way as \eqref{NM} in part $(a)$: first, one defines the ring
$\,\D^\circ := \{a\in A : \ad_{L}^N(a)=0 \ \, \text{for\ some}\ \, N\ge 0\}\,$ and shows that
$\,\End_B(\M) \subseteq \D^\circ\,$, then one proves the inclusion $\D^\circ \subseteq \D $ arguing by 
induction (downwards) in $N$. Note that, just as in part $(a)$, the maximality of $U$
is needed only for the last inclusion. Thus we get the chain of subalgebras in $\QQ $:
$$
\D \subseteq \End_B(\M) \subseteq \D^\circ \subseteq \D \,,
$$
proving that all three are equal. This finishes the proof of the proposition.
\end{proof}
\begin{remark}\label{intri}
The proof of Proposition~\ref{fatrefl} shows that
\begin{eqnarray*}
\M &=& \{a\in A\,:\, \ad_{L, L_0}^N(a)=0\ \text{for some $ N\ge 0$}\}\,, \la{Mint}\\
\D &=& \{a\in A\,:\, \ad_L^N(a)=0\ \text{for some $N\ge 0$}\}\,, \la{Dint}
\end{eqnarray*}
which gives an intrinsic characterisation of  \eqref{MB} and \eqref{D} for the maximal $ U $. 
Dually, if we assume the maximality of $U_0$, i.e. that the $ U_0 $ is maximal among all subspaces $\, \tilde{U}_0 \subseteq U_0[S^{-1}] $ such that $ L_0[\tilde{U}_0] \subseteq \tilde{U}_0 $ and $ U_0 \subseteq \tilde{U}_0 \subseteq s^{-1} U_0 $ with $s \in S $, then we get  $ \NN = \NN^{**} = \M^* $ and 
\begin{eqnarray*}
\NN &=& \{a\in A\,:\, \ad_{L_0, L}^N(a)=0\ \text{for some $ N\ge 0$}\}\,,\\
B &=& \{a\in A\,:\, \ad_{L_0}^N(a)=0\ \text{for some $N\ge 0$}\}\,.
\end{eqnarray*}
\end{remark}

\vspace*{2ex}

Proposition~\ref{fatrefl} shows that the quadruple $(\M,\,\M^*,\,B,\,\D) $ forms a {\it Morita context} (in the sense of \cite[1.1.5]{MR}). It is natural to ask when this context gives an actual 
{\it Morita equivalence} between the algebras $B$ and $\D \,$: i.e., when do these algebras have equivalent module categories? Standard ring theory provides necessary and sufficient conditions for this in the form (see \cite[Cor. 3.5.4]{MR}):
$$
\M^* \M = B \quad \mbox{and} \quad  \M \,\M^* = \D \ .
$$
In general, these conditions are not easy to verify; however, in our situation, they hold 
automatically under additional homological assumptions on $B$:
\begin{cor}
\la{proj}
Assume that $B$ is a simple Noetherian ring of global dimension $ {\rm gldim}(B) \leq 2\,$. Then  $\D$ is Morita equivalent to $ B $; in particular,  $\D $ is a simple Noetherian ring of global dimension $ {\rm gldim}(\D) = {\rm gldim}(B)\,$. Moreover, if $U_0$ is a simple
$B$-module, then $U$ is a simple $\D$-module.
\end{cor}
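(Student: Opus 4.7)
The overall plan is to deduce the Morita equivalence from the bimodule structure of Proposition~\ref{fatrefl} by showing that $\M$ is a progenerator in $\Mod(B)$; the ring-theoretic properties of $\D$ then follow from standard Morita invariance, while the simple-module correspondence is obtained by transporting $U_0$ through the Morita functor and comparing with $U$.

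I would first verify that $\M$ is a progenerator of $\Mod(B)$. Finite generation is immediate, since $B$ is Noetherian and $\M \subseteq s^{-1}B$ by Proposition~\ref{fatrefl}. That $\M$ is a generator follows from the simplicity of $B$: $\M$ is reflexive, hence torsionless, so $\M^* \neq 0$, and therefore the trace ideal $\sum_{f\in \M^*}f(\M)$ is a nonzero two-sided ideal of $B$, which must equal $B$. The main obstacle is projectivity of $\M$. Here the plan is to invoke Auslander's transpose criterion: given a finitely generated projective presentation $P_1 \to P_0 \to \M \to 0$ and $T\M := \Coker(P_0^* \to P_1^*)$, reflexivity of $\M$ amounts to $\mathrm{Ext}_B^i(T\M, B) = 0$ for $i = 1, 2$, while $\mathrm{gldim}(B)\leq 2$ kills $\mathrm{Ext}_B^i(T\M, B)$ for $i \geq 3$. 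A standard Auslander--Bridger argument then promotes this total Ext-vanishing to projectivity of $T\M$; dualizing yields projectivity of $\M$.

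Once $\M$ is a progenerator, the Morita theorem applied to the bimodule of Proposition~\ref{fatrefl}(b) produces mutually inverse equivalences $F = \M \otimes_B (-) \colon \Mod(B) \rightleftarrows \Mod(\D)$ and $\mathrm{Hom}_\D(\M, -)$. Since simplicity, Noetherianness, and global dimension are all Morita invariants, $\D$ inherits these properties from $B$, which establishes the first part of the corollary.

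For the final claim I would use that the Morita functor $F$ carries the simple $B$-module $U_0$ to the simple $\D$-module $F(U_0) = \M \otimes_B U_0$. The natural evaluation map $F(U_0) \to U$, $m \otimes u \mapsto m(u)$, has image $\M[U_0] \subseteq U$, which is nonzero (containing $sU_0$) and $\D$-stable. Since $F(U_0)$ is simple, this evaluation is an isomorphism onto $\M[U_0]$, so $\M[U_0]$ is a simple $\D$-submodule of $U$. Finally, the maximality of $U$ together with the dual maximality of $U_0$ (which holds under simplicity of $U_0$; cf.\ Remark~\ref{intri}) gives $U = \M[U_0]$, and hence $U$ itself is simple as a $\D$-module.
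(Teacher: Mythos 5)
Your overall strategy mirrors the paper's: show $\M$ is a progenerator of $\Mod(B)$, invoke the Morita theorem, and then transport $U_0$ through the equivalence. Two comments.

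First, on projectivity of $\M$: your route via the Auslander transpose (reflexivity gives $\mathrm{Ext}^i(T\M,B)=0$ for $i=1,2$, global dimension kills $i\geq 3$, hence $T\M$ projective, hence $\M$ projective) is workable but considerably more elaborate than needed; the paper simply cites the classical fact (Bass) that a nonzero finitely generated reflexive module over a Noetherian ring of global dimension $\leq 2$ is a second syzygy of a module of projective dimension $\leq 2$, and hence projective. Both arguments are correct; yours re-derives the lemma rather than invoking it.

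Second — and this is a genuine gap — your justification for $U=\M[U_0]$ does not hold up. You appeal to ``the maximality of $U$ together with the dual maximality of $U_0$.'' Maximality of $U$ says $U$ is the largest among a certain family of subspaces; since $\M[U_0]$ is \emph{contained} in $U$ (and itself belongs to that family, being $\D$-stable and sandwiched between $sU_0$ and $s^{-1}U_0$), maximality of $U$ gives you nothing toward the inclusion $U\subseteq\M[U_0]$. As for ``dual maximality of $U_0$'' being a consequence of simplicity of $U_0$: this is unsupported. The subspaces $\tilde U_0$ in Remark~\ref{intri} are only required to be $L_0$-stable, not $B$-stable, so they are not $B$-submodules and simplicity of $U_0$ as a $B$-module does not rule them out. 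Moreover, dual maximality of $U_0$ is not among the hypotheses of the corollary, and a proof should not rely on it. The correct route — the one the paper takes — is to observe that $\Coker(f)=U/\M[U_0]$ is a $\D$-module annihilated by the nonzero element $s^2\in\D$, because $s^2U = s(sU)\subseteq sU_0\subseteq sB[U_0]\subseteq\M[U_0]$ by \eqref{Minc}; since $\D$ is simple (inherited from $B$ via Morita equivalence), a $\D$-module with nonzero annihilator is zero, so $\Coker(f)=0$. Combined with the simplicity of $\M\otimes_B U_0$ that you already established, this yields $f$ an isomorphism and $U$ simple. You should replace the maximality argument with this annihilator-plus-simplicity argument.
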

\begin{proof} 
It is a standard fact of homological algebra that every nonzero reflexive module over a Noetherian ring of global dimension $ \leq 2\,$ is f.g. projective (see, e.g., \cite{Bass}). Hence, by part $(a)$ of Proposition~\ref{fatrefl}, the $B$-module $\M $ is f.g. projective; then part $(b)$ --- together with Dual Basis Lemma \cite[3.5.2]{MR} --- implies  $\, \M \,\M^\ast = \D \,$. On the other hand, if $B$ is a simple domain, we have  automatically $\,\M^* \M = B \,$, since $ \M^* \M $ is a (nonzero) two-sided ideal in $B$. Thus, by \cite[Cor. 3.5.4]{MR},  $B$ and $\D$ are Morita equivalent algebras. Being Noetherian, simple and having global dimension $n$ are known to be Morita invariant properties of rings, hence $\D$ shares these properties with $B$. 

To prove the last statement consider the map of left $B$-modules 
$$ 
f: \M \otimes_B U_0 \to U
$$ 
given
by the action of operators in $ \M $ on $U_0$. The cokernel of this map, $ \Coker(f) = U/\M[U_0] $, has a nonzero annihilator in $\D $: indeed, for $\,s \in S\,$ as in (1b) of Theorem~\ref{IT}, we have $\, s^2 U = s(s U) 
\subseteq s U_0 \subseteq sB[U_0] \subseteq \M[U_0])\,$ by \eqref{Minc}. Hence 
$\, \Coker(f) = 0\,$, since $ \D $ is simple. On the other hand, since $\M$ is a progenerator in 
$ \mathtt{Mod}(B) $, the $\D$-module $\,\M \otimes_B U_0 \,$ is simple, whenever $U_0$ is simple. Hence
$ \Ker(f) = 0 $. It follows that $f$ is an isomorphism and  $U$ is a simple $\D$-module.
\end{proof}
\begin{remark}
In the last statement of Corollary~\ref{proj}, we can replace the assumption that $U_0$ is a simple $B$-module by  $U_0$ being a finite $R$-module. The latter implies the former by an argument of
\cite[Proposition 8.9]{BW}.
\end{remark}
\subsection{Commutative subalgebras}
\la{S4.2}
The results of the previous section show that the  algebras $B$ and $\D$ containing the operators $L_0$ and $L$ share many common properties, provided $L_0$ and $L$ are related by the `shift' identity \eqref{sheq}. In this section, we will construct two commuting families of operators  (including $L_0$ and $L$) that generate two isomorphic commutative subalgebras in $B$ and $\D$ intertwined by a common shift operator $S$. It is interesting to note that the operator $S$ may differ from the operator $ D $ that appears in \eqref{sheq}: in general, there seems to be no simple relation between these two shift operators.

We will keep the assumptions of Theorem~\ref{IT} and keep using the notation
from the previous section. In addition, we will introduce a new notation: for a ``multiplicative'' version
of the operator $ \ad_{a,b} $ defined in the beginning of Section ~\ref{S4.0}. Specifically, for an algebra $A$ and a pair of elements $a, b\in A \,$, we define a linear map $\,\Ad_{a,b}: A \to A[[t]]\,$ with values in the ring of formal power series over $A$, by 
\begin{equation}
\la{Ad}
\Ad_{a,b}(x) \,:= \, \sum_{n=0}^{\infty}\, \frac{t^n}{n!}\,\ad^n_{a,b}(x)\ .
\end{equation}
(As in the case of `$\ad$', we will simply write $\Ad_a$ instead of $\Ad_{a,a}$ when $a=b$.) 

Note that $(a,b)$ acts locally ad-nilpotently on $x \in A $ if and only if $\Ad_{a,b}(x) \in A[t]\,$, where $A[t] \subset A[[t]] $ is the subring of polynomials in $t$ with coefficients in $A$. Moreover, \eqref{Ad} has the following useful `multiplicative' property.

\begin{lemma}
\la{Adlem}
For all $x,y \in A $, the following identity holds in $A[[t]]\,$:
\begin{equation}
\la{Adid}
\Ad_{a,c}(xy) \,=\,\Ad_{a,b}(x)\,\Ad_{b,c}(y)
\end{equation}
%
%
\end{lemma}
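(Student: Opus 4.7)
The plan is to interpret $\Ad_{a,b}$ as conjugation by formal exponentials in $A[[t]]$. Let $L_a$ and $R_b$ denote left and right multiplication by $a$ and $b$ on $A$; associativity yields $L_a R_b = R_b L_a$, so $\ad_{a,b}=L_a-R_b$ is a difference of commuting endomorphisms of $A$. The binomial theorem for commuting operators reproduces the formula \eqref{binom}; summing $t^n/n!$ times this identity over $n\ge 0$ and regrouping the coefficients of $a^p$ and $b^q$ (which is legitimate inside $A[[t]]$) identifies
$$\Ad_{a,b}(x)\,=\,e^{ta}\,x\,e^{-tb}\,,$$
where $e^{ta}:=\sum_{n\ge 0} t^n a^n/n!$ and $e^{-tb}:=\sum_{n\ge 0}(-t)^n b^n/n!$ are the usual formal exponentials in $A[[t]]$.

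Armed with this rewriting, \eqref{Adid} collapses to a one-line formal computation. Substituting gives
$$\Ad_{a,b}(x)\,\Ad_{b,c}(y)\,=\,(e^{ta}\,x\,e^{-tb})(e^{tb}\,y\,e^{-tc})\,=\,e^{ta}\,xy\,e^{-tc}\,=\,\Ad_{a,c}(xy)\,,$$
where the middle factors cancel by the standard formal identity $e^{-tb}\,e^{tb}=1$ in $A[[t]]$. This last identity holds because $tb$ commutes with $-tb$, so the familiar power-series manipulation goes through even though $b$ need not commute with arbitrary elements of $A$.

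There is no real obstacle here: the only substantive step is the exponential rewriting of $\Ad_{a,b}$, and even that is merely a repackaging of \eqref{binom} via the commutativity of $L_a$ and $R_b$. An equally short alternative avoiding exponentials is to first establish the Leibniz-type rule
$$\ad_{a,c}^n(xy)\,=\,\sum_{k=0}^n {n\choose k}\,\ad_{a,b}^k(x)\,\ad_{b,c}^{n-k}(y)$$
by induction on $n$, with base case $a(xy)-(xy)c=(ax-xb)y+x(by-yc)$ and the inductive step using the same telescoping splitting; then \eqref{Adid} follows by recognising its left-hand side as the Cauchy product of the two formal power series on its right-hand side.
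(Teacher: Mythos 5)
Your proof is correct and matches the paper's own argument almost exactly: the paper first establishes the twisted Leibniz identity \eqref{coad} by induction and then notes the exponential rewriting $\Ad_{a,b}(x)=e^{ta}x\,e^{-tb}$ as an alternative, whereas you lead with the exponential computation and mention the induction as the alternative. The content and the two routes are the same; only the order of presentation differs.
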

\begin{proof}
The coefficient under $ t^n $ in the left-hand side of \eqref{Adid} is $\, \frac{1}{n!}\,\ad^n_{a,c}(xy) \,$, while in the right-hand side,
$$
\sum_{n_1 + n_2 = n} \frac{1}{n_1!\, n_2!}\ \ad_{a,b}^{n_1}(x)\ \ad^{n_2}_{b,c}(y)
$$
Thus \eqref{Adid} is equivalent to the sequence of identities in $A$:
\begin{equation}\label{coad}
\ad_{a,c}^n(xy)\,=\, \sum_{k=0}^{n}\,{n\choose k} \,\ad_{a,b}^k(x)\ \ad_{b,c}^{n-k}(y) \ ,\quad \forall\,n\ge 0 \ ,
\end{equation}
which can be easily verified by induction using the following `twisted' version of the Leibniz rule
$$
\ad_{a,c}(xy)\, =\, \ad_{a,b}(x) \,y \,+\, x \,\ad_{b,c}(y)\ .
$$
An alternative way to prove \eqref{Adid} is to use the identity
\begin{equation}
    \la{Adexp}
\Ad_{a,b}(x) \,=\, e^{ta} x\,e^{-tb}
\end{equation}
that formally holds in $A[[t]]$. To see \eqref{Adexp} it suffices to notice that
the both sides of \eqref{Adexp} agree at $ t = 0$, while satisfy the same differential equation $\,d F(t)/dt = \ad_{a,b}[F(t)]\,$ for $ F(t) \in A[[t]]$.
\end{proof}

Now, let $L_0 \in B $ and $ L \in A $ be as in Theorem~\ref{IT}, and let $ U \subseteq {\mathbb K} $ be a subspace (not necessarily maximal) associated to $L$.
Recall the fractional ideal $\M$, see \eqref{MB}, and the algebra $\D$, see \eqref{D}, attached to $U$. As shown in the proof of Theorem~\ref{IT}, $\,\ad_{L, L_0}\,$ acts locally nilpotently on $\M$; in particular, if we take $ s \in \M $
as in (1b), then $ \ad_{L,L_0}^{N+1}(s) = 0 $ for some $ N \leq 2 \deg(s)$. 
We take the smallest $N \in \N $ with this property and put
\begin{equation}
\la{shiftop}
S \,:=\, 
\ad_{L, L_0}^N(s)\,\in\,\M
\end{equation}
so that $ S \not= 0 $ while $\, L S = S L_0 \,$. Using \eqref{shiftop}, it is easy to show that $L$ is locally ad-nilpotent in
$\D\,$. Indeed, as $ \D \M \subseteq \M $,  we have $\, a S \in \M $ for any $ a\in \D $, and therefore $\, \ad^n_L(a)\,S = \ad^n_{L, L_0}(aS) = 0 \,$ for $\,n\gg 0\,$, which implies $ \ad^n_{L}(a) = 0 $ since $S \not=0 $.

Now, we define 
\begin{equation}
\la{Qring}
Q \,:= \,\D \cap R\, =\, \{q \in R \ : \ q\,U \subseteq U\}    
\end{equation}
which is a commutative subring in $ \D $. Note that $Q $ is nontrivial: i.e. $\,Q \not= \{0\}$, since at least $\, s^2 \in Q \,$ by condition (1b). Note also that $ Q \subseteq R \subseteq B $, i.e. $Q$ is a common commutative subring of $B$ and $\D$. Using the fact that $L_0$ is locally ad-nilpotent in $B$ and
$L$ is locally ad-nilpotent in $\D$, we define for every $\, q \in Q\,$:
\begin{eqnarray}
L_{q,0} & := & \frac{1}{N_{q,0} !}\,\ad_{L_0}^{N_{q,0}}(q)\,, \la{Loq}\\
L_{q} & := & \frac{1}{N_q!}\,\ad_{L}^{N_q}(q)\, , \la{Lq}
\end{eqnarray} 
where  $N_{q,0} \ge 0 $ and $ N_q \ge 0 $ are chosen to be the smallest numbers
such that $ \ad_{L_0}^{N_{q,0}+1}(q) = 0 $ and $ \ad_{L}^{N_{q}+1}(q) = 0 \,$. Thus, by definition, $L_{q,0} \in B $ and $L_q \in \D $ are nonzero operators satisfying $\,[L_{q,0},\,L_0] = 0 \,$ and $\,[L_q, \,L] = 0\,$. In addition, we have
\begin{prop}
\la{commprop}
The operators \eqref{Loq} and \eqref{Lq} commute in $B$ and $A$: i.e., 
\begin{equation}
\la{commid}
[L_{q, 0},\,L_{q',0}]\,=\,0\quad ,\quad [L_{q},\,L_{q'}]\,=\,0\ ,\quad \forall\, q,\,q' \in Q\ .
\end{equation}
Moreover, for all  $ q \in Q $, we have 
\begin{equation}
\la{commsh}
L_q\,S \,=\, S\, L_{q,0}\,, 
\end{equation}
where $S$ is the operator defined by \eqref{shiftop}.
\end{prop}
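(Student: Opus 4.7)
My plan is to derive all three identities by applying the generating function $\Ad$ from \eqref{Ad} to commuting pairs of elements of $R \supseteq Q$, using the multiplicative property of Lemma~\ref{Adlem} to convert each such identity in $R$ into a polynomial identity in $A[t]$, and then extracting the top $t$-coefficient. The degree comparison will work cleanly because $A \subseteq \D(\mathbb{K})$ is a domain by assumption (A), so $A[t]$ is a domain and $t$-degrees are additive under multiplication of polynomials with nonzero leading coefficients.

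For the two commutativity statements in \eqref{commid}, I would start from $qq' = q'q$ in the commutative ring $R$ and apply $\Ad_{L_0}$ to both sides. By Lemma~\ref{Adlem} this gives
\begin{equation*}
\Ad_{L_0}(q)\,\Ad_{L_0}(q') \,=\, \Ad_{L_0}(q')\,\Ad_{L_0}(q)
\end{equation*}
as an identity in $B[t]$; this is a polynomial identity because $L_0$ is locally ad-nilpotent on $B \supseteq R$. The coefficient of $t^{N_{q,0}+N_{q',0}}$ on the two sides is $L_{q,0}\,L_{q',0}$ and $L_{q',0}\,L_{q,0}$ respectively, and since $B$ is a domain both products are nonzero, so equality of polynomials forces $[L_{q,0},\,L_{q',0}] = 0$. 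Replacing $L_0$ by $L$ and working in $\D[t]$, the identical argument yields $[L_q,\,L_{q'}] = 0$; here I would invoke the observation (made immediately before the statement of the proposition) that $L$ is locally ad-nilpotent on $\D$.

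For the shift identity \eqref{commsh}, I would start from $qs = sq$ (both factors lie in the commutative ring $R$), apply $\Ad_{L, L_0}$, and use Lemma~\ref{Adlem} with two different intermediate choices of element:
\begin{align*}
\Ad_L(q)\,\Ad_{L, L_0}(s) &= \Ad_{L, L_0}(qs) = \Ad_{L, L_0}(sq) \\
&= \Ad_{L, L_0}(s)\,\Ad_{L_0}(q).
\end{align*}
All three $\Ad$'s here are polynomials in $t$: $\Ad_L(q)$ of degree $N_q$ with leading coefficient $L_q$, $\Ad_{L_0}(q)$ of degree $N_{q,0}$ with leading coefficient $L_{q,0}$, and $\Ad_{L, L_0}(s)$ of degree $N$ with leading coefficient $\tfrac{1}{N!}\,S$. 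Comparing top $t$-coefficients on the far ends of this chain gives simultaneously $N_q = N_{q,0}$ and $L_q\,S = S\,L_{q,0}$, which is \eqref{commsh}.

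The only real obstacle is the bookkeeping in this final step: one must verify that each $\Ad(\cdot)$ in play is genuinely a polynomial of the claimed $t$-degree and that its leading coefficient survives multiplication. Polynomiality is guaranteed by local ad-nilpotency of $L_0$ on $B$, of $L$ on $\D$, and of $\ad_{L, L_0}$ on $\M$ (all established prior to the proposition), while nonvanishing of the leading products uses only that $A$ is a domain. No deeper input is required.
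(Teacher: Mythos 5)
Your proof is correct and follows essentially the same route as the paper: apply $\Ad$ to commuting elements of $Q\subset R$, invoke the multiplicative identity of Lemma~\ref{Adlem} to get an equality of polynomials in $A[t]$, and compare top $t$-coefficients (using that $A$ is a domain so leading coefficients of products do not vanish). The paper only spells out the chain \eqref{Adeq} for \eqref{commsh} and remarks that \eqref{commid} is "proved in a similar way"; your write-up supplies the details of that similar argument explicitly, but there is no substantive difference.
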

\begin{proof}
The commutation relations \eqref{commid} and \eqref{commsh} are proved  in a similar way, using the identity \eqref{Adid} of Lemma~\ref{Adlem}. For example, to prove \eqref{commsh} we apply  \eqref{Adid} 
to $ x = q \in Q $ and $ y = s  $ as in (1b):
\begin{equation}
\la{Adeq}
\Ad_{L}(q)\,\Ad_{L, L_0}(s)\,= \,\Ad_{L, L_0}(qs) \, =\,  
\Ad_{L, L_0}(sq)\,=\, \Ad_{L,L_0}(s) \,\Ad_{L_0}(q)
\end{equation}
Notice that, by ad-nilpotency, all the $\Ad$'s in equation \eqref{Adeq} take values in the polynomial ring
$A[t]$. Then, comparing the leading coefficients of polynomials in both sides of  \eqref{Adeq} gives precisely the identity \eqref{commsh}. Also, comparing the degrees (in $t$) of these polynomials shows that $\,N_{q} = N_{q,0} \,$ in \eqref{Loq} and \eqref{Lq}.
\end{proof}
As a consequence of (the proof of) Proposition~\ref{commprop}, we have
\begin{cor}
\la{commcor}
The ad-nilpotent filtrations \eqref{filtl} defined by $L_0$ and $L$ on the algebras $B$ and $\D$ induce the same filtration on their commutative subalgebra $Q \subset B \,\cap\,\D $. The associated graded algebra $\grd(Q)$ embeds into $B$ and $\D$ via the mappings $q\mapsto L_{q,0}$ and $q\mapsto L_{q}$, respectively. Thus, the operators $\{L_{q,0}\} $ and $\{L_{q}\} $ generate two commutative subalgebras in $B$ and $\D$, each isomorphic to  $\,\grd(Q)$.
\end{cor}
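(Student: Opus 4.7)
The plan is to establish three things in order: (i) for every $q\in Q$ the two valuations $N_{q,0}=\deg_{L_0}(q)$ and $N_q=\deg_L(q)$ coincide, so that the filtrations $F_\ast B|_Q$ and $F_\ast \D|_Q$ are the same filtration on $Q$; (ii) the assignment $q \mapsto L_{q,0}$ (respectively $q\mapsto L_q$) upgrades to an algebra homomorphism $\grd(Q) \to B$ (respectively $\grd(Q)\to \D$); (iii) both maps are injective, and their images are commutative by Proposition~\ref{commprop}. Step (i) is essentially already contained in the proof of Proposition~\ref{commprop}: the identity
\[
\Ad_L(q)\,\Ad_{L,L_0}(s) \;=\; \Ad_{L,L_0}(s)\,\Ad_{L_0}(q)
\]
holds in $A[t]$ for the element $s\in S$ of condition (1b), and since the leading coefficient of $\Ad_{L,L_0}(s)$ is the nonzero operator $S$ of \eqref{shiftop} and $A$ is a domain, comparing $t$-degrees on both sides forces $N_q=N_{q,0}$. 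Hence the two ad-nilpotent filtrations restrict to a single, well-defined $\N$-filtration $F_\ast Q$.

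For step (ii), denote this common filtration by $F_\ast Q$ and define
\[
\varphi_0\colon \grd(Q)\longrightarrow B,\qquad \sigma_n(q)\longmapsto L_{q,0}=\tfrac{1}{n!}\ad_{L_0}^n(q)\quad\text{for } q\in F_nQ\setminus F_{n-1}Q,
\]
extended additively. Multiplicativity is the key computation: since $\ad_{L_0}$ is a derivation on $B$, the iterated Leibniz formula gives
\[
\ad_{L_0}^{n_1+n_2}(q_1 q_2)\;=\;\sum_{k=0}^{n_1+n_2}\binom{n_1+n_2}{k}\ad_{L_0}^{k}(q_1)\,\ad_{L_0}^{n_1+n_2-k}(q_2),
\]
and for $q_i\in F_{n_i}Q\setminus F_{n_i-1}Q$ every term except $k=n_1$ vanishes; dividing by $(n_1+n_2)!$ yields $L_{q_1q_2,0}=L_{q_1,0}L_{q_2,0}$. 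The same calculation shows $\deg_{L_0}(q_1q_2)=n_1+n_2$, so the filtration on $Q$ is multiplicative and $\grd(Q)$ is a commutative domain. The identical argument with $L$ in place of $L_0$ — legitimate because $L$ acts locally ad-nilpotently on $\D$, as was observed immediately after \eqref{shiftop} — produces a second homomorphism $\varphi\colon \grd(Q)\to \D$, $\sigma_n(q)\mapsto L_q$.

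For step (iii), injectivity of $\varphi_0$ is automatic: if $L_{q,0}=0$ for $q\in F_n Q\setminus F_{n-1}Q$ then $\ad_{L_0}^n(q)=0$, contradicting $\deg_{L_0}(q)=n$; likewise for $\varphi$. The commutativity of the images $\varphi_0(\grd(Q))\subseteq B$ and $\varphi(\grd(Q))\subseteq \D$ is precisely \eqref{commid}. Combining these, $\grd(Q)$ embeds as commutative subalgebras into $B$ and into $\D$ via $q\mapsto L_{q,0}$ and $q\mapsto L_q$, as required. The only subtle step is (i): once the shift identity $LS=SL_0$ with $S\ne 0$ is used to rule out degree-drops in the domain $A$, everything else reduces to the standard Leibniz-rule bookkeeping already implicit in the proofs of Lemma~\ref{filtl} and Proposition~\ref{commprop}.
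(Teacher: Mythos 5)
Your proposal is correct and takes essentially the same route as the paper, which gives no separate proof but simply records the Corollary as a consequence of the proof of Proposition~\ref{commprop}: the equality $N_q=N_{q,0}$ from the degree comparison in $\Ad_L(q)\Ad_{L,L_0}(s)=\Ad_{L,L_0}(s)\Ad_{L_0}(q)$ gives that the two filtrations agree on $Q$, the multiplicativity of the degree function (already proved in Lemma~\ref{filtl} via the same iterated Leibniz computation you use) together with comparison of leading terms yields the algebra embedding of $\grd(Q)$, and commutativity of the images is \eqref{commid}. Your elaboration of the iterated-Leibniz bookkeeping and the well-definedness on graded pieces correctly fills in what the paper leaves implicit.
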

\begin{remark}
The operators $L_0 $ and $L$, although commuting with $ L_{q,0} \in B $ and $ L_{q} \in \D $ for all $ q \in Q $, may not belong to the subalgebras generated by these last operators. Thus, the commutative subalgebras
generated by the families  $\{L_{q,0}\}_{q \in Q}$ and $\{L_{q}\}_{q \in Q} $ need not be maximal in general. For explicit (counter)examples, see Section~\ref{S9.2}.
\end{remark}


\section{Proof of Main Results}
\la{S5}
The first three parts of our main Theorem~\ref{gaic} follow from Theorem~\ref{IT} and Proposition~\ref{commprop}. To apply these general results
we need to verify their assumptions for locus configurations. This is done in Section~\ref{S5.1}. 
The last part of Theorem~\ref{gaic} is proven separately as  
Proposition~\ref{mad} in Section~\ref{S5.2}. In Section~\ref{S5.3}, we apply 
the results of Section~\ref{S4.1} (in particular, Proposition~\ref{fatrefl}) to ideals of 
Cherednik algebras  associated to locus configurations. 

\subsection{The space $\U $ and the ideal $ \M $ associated to $ \A$}
\la{S5.1}
Given a locus configuration $\A$ of type $W$, consider the polynomial $ \delta_k \in\c[V]^W$ defined by\footnote{The polynomial \eqref{del} should not be confused with the discriminant of the Coxeter group $W$, i.e. $\, \prod_{\alpha\in R_+}(\alpha, x) \,$, which is  also 
denoted frequently by $\delta$ in the literature.}
\begin{equation}
\la{del}
\delta_k :=\prod_{\alpha\in\A_+\setminus R}(\alpha,x)^{k_\alpha}\,.
\end{equation}
The fact that $\delta_k $ is $W$-invariant follows from the $W$-invariance of $\mathcal A$ and $k_\alpha$: indeed, we must have $\delta_k(s_\alpha x)=\pm \delta_k(x)$ for any $\alpha\in R$, but $\delta_k(s_\alpha x)=-\delta_k(x)$ is impossible since $\delta_k$ does not vanish along $(\alpha, x)=0$ for $\alpha\in R$.
The set $ S = \{1,\,\delta_k,\, \delta_k^2,\,\ldots\}$ is a two-sided Ore subset in the Cherednik algebra $ H_k $, and we write $H_k[\delta_k^{-1}]$ and $B_k[\delta_k^{-1}]$ for $H_k$ and $B_k$ localised at $S$. By \eqref{HC1}, $\,B_k\subset \D(\vreg)^W$, thus the algebras $ B:=B_k $, $\,R:=\c[V]^W$ and the set
$S \subset R $ satisfy the assumptions of Section \ref{pro}. Note that the quotient filed $\mathbb{K}$ of $R$ is $\c(V)^W$, hence $\D(\mathbb K)$ is the ring of $W$-invariant differential operators on $V$ with rational coefficients.

Let $L_0=L_W$ and $L=L_\A$ denote the Calogero--Moser operators \eqref{cmo} and \eqref{gcmu}, respectively; clearly, $L_0\in B_k$ and $L\in B_k[\delta_k^{-1}]$. The operator $L_0$ acts on $B_k$ locally ad-nilpotently (see \cite[Lemma 3.3(v)]{BEG}), so by Lemma \ref{filtl} we can associate to it a degree function on $B_k$ and $B_k[\delta_k^{-1}]$. Moreover, by Corollary 4.9 of {\it loc. cit.}, for any $f\in\c[V]^W$, $\deg_{L_0}f$ equals the usual homogeneous degree of $f$. This means that the number $N_{q,0}$ in \eqref{Loq} equals the degree of $q\in\c[V]^W$. In fact, by comparing the leading terms, one has the following formula, see \cite[(6.5)]{BEG}:
\begin{equation}\label{loqc} 
L_{q,0}:=\Res(\e T_q\e)=\frac{1}{2^NN!}\ad_{L_0}^Nq\,,\qquad  N=\deg q\,.
\end{equation}
Now, since $P=L-L_0$ is a rational $W$-invariant function of degree $-2$, we conclude that $\deg_{L_0}(L-L_0)=-2$ . This verifies the condition (2) of Theorem \ref{sheq}. 

Next, we have $B_k(\c[\vreg]^W)\subset \c[\vreg]^W$. Moreover, any $a\in B_k[\delta_k^{-1}]$ that preserves $\c[\vreg]^W$ must be regular away from the reflection hyperplanes of $W$, hence, $a$ must necessarily lie in $B_k$. This proves that  
\begin{equation}\label{dak00}
B_k=\{a\in B_k[\delta_k^{-1}]\,\mid\, a(\c[\vreg]^W)\subset \c[\vreg]^W\}\,. 
\end{equation}
This means that $U_0:=\c[\vreg]^W$ satisfies the assumptions of Theorem \ref{IT}.

Finally, we define the most important ingredient: the subspace $ \U = \U_{\A} $ attached to the 
operator $ L_{\A}$. We let $ \U_{\A} $ to be the subspace of $ \delta_k^{-1}\c[\vreg]^W $
that consists of functions $f$ satisfying
\begin{equation}
\label{loc11}
f(s_\alpha x) -(-1)^{k_\alpha}f(x) \ \text{is divisible by}\ (\alpha,x)^{k_\alpha}\quad\forall\alpha\in\A_+\setminus R\,.
\end{equation}
It is clear from this definition that
\begin{equation}
\la{uq}
\delta_k\,\c[\vreg]^W\subseteq \U_{\A}\subseteq \delta_k^{-1}\c[\vreg]^W\,,\qquad
\qaw\, \U_{\A}\subset \U_{\A}\,.
\end{equation}
Note that the above properties of $ \U_{\A} $ are generic: they hold without assuming the locus relations \eqref{loc}.
The next lemma establishes the two crucial properties of $\U_\A$ that do depend on  \eqref{loc}.

\begin{lemma}[{\it cf.} \cite{C98, CEO}] \la{inu} The space $\U_\A$ is invariant under the action of $L_\A$. Moreover, $\U_\A$ is maximal among all subspaces $\,U$ with the properties that $\,U\subseteq \delta_k^{-r}\c[\vreg]^W$ for some $r>0$ and $\,L_\A(U)\subseteq U\,$.
\end{lemma}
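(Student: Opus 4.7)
The plan is to verify both assertions by a local Laurent-expansion analysis at a generic point of each hyperplane $H_\alpha=\{(\alpha,x)=0\}$ with $\alpha\in\A_+\setminus R$, using the normal coordinate $t=(\alpha,x)$. For each such $\alpha$ I would decompose every function into its $s_\alpha$-eigencomponents, writing $g=g_++g_-$ with $s_\alpha g_\pm=\pm g_\pm$, and set $\epsilon=(-1)^{k_\alpha}$. A short calculation shows that the defining condition \eqref{loc11} of $\U_\A$ is equivalent to the ``wrong-parity'' component $f_{-\epsilon}$ being divisible by $t^{k_\alpha+1}$ (together with $f\in\delta_k^{-1}\c[\vreg]^W$), while the locus relation \eqref{loc} is equivalent to $u_\A^-$ being divisible by $t^{2k_\alpha+1}$ (since an $s_\alpha$-antisymmetric rational function divisible by the even power $t^{2k_\alpha}$ is automatically divisible by the next odd power). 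Correspondingly $L_\A=L_\A^++L_\A^-$ where $L_\A^+:=\Delta-u_\A^+$ commutes with $s_\alpha$ (hence preserves $s_\alpha$-parity) and $L_\A^-:=-u_\A^-$ swaps it.

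For the invariance $L_\A(\U_\A)\subseteq\U_\A$, the wrong-parity component of $L_\A f$ equals $(L_\A f)_{-\epsilon}=L_\A^+ f_{-\epsilon}-u_\A^-f_\epsilon$. The second summand is divisible by $t^{(2k_\alpha+1)+(-k_\alpha)}=t^{k_\alpha+1}$ by the locus condition. For the first, I would use the fact that the indicial polynomial of $L_\A$ at $H_\alpha$ is $n(n-1)-k_\alpha(k_\alpha+1)=(n-k_\alpha-1)(n+k_\alpha)$, with roots $k_\alpha+1$ and $-k_\alpha$. Writing $f_{-\epsilon}=a_{k_\alpha+1}(y)\,t^{k_\alpha+1}+O(t^{k_\alpha+3})$, the a priori divergent $t^{k_\alpha-1}$ coefficient of $L_\A^+ f_{-\epsilon}$ receives a kinetic contribution $|\alpha|^2(k_\alpha+1)k_\alpha\,a_{k_\alpha+1}$ and a potential contribution $-k_\alpha(k_\alpha+1)|\alpha|^2\,a_{k_\alpha+1}$ from the $1/t^2$ pole of $u_\A^+$, and these cancel exactly. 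An analogous indicial cancellation at the other exponent $-k_\alpha$ shows that $(L_\A f)_\epsilon$ retains a pole of order at most $k_\alpha$, so $L_\A f\in\delta_k^{-1}\c[\vreg]^W$.

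For the maximality, given $U\subseteq\delta_k^{-r}\c[\vreg]^W$ with $L_\A(U)\subseteq U$ and $f\in U$, I would first show $f\in\delta_k^{-1}\c[\vreg]^W$ by contradiction. If the minimum exponent $n_*$ of $f$ at $H_\alpha$ satisfies $n_*<-k_\alpha$, then an induction on $m$ identifies the leading coefficient of $L_\A^mf$ at $t^{n_*-2m}$ as
\[
|\alpha|^{2m}\prod_{j=0}^{m-1}(n_*-2j-k_\alpha-1)(n_*-2j+k_\alpha)\cdot a_{n_*}(y),
\]
which is nonzero for all $m$ since both factors remain strictly negative throughout the iteration. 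The pole order $2m-n_*$ then exceeds $rk_\alpha$ for large $m$, contradicting $L_\A^m f\in U$. With $r=1$ now in force, an entirely analogous argument applied to the wrong-parity part $f_{-\epsilon}$ completes the proof: if its leading exponent $n_*^{-\epsilon}\leq k_\alpha-1$ (necessarily wrong-parity, hence distinct from both indicial roots $k_\alpha+1$ and $-k_\alpha$), then the $u_\A^-f_\epsilon$ contribution to $(L_\A f)_{-\epsilon}$ is of order $\geq k_\alpha+1>n_*^{-\epsilon}-2$ and cannot cancel the nonzero indicial contribution from $L_\A^+ f_{-\epsilon}$; iterating, the wrong-parity pole order of $L_\A^mf$ grows without bound, contradicting $L_\A^mf\in U\subseteq\delta_k^{-1}\c[\vreg]^W$.

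The main obstacle I anticipate is precisely this interaction between $L_\A^+f_{-\epsilon}$ and $u_\A^-f_\epsilon$ during iteration: without first reducing to $r=1$, one cannot bound the Laurent order of $f_\epsilon$ well enough to rule out cancellations at the critical order, and the parity step breaks down. Establishing the pole-order bound (Step 1) before attacking the parity condition (Step 2) is therefore essential to the argument.
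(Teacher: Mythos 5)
Your proof is correct and follows essentially the same strategy as the paper's (which is only a terse sketch referring to \cite[Prop.~5.1]{CEO}): local Laurent analysis at generic points of each $H_\alpha$ with $\alpha\in\A_+\setminus R$, exploitation of the $s_\alpha$-parity structure and the indicial roots $-k_\alpha$ and $k_\alpha+1$, and iteration of $L_\A$ to produce unbounded pole growth for maximality. Your two-step reduction (first bounding the pole order at $H_\alpha$ to $k_\alpha$, then handling the parity condition) is a useful clarification of an argument the paper leaves implicit.
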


\begin{proof} 
The first claim in the case $W=\{e\}$ goes back to \cite{C98} while the second is a slight reformulation of \cite[Proposition 5.1]{CEO}. The same arguments work for the general $W$. Namely, one works `locally' with Laurent expansions in direction $\alpha$, for each $\alpha\in\A_+\setminus R$. Functions $f\in\U_\A$ are then characterised precisely by the property that their Laurent expansions contain terms $(\alpha, x)^j$ with $j\in \{-k_\alpha+2\Z_{\ge 0}\}\cup\{k_\alpha+1+2\Z_{\ge 0}\}$ only. On the other hand, the locus relations \eqref{loc} mean that in a similar Laurent expansion of $u$ there are no terms of degree $1, 3, \dots, 2k_\alpha -1$. The invariance of $\U_\A$ under $L_\A$ immediately follows from that. Moreover, if $f\notin \U_\A$, then one can repeatedly apply $L_\A$ to $f$ and obtain a function with a pole of an arbitrarily large order. This, in its turn, would violate the condition $\,U\subseteq \delta_k^{-r}\c[\vreg]^W$, thus proving that $\U_\A$ is maximal. See the proof of \cite[Proposition 5.1]{CEO} for the details.
\end{proof}
\begin{remark}
\la{inu0}
A result similar to Lemma~\ref{inu} hold for $L_0 = L_W$: namely, $\c[\vreg]^W$ is maximal among all subspaces satisfying $U\subset\delta_k^{-r}\c[\vreg]^W$ for some $r>0$ and $L_0(U)\subset U$. The proof is similar (formally, it corresponds to setting $k_\alpha=0$ in the above arguments). 
\end{remark}

With the above choices, definitions \eqref{MB}, \eqref{D} become 
\begin{eqnarray}
\label{mak}
\M_{\A}&=&\{a\in B_k[\delta_k^{-1}]\,\mid \, a(\c[\vreg]^W)\subset \U_{\A}\}\,,
\\
\label{dak}
\D_{\A}&=&\{a\in B_k[\delta_k^{-1}]\,\mid\, a(\U_{\A})\subset \U_{\A}\}\,. 
\end{eqnarray}
Note $\M_\A$ is a right $B_k$-module, and $\D_\A$ is a ring; we can also view $\M_\A$ as a $\D_\A$-$B_k$-bimodule. 
It is clear that $L\in \D_\A$ and, by \eqref{uq}, 
\begin{equation}
\label{db}
\delta_k \in\M_{\A}\,,\quad \delta_k B_k \subset \M_{\A} \subset \delta_k^{-1}B_k\,,
\qquad \delta_k\D_\A\delta_k \subset B_k\,.
\end{equation} 
By (the proof of) Proposition~\ref{fatrefl}, the operators $\ad_L$ and $\ad_{L, L_0}$ act locally nilpotently on $\D_\A$ and $\M_\A$, respectively. In fact, the $\M_\A$ and $\D_\A$ can be characterised 
intrinsically as the maximal subspaces of $B_k[\delta^{-1}]$ on which these operators act locally 
nilpotently (see Remark \ref{intri}).
%
%

\vspace*{2ex}

Summing up, given a locus configuration of type $W$, the following data
\begin{gather*}\label{ing}
R=\c[V]^W\,,\  S=\{1,\delta_k, \delta_k^2, \dots\}\,,\  B=B_k\,,\  A=B_k[\delta_k^{-1}]\,,\\
L_0=L_W\,,\ L=L_\A\,,\ U_0=\c[\vreg]^W\,,\  U=U_\A\,,\ \M = \M_\A\,,\ \D = \D_\A
\end{gather*}
satisfy the assumptions of Theorem~\ref{IT} and Proposition~\ref{fatrefl}; hence, all results 
proved in Section \ref{pro} can be applied to locus configurations.

\subsection{Proof of Theorem \ref{gaic}}
\la{S5.2}
Parts $(1)$ and $(3)$ are immediate from Theorem \ref{IT}. Note that
the shift operator $S\in\M_\A$ can be chosen in the form 
\begin{equation}\label{shiftopd}
    S=\frac{1}{2^NN!}\,\ad_{L,L_0}^N(\delta_k)\,.
\end{equation}
where $ N = \deg(\delta_k) $. Indeed, by an elementary calculation ({\it cf.} \cite{B98}),
\begin{equation*}
S=\prod_{\alpha\in\A_+\setminus R} (\alpha, \partial)^{k_\alpha}+\ldots\,,
\end{equation*}
where ``$ \ldots $'' denote the lower order terms. Hence $S\ne 0$. On the other hand, a simple argument based on the nilpotency of $\ad_{L, L_0}$ on $\M_\A$ and the $x$-filtration on $B_k[\delta_k^{-1}]$, shows that $\ad_{L, L_0}(S)=0$ (see \cite{C98}), which means that $LS=SL_0$.

Part (2) is the result of Proposition \ref{commprop} combined with \eqref{loqc}. Indeed, it follows that the commuting differential operators $L_q$, $q\in \qaw$, can be given by 
\begin{equation}\label{lq1}
L_q=\frac{1}{2^{r}r!}\ad_{L}^{r}q\,,
\qquad r=\deg q\,.
\end{equation}
Furthermore, since the ring $\qaw$ is graded, we may replace $\grd{\qaw}$ with $\qaw$ and get the required algebra embedding $\theta\,:\ \qaw\into \D(V\setminus H_\A)^W$.

In remains to prove part $(4)$: namely, that $\theta(\qaw)$ is a maximal commutative subalgebra in $\D(V\!\setminus\! H_\A)^W$. We will prove a slightly stronger statement. Recall the ring $\D(\mathbb K)$ of differential operators on the field $\mathbb K=\c(V)^W$.
\begin{prop} 
\label{mad} 
$\theta(\qaw)$ is a maximal commutative subalgebra of $\D(\mathbb K)$.
\end{prop}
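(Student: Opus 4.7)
The plan is to argue by induction on the order of an operator $M \in \D(\k)$ centralizing $\theta(\qaw)$, and to show $M \in \theta(\qaw)$. First I would identify $\D(\k) \cong \D(\c(V))^W$, so that $M$ appears as a $W$-invariant differential operator with rational coefficients; let $\sigma_m(M) \in \c(V)[V^*]$ denote its principal symbol of order $m$.

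The first task is to show $\sigma_m(M) \in \c[V^*]^W$. Since each $L_q$ with $q \in \qaw$ has constant-coefficient principal symbol $q \in \c[V^*]^W$, the commutation $[M, L_q] = 0$ gives, at top order, the Poisson relation $\{\sigma_m(M), q\} = 0$ in $\c(V)[V^*]$. By Theorem~\ref{gaic}(3) one may pick $n = \dim V$ algebraically independent elements $q_1, \dots, q_n \in \qaw$, and the resulting system $\sum_j (\partial_{\xi_j} q_i)(\partial_{x_j} \sigma_m(M)) = 0$ for $i = 1, \dots, n$ will force $\partial_{x_j} \sigma_m(M) = 0$ at every point where the Jacobian $(\partial_{\xi_j} q_i)$ is invertible, which is a generic condition on $V^*$. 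Combined with the $W$-invariance inherited from $M \in \D(\k)$, this gives $\sigma_m(M) \in \c[V^*]^W$.

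The main step is to upgrade this to $\sigma_m(M) \in \qaw$, i.e.\ to verify the quasi-invariance of $\sigma_m(M)$ along each hyperplane $(\alpha, x) = 0$ with $\alpha \in \A_+ \setminus R$. Here the plan is first to show $M(\U_\A) \subseteq \U_\A$, and then to extract quasi-invariance from a Laurent analysis. For the inclusion: since $M$ commutes with $L_\A$, the subspace $M(\U_\A) \subseteq \k$ is $L_\A$-invariant, so by the maximality clause of Lemma~\ref{inu} it suffices to verify $M(\U_\A) \subseteq \delta_k^{-r}\,\c[\vreg]^W$ for some $r > 0$. This reduces to ruling out poles of $M$ outside the hyperplanes in $R \cup \A$: any spurious pole of $M$ would, through the commutator with $L_\A$, produce an incompatible singularity, contradicting $[M, L_\A] = 0$. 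Once $M(\U_\A) \subseteq \U_\A$ is in hand, a Laurent expansion of $M(\delta_k) \in \U_\A$ transverse to $(\alpha, x) = 0$, compared with the parity characterisation of $\U_\A$ recorded in the proof of Lemma~\ref{inu}, will extract the required divisibility $\sigma_m(M)(x) - \sigma_m(M)(s_\alpha x) \in (\alpha, x)^{2k_\alpha}\,\c[V]$.

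With $q_0 := \sigma_m(M) \in \qaw$, the operator $M - L_{q_0}$ still centralizes $\theta(\qaw)$ and has strictly smaller order than $M$; induction on order -- with trivial base case $\ord M = 0$, where a $W$-invariant rational function commuting with $L_\A \in \D(V \setminus H_\A)^W$ is forced to be a scalar -- will then conclude. The hard part will be the second step: controlling the poles of $M$ away from $H_\A$ and the reflection hyperplanes of $W$, and carrying out the local Laurent analysis at each non-Coxeter hyperplane using the locus relations.
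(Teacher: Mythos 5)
Your overall skeleton matches the paper's argument: induction on order, use the Poisson-commutativity of the symbol with $n$ algebraically independent $q_i$ to conclude the symbol is constant-coefficient and $W$-invariant, then argue quasi-invariance of the symbol and subtract $L_q$. The paper phrases this as a minimal-counterexample argument, but that is the same induction.

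Where you diverge, there are two genuine gaps. First, you work with $\U_{\A}$, which is rigid: it demands $W$-invariance and regularity away from $H_{\A}$, forcing you into the "spurious pole" problem that you acknowledge but do not resolve. The paper sidesteps this entirely by introducing (in Lemma~\ref{invu}) the larger space $U$ consisting of rational functions that need not be $W$-invariant and are allowed \emph{arbitrary} singularities away from $H_{\A}$; the only constraints are local parity conditions at the hyperplanes $H_\alpha$, $\alpha\in\A_+\setminus R$. With that $U$, the statement $M(U)\subseteq U$ for $[M,L_\A]=0$ is a purely local assertion at $H_\A$, proved in \cite[Prop.~5.1]{CEO} by Laurent analysis, and no pole control elsewhere is needed. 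Second, and more seriously, your proposal to extract quasi-invariance of $\sigma_m(M)$ from a Laurent expansion of $M(\delta_k)$ is unlikely to work as stated: $M(\delta_k)$ mixes contributions from all coefficients of $M$, not just the principal symbol, so parity of $M(\delta_k)$ gives you a single mixed condition rather than the divisibility of $q_0(x)-q_0(s_\alpha x)$. The paper's key device is the identity $\ad_{x^2}^{r}(M)=\pm 2^{r}r!\,q_0(x)$ (with $r=\ord M$), which cleanly isolates the principal symbol as a multiplication operator. Since $x^2\in\qaw$ preserves $U$ and, by Lemma~\ref{invu}, so does $M$, the iterated $\ad_{x^2}$ preserves $U$ as well, giving $q_0(x)\,U\subseteq U$, and hence $q_0\in\qaw$ by the parity characterisation of $U$. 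Without this trick, the step from ``$M$ preserves $U$'' to ``$\sigma_m(M)$ is quasi-invariant'' is not established by your argument.
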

To prepare the proof, introduce $U\subset \mathbb K$ as the subspace of rational functions $f$ which (1) are allowed a pole of order at most $k_\alpha$ along each of the hyperplanes $H_\alpha=\Ker(1-s_\alpha)$ with $\alpha\in\A_+\setminus R$, and (2) satisfy the conditions \eqref{loc11}. The difference with the definition of the space $U_\A$ above is that the $W$-invariance of $f$ is not assumed and $f$ is allowed arbitrary singularities away from $H_\A$. Still, the property $L_\A(U)\subset U$ from Lemma \ref{inu} remains valid, because it was based on local analysis. 

\begin{lemma}\label{invu} If $a\in\D(\mathbb K)$ commutes with $L=L_\A$ then $a(U)\subset U$. 
\end{lemma}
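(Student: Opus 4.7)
The plan is to argue by contradiction, using the commutation $[a, L] = 0$ to transfer pole-order bounds from $L^n f$ to $L^n(af)$, and then to invoke the local analysis behind Lemma~\ref{inu} to preclude $af \notin U$.

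First I would recall the key ingredient from the proof of Lemma~\ref{inu}: working near a generic point of each hyperplane $H_\alpha$ with $\alpha \in \A_+ \setminus R$, set $t = (\alpha,x)$ and analyse Laurent expansions in $t$. The locus relations \eqref{loc} are equivalent to the vanishing of the Laurent coefficients $u_1, u_3, \ldots, u_{2k_\alpha - 1}$ of $u_\A$; combined with the fact that the indicial roots of $\partial_t^2 - k_\alpha(k_\alpha+1)/t^2$ are exactly $-k_\alpha$ and $k_\alpha + 1$, this yields not only $L_\A(U) \subseteq U$ (which does not use $W$-invariance), but also the sharper statement that if $g \notin U$ --- because either $g$ has a pole of order $> k_\alpha$ along some $H_\alpha$ or its Laurent expansion contains a nonzero coefficient at an exponent $j_0$ with $-k_\alpha \leq j_0 \leq k_\alpha$ and $j_0 \not\equiv k_\alpha \pmod{2}$ --- then the pole orders of $L^n g$ along $H_\alpha$ grow without bound in $n$.

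Now assume $f \in U$ and $[a,L] = 0$, but $af \notin U$. Since $L^n f \in U$ for every $n \ge 0$, the pole order of $L^n f$ along any $H_\alpha$ with $\alpha \in \A_+ \setminus R$ is at most $k_\alpha$. The operator $a \in \D(\mathbb K)$ is a fixed differential operator of finite order with coefficients in $\c(V)^W$, so there is a constant $C$ depending only on $a$ and $\alpha$ such that applying $a$ to a rational function raises its pole order along $H_\alpha$ by at most $C$. Using $[a,L] = 0$ we obtain
\[
L^n(af) \,=\, a\,L^n f,
\]
whose pole order along $H_\alpha$ is therefore bounded by $k_\alpha + C$ uniformly in $n$. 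This contradicts the local dichotomy of the previous paragraph, which would force $L^n(af)$ to have unbounded pole order along some $H_\alpha$. Hence $af \in U$.

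The main obstacle is the sharp local claim that $L$ generates unboundedly growing poles when applied to functions outside $U$. This is essentially the content of \cite[Proposition 5.1]{CEO}, reproved in the $W$-invariant setting in Lemma~\ref{inu}; since the argument is purely local at a generic point of each $H_\alpha$ and depends only on the indicial analysis of $\partial_t^2 - k_\alpha(k_\alpha+1)/t^2$ together with the locus relations, it transfers verbatim to the non-$W$-invariant space $U$ considered here.
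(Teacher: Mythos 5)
Your proof is correct and takes essentially the same approach as the paper's source: the paper simply refers to \cite[Proposition 5.1]{CEO}, whose argument is precisely the combination you spell out — the local pole-growth dichotomy for $L^n g$ when $g\notin U$ (the same fact underlying the maximality claim in Lemma~\ref{inu}), the identity $L^n(af)=a\,L^nf$, and the observation that a fixed operator $a\in\D(\mathbb K)$ raises pole order along each $H_\alpha$ by at most a bounded amount.
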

In the case $R=\varnothing $, this is \cite[Proposition 5.1]{CEO}, and the same proof works in general. \qed

%

\begin{proof}[Proof of Proposition \ref{mad}]

Suppose there is a differential operator $a$ of order $r$ which commutes with all $L_q\in\theta(\qaw)$ but such that $a\notin \theta(\qaw)$. Without loss of generality, we may assume that $a$ has the least order among all such operators. 

By part (3), there are $n=\dim V$ algebraically independent operators $L_1=L_{q_1},\dots, L_n=L_{q_n}$ with homogeneous $q_i\in\qaw$. Denote by $\grd (L_i)=q_i(\xi)$ their principal symbols with respect to the differential filtration. 
Let $a_0=\grd (a)$ be the principal symbol of $a$. As $a$ commutes with each $L_i$, its symbol $a_0(x, \xi)$ Poisson commutes with each of $q_i(\xi)$. Since $q_i$'s are $n$ algebraically independent Poisson commuting elements, $a_0$ must be a function of $\xi$ only. Therefore, $a$ has a constant principal symbol, i.e. $a=q(\partial)+\ldots$ for some homogeneous $W$-invariant polynomial $q(\xi)$.  
  
Let ${x^2}=(x,x)$ denote the quadratic polynomial corresponding to the $W$-invariant inner product on $V$. Clearly, ${x^2}\in\qaw$ and ${x^2}\,U\subset U$.  By Lemma \ref{invu}, $a(U)\subset U$ as well. By a straightforward calculation (cf. \eqref{loqc}),
\begin{equation*}
\ad_{{x^2}}^r(a)=2^rr!q(x)\,,\quad r=\deg q\,.
\end{equation*}
Hence, $q(x)U\subset U$ and so $q\in\qaw$. In that case $a':=a-L_q$ gives a lower order operator commuting with $\theta(\qaw)$, leading to a contradiction.
\end{proof}

\begin{remark} The above result and its proof remain valid if one replaces $\D(\mathbb K)$ with the ring of $W$-invariant differential operators with {\it meromorphic} coefficients. Moreover, if we assume additionally that $k_\alpha\notin\Z$ for all $\alpha\in R$, then we can further replace $\D(\mathbb K)$ with the ring of {\it all} differential operators with meromorphic coefficients. Proof goes in the same way until we get that $a=q(\partial)+\dots$ for some $q\in\c[V^*]$. Now we use that $[a, L_W]=0$, so by the result of \cite{T} the principal symbol $q(\xi)$ of $a$ must be $W$-invariant. The rest of the proof is unchanged. 

\end{remark}

\begin{remark}\la{rk}
For the commutative ring $\{L_q\,|\, q\in \qaw\}$ we can consider the eigenvalue problem 
\begin{equation}\la{ep}
L_q\psi=q(\lambda)\psi\,\quad  \forall\,q\in \qaw\,,
\end{equation}
where $\psi=\psi(x, \lambda)$ is a function of $x$ and the spectral variable $\lambda\in V$. The dimension of the solution space to \eqref{ep} for generic $\lambda$ is usually referred to as the \emph{rank} of the commutative ring (cf. \cite{BrEtGa}). By the arguments similar to those used in \cite[Section 3]{C08}, one shows that the solution space to \eqref{ep} has dimension equal to $|W|$, hence the commutative ring $\theta(\qaw)$ has rank $|W|$. When the group $W$ is trivial, $\theta(\qaw)$ has rank one, {\it cf.} \cite[Theorem 3.11]{C08}. 
\end{remark}

\subsection{Ideals of rational Cherednik algebras}
\la{S5.3}
In the case of Coxeter configurations, when $\A\subset \c^n$ is the (complexified) root system of a Coxeter group and all $k_\alpha$ are integers, the algebra of quasi-invairants $Q_\A$ is known to be Cohen--Macaulay (see \cite{FV}, \cite{EG1} and \cite{BEG}). In \cite{BEG}, this remarkable property of $Q_\A$ was deduced from the fact that the ring $\D_\A$ is Morita equivalent to the Weyl algebra $ \D(V) $; more precisely, Theorem~9.5 of \cite{BEG} states that  $ \D_{\A} \cong \End_{\D(V)}(\M_{\A}) $, $\,\M_\A$ being a projective ideal of $ \D(V) $. It is natural to ask if this last result holds for generalised locus configurations. Proposition~\ref{PPP} below shows that this is 'almost' the case.

Recall that, for  generalised locus configurations, we defined the ring $ \D_{\A} $ and the $\D_\A$-$B_k$-bimodule $ \M_{\A} $ (see \eqref{mak} and \eqref{dak}). The definition of $ \M_{\A} $ shows that it is isomorphic to a right ideal $ \M_x \subseteq B_k $ --- specifically, $\, \M_x = \delta_k \M_{\A}\, $ (see \eqref{db}) --- 
with the property $ \M_x \,\cap\,\c[V]^W\ne \{0\}$. Extending the terminology of \cite{BGK} and \cite{BW}, we call such ideals of $B_k$ {\it fat}. Besides $ \c[V]^W $, the algebra $B_k$ contains another distinguished (maximal) commutative subalgebra,  $ \c[V^*]^W $, consisting of operators
of the form $\,\Res(\e T_q \e)\,$ ({\it cf.} Theorem~\ref{ci}). Following \cite{BCM}, we will say that a fat ideal $ \M $ of $B_k$  is \emph{very fat} if, in addition, $\M$ is isomorphic to an ideal $\M_y\subseteq B_k$ with the property $\M_y\cap \c[V^*]^W\ne\{0\}$. 
Now, recall that every reflexive module is automatically projective, but the converse is not true. As observed in \cite{BCM}, the property of a reflexive module $\M $ to be {\it very fat} provides a `closer' approximation to the projectivity of $\M$. In fact, a very fat reflexive module $\M$ behaves like a projective module with respect to localization: the localized modules $ \M_x^{\rm loc} $ and $ \M_y^{\rm loc} $ obtained from $ \M $ by inverting the nonzero polynomials in $ \c[V]^W $ and $ \c[V^*]^W $ are both free modules\footnote{If $ \dim(V) = 1 $, every fat ideal is very fat, and moreover,
every very fat one is automatically projective. Unfortunately, this is not in general true when $ \dim(V) > 1 $: there exist fat ideals which are not very fat, and not every very fat ideal is projective (see \cite{BCM}).}. 

%

\begin{prop}
\la{PPP}
For any generalised locus configuration $\A$, the module $\M_{\A}$  
is a very fat reflexive ideal of $B_k $ with $\, \End_{B_k}(\M_{\A}) \cong \D_{\A} \,$.
\end{prop}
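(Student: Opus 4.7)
The first two assertions are direct applications of Proposition~\ref{fatrefl} to the data assembled in Section~\ref{S5.1}: $B_k$ is a Noetherian domain (standard, \emph{cf.}\ \cite{EG}), and the maximality of $U_\A$ required by that proposition is exactly Lemma~\ref{inu}. Fatness is tautological from \eqref{db}: $\M_x:=\delta_k\M_\A$ is a right ideal of $B_k$ containing $\delta_k^2\in\c[V]^W\setminus\{0\}$.

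For the very-fat claim, the plan is to construct a \emph{second} embedding $\M_\A\hookrightarrow B_k$ whose image meets $\c[V^*]^W$ nontrivially, via the dual shift operator. Introduce
\[
\NN_\A:=\{a\in B_k[\delta_k^{-1}]\,:\,a[U_\A]\subseteq\c[\vreg]^W\},
\]
the left-module analogue of $\M_\A$ from the proof of Proposition~\ref{fatrefl}. By Lemma~\ref{shiftl} (implication $(a)\Rightarrow(b)$), the shift identity $L_\A S=SL_W$ yields a nonzero $S^*\in\NN_\A$ with $L_W\,S^*=S^*\,L_\A$; explicitly one may take $S^*:=\frac{1}{2^N N!}\,\ad_{L_W,L_\A}^N(\delta_k)$, and the symmetric form of the argument used in Proposition~\ref{commprop} (apply Lemma~\ref{Adlem} with the roles of $L_\A$ and $L_W$ swapped) yields in addition
\[
L_{q,0}\,S^*\,=\,S^*\,L_q\qquad\text{for every }q\in\qaw.
\]
Because $\NN_\A\cdot\M_\A\subseteq B_k$ (proof of Proposition~\ref{fatrefl}(a)), left multiplication by $S^*$ defines an injective right-$B_k$-linear map $\M_\A\to B_k$, $a\mapsto S^*a$; injectivity follows from $S^*\ne 0$ and the fact that $B_k[\delta_k^{-1}]$ is a domain. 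Its image $\M_y:=S^*\M_\A\subseteq B_k$ is therefore a right ideal of $B_k$ isomorphic to $\M_\A$.

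It remains to exhibit a nonzero element of $\M_y\cap\c[V^*]^W$; the natural candidate is $S^*S$. Non-vanishing is automatic. Combining $L_qS=SL_{q,0}$ from Proposition~\ref{commprop} with the dual identity above yields, for every $q\in\qaw$,
\[
L_{q,0}\,(S^*S)\,=\,S^*L_qS\,=\,S^*S\,L_{q,0},
\]
so $S^*S$ centralises the commutative family $\{L_{q,0}:q\in\qaw\}$ inside $B_k$. By Theorem~\ref{gaic}(3), $\qaw$ contains $n=\dim V$ algebraically independent elements and embeds as a finite integral subring of the maximal commutative subalgebra $\c[V^*]^W\subset B_k$; hence the centraliser of $\{L_{q,0}\}_{q\in\qaw}$ in $B_k$ is contained in $\c[V^*]^W$, and $S^*S\in\c[V^*]^W\setminus\{0\}$ as required.

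The most delicate point will be the final centraliser identification: I intend to prove by induction on the order in the differential filtration that any $a\in B_k$ commuting with every $L_{q,0}$, $q\in\qaw$, lies in $\c[V^*]^W$. The inductive step passes to $\grd B_k\cong\c[V\oplus V^*]^W$ from \cite{EG}: Poisson-commutativity of $\sigma(a)$ with the symbols $q(\xi)\in\qaw(\xi)$ forces $x$-independence because the gradients $\nabla_\xi q_1,\ldots,\nabla_\xi q_n$ of $n$ algebraically independent quasi-invariants span generically, after which one subtracts the element of $\c[V^*]^W$ with matching principal symbol and iterates. All other ingredients above are either recorded in the excerpt or follow by the same techniques that already appear in the proof of Proposition~\ref{fatrefl}.
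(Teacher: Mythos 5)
Your proof is correct, but for the very-fatness claim you take a genuinely different route than the paper. Both proofs set up $\NN_\A$, take $S^*:=\tfrac{1}{2^N N!}\ad_{L_0,L}^N(\delta_k)\in\NN_\A$, and consider $\M_y:=S^*\M_\A\subseteq B_k$. Where they diverge is in identifying $S^*S$ as an element of $\c[V^*]^W$. The paper's argument is a short direct computation: since $\ad_{L,L_0}^{N+1}(\delta_k)=\ad_{L_0,L}^{N+1}(\delta_k)=0$, the twisted binomial identity \eqref{coad} applied to $\ad_{L_0}^{2N}(\delta_k^2)=\ad_{L_0}^{2N}(\delta_k\cdot\delta_k)$ collapses to a single term and gives the \emph{explicit} formula $S^*S=\tfrac{1}{2^{2N}(2N)!}\ad_{L_0}^{2N}(\delta_k^2)=L_{\delta_k^2,0}=\Res(\e T_{\delta_k^2}\e)$, which manifestly lies in $\c[V^*]^W$. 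You instead establish the intertwining identities $L_{q,0}S^*=S^*L_q$ (a valid extra consequence of Lemma~\ref{Adlem} not used in the paper), deduce that $S^*S$ centralises the whole commutative family $\{L_{q,0}\}_{q\in\qaw}$, and then reduce to the assertion that this centraliser in $B_k$ equals $\c[V^*]^W$, which you plan to prove by a Poisson-commutativity induction on the differential filtration, very much in the spirit of the proof of Proposition~\ref{mad}. That induction does go through (the key facts are that $\sigma$ sends commutators to Poisson brackets, that $n$ algebraically independent $q(\xi)\in\c[V^*]^W$ force a commuting symbol to be $\xi$-only, and that $L_{p,0}$ for $p\in\c[V^*]^W$ commutes with all $L_{q,0}$, so one can subtract and iterate), so there is no gap in principle; but it is a longer and less self-contained argument, and it leaves that centraliser identification as a sketch rather than a lemma. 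The trade-off: your route yields the extra qualitative statement that $S^*S$ centralises $\theta_0(\qaw)$, while the paper's route yields the exact identity $S^*S=\Res(T_{\delta_k^2})$ with much less machinery — in particular, no appeal to Theorem~\ref{gaic}(3) or any Poisson argument is needed.
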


\begin{proof}
The facts that $\M_\A$ is reflexive and $ \D_{\A} \cong \End_{B_k}(\M_{\A}) $
follow from Proposition \ref{fatrefl}. We need only to prove that $\M_\A$ is very fat.
Following \eqref{nn}, consider
\begin{equation*}
\NN=\{a\in B_k[\delta_k^{-1}]\,\mid \, a(U_\A)\subset \c[\vreg]^W\}\,.
\end{equation*}
This is a $B_k$\,-\,$\D_{\A}$ bimodule. We can see directly from \eqref{dak00} that $\NN \M\subset B_k$.  In fact, by Remark \ref{intri}, $\NN$ is isomorphic as a left $B_k$-module to $\M^*$, the dual of $\M$. 
Take 
\begin{equation}\label{s8}
S^*=\frac{1}{2^NN!}\ad_{L_0, L}^N(\delta_k)\,,\qquad N=\deg\delta_k\,.
\end{equation} 
By \eqref{uq}, $\delta_k$ belongs to $\NN$, hence $S^*\in\NN$. We also have $S^*L=L_0S^*$, which is proved by the same arguments as part (1) of Theorem \ref{IT}. (Alternatively, this is obtained from $LS=SL_0$ after taking  formal adjoints.)  
Now, let $\M_y:=S^*\M_\A$. Since $S^*\in \NN$, we get 
\begin{equation*}
\M_y\subset \NN\M_\A\subset B_k\,.  
\end{equation*}
On the other hand, the relations $LS=SL_0$ and $S^*L=L_0S^*$ imply that $\ad_{L_0,L}^{N+1}(\delta_k)=\ad_{L,L_0}^{N+1}(\delta_k)=0$.
Using this fact and \eqref{coad}, we obtain that
\begin{equation*}
    S^*S=\frac{1}{2^{2N}(2N)!}\ad_{L_0}^{2N}(\delta_k^2).
\end{equation*}
This is one of the operators appearing in \eqref{loqc}, namely, $L_{\delta_k^2, 0}=\Res(T_{\delta_k^2})$.
We conclude that $\M_y\cap \c[V^*]^W\ne\{0\}$, so $\M_\A$ is very fat.   
\end{proof}
\begin{remark} 
As a special case, the above results apply to locus configurations $\A\subset\c^n$ with $R=\varnothing $. The Cherednik algebra $B_k$ in that case is simply the $n$th Weyl algebra $A_n$. Thus, such locus configurations produce very fat reflexive ideals $\M_\A$ of $A_n$. For a detailed study of fat/very fat reflexive ideals of $A_n$
we refer to \cite{BCM}.
\end{remark}

For some configurations (including those that appear in \cite{SV}) there is another natural way to associate ideals of Cherednik algebra. Namely, suppose that for each $\alpha\in\A_+\setminus R$ the following conditions are satisfied:
\begin{equation}\la{mi2} 
\partial_\alpha^{2j-1}
\left(\prod_{\beta\in \A_+\setminus\{\alpha\}}(\beta, x)^{k_\beta}\right)=0\quad\text{for}\ (\alpha, x)=0\quad\text{and}\ j=1,\dots, k_\alpha\,.
\end{equation}
Here $\partial_\alpha=(\alpha, \partial)$ denotes the directional derivative. 
In the case $k_\alpha=1$ \eqref{mi2} reduce to a single condition, 
\begin{equation}\la{mi1} 
\sum_{\beta\in \A_+\setminus \{\alpha\}} \frac{k_\beta (\alpha, \beta)} {(\beta ,x)}=0\quad\text{for}\ (\alpha, x)=0\,.
\end{equation}
If $\A$ satisfies the identity \eqref{mi}, then \eqref{mi1} follows by taking the residue along $(\alpha, x)=0$. For $k_\alpha>1$, however, \eqref{mi2} are stronger conditions than \eqref{mi}. Let us call $\A$ {\it non-twisted} if it satisfies the conditions \eqref{mi2} for all $\alpha\in \A_+\setminus R$, and {\it twisted} otherwise. When $\A$ is non-twisted, \eqref{mi} is always true as a consequence of \eqref{mi1}.   
For non-twisted $\A$, one may work with the following ``radial part'' versions of the Calogero--Moser operators: 
\begin{equation*}
\widetilde L_W=\Delta-\sum_{\alpha\in R_+}\frac{2k_\alpha}{(\alpha, x)}\partial_\alpha
\,,\qquad \widetilde L_\A =\Delta-\sum_{\alpha\in\A_+}\frac{2k_\alpha}{(\alpha, x)}
\partial_\alpha\,.
\end{equation*}
We have
$\widetilde L_W =\delta_0 L_W \delta_0^{-1}$
and $\widetilde L_\A =\delta_\A L_\A \delta_\A^{-1}$,
where $\delta_0=\prod_{\alpha\in R_+}(\alpha,x)^{k_\alpha}$ and $\delta_\A$ is as in \eqref{dsf}. 

Let us modify the Cherednik algebra accordingly, by
setting $\widetilde H_k$ to be the subalgebra of $\D W$ generated by $\c W$, $\c[V]$ and all $\widetilde T_\xi$, where the Dunkl operators are given in the ``standard gauge'':
\begin{equation*}\la{stdu}
\widetilde T_\xi :=
\partial_\xi-\sum_{\alpha\in R_+}
\frac{(\alpha,\xi)}{(\alpha, x)}k_\alpha (1-s_\alpha)\ , \quad \xi \in V\,.
\end{equation*}
As before, the spherical subalgebra is $\widetilde B_k:=\Res (\e \widetilde H_k \e)$ (in fact, $\widetilde B_k=\delta_0B_k\delta_0^{-1}$). We also need to modify the space $U_\A$, replacing it with $U=Q_{\rm{reg}}$ which is defined as the space of all $q\in\c[\vreg]^W$ that satisfy the quasi-invariance conditions \eqref{qi1}. (This is the same ring $\qreg$ that appears in the Introduction.) Due to \eqref{mi2} and the relation $\widetilde L_\A=\delta_\A L_\A \delta_\A^{-1}$, we still have the crucial property $\widetilde L_\A(\qreg)\subseteq \qreg$. Furthermore, the choice $L_0=\widetilde L_W$, $L=\widetilde L_\A$, $B=\widetilde B_k$, $A=\widetilde B_k[\delta_k^{-1}]$, $U_0=\c[\vreg]^W$, $U=\qreg$ satisfies all the assumptions of Section \ref{pro}. In particular, we may define  
\begin{align*}
    \wM_{\A}&=\{a\in \wB_k[\delta_k^{-1}]\,\mid \, a(\c[\vreg]^W)\subset \qreg\}\,,\\
    \wD_{\A}&=\{a\in \wB_k[\delta_k^{-1}]\,\mid\, a(\qreg)\subset \qreg\}\,. 
\end{align*}
Note that $\qreg\subset U_0=\c[\vreg]^W$, so $a(\c[\vreg]^W)\subset \qreg$ implies $a\in \wB_k$. Thus, $\wM_\A$ is an honest ideal of $\wB_k$. We then have the following analogue of Proposition \ref{PPP}, with the same proof.  

\begin{prop}
\la{PPP1}
For a non-twisted locus configuration $\A$, $\wM_{\A}$  
is a very fat reflexive ideal of $\wB_k $ with $\, \End_{\wB_k}(\wM_{\A}) \cong \wD_{\A} \,$.
\end{prop}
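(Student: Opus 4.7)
Since the non-twisted condition \eqref{mi2} is precisely what guarantees $\widetilde L_\A(\qreg)\subseteq\qreg$, the tuple $(\wB_k,\widetilde L_W,\widetilde L_\A,\c[\vreg]^W,\qreg)$ fits into the framework of Section~\ref{pro}, as the text has set up. My plan is to rerun the proof of Proposition~\ref{PPP} with tildes; the only genuinely new ingredient will be a tilde analogue of Lemma~\ref{inu}, asserting that $\qreg$ is maximal among subspaces $U\subseteq\delta_k^{-r}\c[\vreg]^W$ invariant under $\widetilde L_\A$.

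That maximality is the main obstacle. I would prove it by local analysis near each hyperplane $(\alpha,x)=0$ with $\alpha\in\A_+\setminus R$: expanding in Laurent series in the $\alpha$-direction, the non-twisted condition \eqref{mi2} forces the radial part of $\widetilde L_\A$ to preserve the subspace of germs whose odd-degree Taylor coefficients in $(\alpha,x)$ vanish up to order $2k_\alpha-1$, while strictly raising the pole order of any germ outside this subspace. Iterating $\widetilde L_\A$ on a hypothetical invariant subspace $U\supsetneq\qreg$ then produces functions with poles of unbounded order at $H_\alpha$, contradicting $U\subseteq\delta_k^{-r}\c[\vreg]^W$. This parallels \cite[Proposition~5.1]{CEO} and Lemma~\ref{inu}; the only adjustment is tracking the gauge factor $\delta_\A$ relating $L_\A$ to $\widetilde L_\A$.

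Granted the maximality, Proposition~\ref{fatrefl}(a)(b) yields immediately that $\wM_\A$ is a reflexive right $\wB_k$-ideal and that $\wD_\A\cong\End_{\wB_k}(\wM_\A)$. Only very-fatness remains, and I would establish it by mimicking the construction in the proof of Proposition~\ref{PPP}. Define
\[
\widetilde{\NN} \,:=\, \{a\in\wB_k[\delta_k^{-1}]\,\mid\, a(\qreg)\subseteq\c[\vreg]^W\},
\]
and let $M$ be the smallest integer with $\ad_{\widetilde L_\A,\widetilde L_W}^{M+1}(\delta_k^2)=0$. Set
\[
\widetilde S \,:=\, \tfrac{1}{2^M M!}\,\ad_{\widetilde L_\A,\widetilde L_W}^M(\delta_k^2)\,\in\,\wM_\A, \qquad \widetilde S^* \,:=\, \tfrac{1}{2^M M!}\,\ad_{\widetilde L_W,\widetilde L_\A}^M(\delta_k^2)\,\in\,\widetilde{\NN}.
\]
I use $\delta_k^2$ rather than $\delta_k$ because the containment $\delta_k^2\,\c[\vreg]^W\subseteq\qreg$ is automatic (the difference $(\delta_k^2 f)(x)-(\delta_k^2 f)(s_\alpha x)$ picks up an explicit factor $(\alpha,x)^{2k_\alpha}$ from $\delta_k^2$ itself), whereas $\delta_k\,\c[\vreg]^W\subseteq\qreg$ fails in general when $k_\alpha$ is odd.

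Finally, applying the twisted Leibniz identity \eqref{coad} to $\Ad_{\widetilde L_W}(\delta_k^2\cdot\delta_k^2) = \Ad_{\widetilde L_W,\widetilde L_\A}(\delta_k^2)\,\Ad_{\widetilde L_\A,\widetilde L_W}(\delta_k^2)$ and reading off the leading $t$-coefficient gives
\[
\widetilde S^*\widetilde S \,=\, \tfrac{1}{2^{2M}(2M)!}\,\ad_{\widetilde L_W}^{2M}(\delta_k^4),
\]
which, via the gauge-transformed version of \eqref{loqc}, is exactly the image of $\delta_k^4\in\c[V^*]^W$ under the tilde spherical Dunkl representation. Therefore $\wM_y:=\widetilde S^*\wM_\A$ is a right ideal of $\wB_k$ isomorphic to $\wM_\A$ that meets $\c[V^*]^W$ nontrivially, proving very-fatness.
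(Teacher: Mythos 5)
Your proposal is correct and, in substance, follows the route the paper intends when it writes ``with the same proof''; the two places where the tilde version is \emph{not} a verbatim copy of the proof of Proposition~\ref{PPP} --- maximality of $\qreg$ and the replacement of $\delta_k$ by $\delta_k^2$ --- are exactly the ones you identify, and your handling of both is sound. Your use of $\delta_k^2$ is the right call: $\delta_k\,\c[\vreg]^W\not\subseteq\qreg$ not only for odd $k_\alpha$ but in fact for every $k_\alpha\ge 1$ (the difference $\delta_k(x)f(x)-\delta_k(s_\alpha x)f(s_\alpha x)$ factors only $(\alpha,x)^{k_\alpha}$ out of $\delta_k$, and what remains is generically divisible by $(\alpha,x)$ at best, not $(\alpha,x)^{k_\alpha}$); and your leading-coefficient computation of $\widetilde S^*\widetilde S$ via \eqref{Adid} gives the correct element $\Res(\e\,\widetilde T_{\delta_k^4}\,\e)\in\c[V^*]^W\cap\wB_k$, which is all that very-fatness requires. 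This also matches the paper's Remark immediately after Proposition~\ref{PPP1}, where the tilde shift operator is built from $\delta_k^2$ rather than $\delta_k$. One simplification worth recording: the maximality of $\qreg$ does not need a fresh local-analysis argument --- for non-twisted $\A$ the gauge transform $f\mapsto\delta_\A f$ carries $\U_\A$ bijectively onto $\qreg$ (the key being that \eqref{mi2} forces $\psi_\alpha(x)-\psi_\alpha(s_\alpha x)$, with $\psi_\alpha=\delta_\A/(\alpha,x)^{k_\alpha}$, to vanish to order $2k_\alpha$ along $H_\alpha$), and since $\widetilde L_\A=\delta_\A L_\A\delta_\A^{-1}$ this bijection intertwines $L_\A$-invariant and $\widetilde L_\A$-invariant subspaces while preserving the constraint $U\subseteq\delta_k^{-r}\c[\vreg]^W$, so the maximality of $\qreg$ follows directly from Lemma~\ref{inu}. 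Your Laurent-series argument reaches the same conclusion and is perfectly valid, but the gauge identity $\delta_\A\U_\A=\qreg$ is shorter and makes the phrase ``with the same proof'' literally true.
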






Assuming the algebra $\qreg$ is finitely generated, it can be viewed as the algebra of functions on a (singular) affine variety $X_{\rm reg}=\Spec\, \qreg$, and so the elements of $\wD_{\A}$ are regular differential operators on $X_{\rm reg}$. 
For comparison, if $\A$ is twisted, 
we have a projective, rank-one $\qreg$-module $\U_{\A}$ (i.e. a line bundle over $X_{\rm reg}$), and so the elements of $\D_\A$ can be viewed as {\it twisted} differential operators.  

\begin{remark}
Theorem \ref{gaic} trivially extends to $L=\widetilde L_\A$, $L_0=\widetilde L_W$. Namely, the maximal commutative ring $\theta(\qaw)$ gets replaced with $\delta_\A\theta(\qaw)\delta_\A^{-1}$, and the shift operator takes the form $S=\frac{1}{2^N N!}\ad_{L, L_0}^N(\delta_k^2)$, where $N=\deg\delta_k$. 
\end{remark}

\section{Examples of locus configurations} 
\label{exloc}

Finding explicitly all generalised locus configurations is an open problem. In this section we describe all examples currently known in dimension $>2$. The two-dimensional configurations will be discussed in Section~\ref{twodim} below.


Recall that a locus configuration $\A$ of type $W$ is obtained by adding a $W$-invariant set of vectors to the root system $R$ of $W$. These vectors and their multiplicities $k_\alpha\in\Z_+$ must satisfy the algebraic relations \eqref{loc1}. Although the equations \eqref{loc1} are quite explicit, finding (and classifying) their solutions is a difficult problem. One approach to this problem relies on the following observation that allows one to build (nontrivial) 
locus configurations in arbitrary dimension from the known locus configurations in dimension $2$. 
\begin{prop}\label{2dim} $\A$ is a locus configuration if and only if every two-dimensional sub-configuration $\A'\subset\A$ is a locus configuration.  
\end{prop}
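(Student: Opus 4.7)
The plan is to show that the locus conditions \eqref{loc1} for $\A$ decompose, hyperplane by hyperplane, into a sum of contributions indexed by 2-planes through $\alpha$, each of which is precisely the locus condition for a two-dimensional sub-configuration. Once this decomposition is in place, both implications of the proposition fall out of the same identity.

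Fix $\alpha \in \A_+\setminus R$. Every $\beta \in \A_+$ not proportional to $\alpha$ lies in a unique $2$-plane $\pi_\beta := \c\alpha + \c\beta$ through $\alpha$. Grouping the terms of \eqref{loc1} according to this plane, I rewrite
$$
f_j(x) \;:=\; \sum_{\beta\in\A_+\setminus\{\alpha\}} \frac{k_\beta(k_\beta+1)(\beta,\beta)(\alpha,\beta)^{2j-1}}{(\beta,x)^{2j+1}} \;=\; \sum_{\pi \ni \alpha}\, g^\pi_j(x),
$$
where $g^\pi_j$ is the same sum but restricted to $\beta \in (\A_+ \cap \pi)\setminus\{\alpha\}$, and the outer sum ranges over $2$-planes $\pi$ containing $\alpha$ that meet $\A\setminus\{\pm\alpha\}$.

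The next step is to restrict to $H_\alpha = \{(\alpha,x)=0\}$ and observe two facts. First, for any $\beta \in \pi\setminus \c\alpha$, the linear form $(\beta,x)$ restricted to $H_\alpha$ depends only on the coordinate along the line $\ell_\pi := \pi\cap H_\alpha$; so all $\beta$'s in a given plane $\pi$ contribute poles along the \emph{same} line $\ell_\pi \subset H_\alpha$. Second, distinct $2$-planes $\pi \neq \pi'$ through $\alpha$ give distinct lines $\ell_\pi \neq \ell_{\pi'}$, since $\pi = \c\alpha + \ell_\pi$. Consequently the restrictions $g^\pi_j|_{H_\alpha}$ have pairwise disjoint pole divisors, and the vanishing $f_j|_{H_\alpha}=0$ forces each $g^\pi_j|_{H_\alpha}=0$ separately. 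But $g^\pi_j|_{H_\alpha}$ is a rational function of the single coordinate along $\ell_\pi$, and its vanishing on $\ell_\pi$ is precisely the locus condition \eqref{loc1} for $\alpha$ inside $\A\cap\pi$.

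Running this equivalence over all $\alpha \in \A\setminus R$ (and noting that every $\alpha' \in (\A\cap\pi)\setminus(R\cap\pi)$ arises this way for $\pi \ni \alpha'$) yields the proposition. The ambient framework on $\pi$ is automatic: $R\cap\pi$ is a root subsystem generating a Coxeter subgroup $W_\pi \subset \mathtt{GL}(\pi)$, and the $W$-invariance of $(\A,k)$ restricts to $W_\pi$-invariance of $(\A\cap\pi,k)$, so $\A\cap\pi$ qualifies as a locus configuration of type $W_\pi$. The only point requiring careful bookkeeping--and the main (though minor) obstacle--is the disjointness of pole divisors on $H_\alpha$. This reduces to the linear-algebra observation that the orthogonal projection $\beta \mapsto \beta - \tfrac{(\alpha,\beta)}{(\alpha,\alpha)}\alpha$ of $\beta \in \A_+\setminus\c\alpha$ into $H_\alpha$ has direction determined, up to a scalar, by the $2$-plane $\pi_\beta$, so clustering of $\beta$'s by plane matches clustering of poles by line on $H_\alpha$.
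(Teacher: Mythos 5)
Your proof is correct and follows the same decomposition-by-two-planes argument that the paper attributes to \cite{CFV}, Theorem 4.1 (and explicitly states carries over to arbitrary $W$). The essential mechanism — group the terms of \eqref{loc1} by the $2$-plane $\pi=\c\alpha+\c\beta$, observe that after restriction to $H_\alpha$ the grouped pieces $g^\pi_j$ have poles on pairwise distinct hyperplanes of $H_\alpha$ (determined by the distinct lines $\ell_\pi=\pi\cap H_\alpha$), so the vanishing of the total sum forces each piece to vanish, and each such vanishing is exactly the rank-two locus condition — is exactly the intended argument. (One small wording slip: the pole divisor of $g^\pi_j|_{H_\alpha}$ is the hyperplane of $H_\alpha$ orthogonal to $\ell_\pi$, not the line $\ell_\pi$ itself; your later phrasing ``rational function of the single coordinate along $\ell_\pi$'' and ``pairwise disjoint pole divisors'' make clear you have the right picture, so this does not affect the argument.)
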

Here, by a two-dimensional sub-configuration we mean the intersection $\A'=\A\,\cap\, V'$, where $V'$ is a two-dimensional subspace of $V$. 
The above proposition is a generalisation of Theorem~4.1 of \cite{CFV} that treats the case $W=\{e\}$; the same argument works for an arbitrary $W$. 

The following examples of two-dimensional locus configurations will serve as `building blocks' for higher dimensional configurations that we will describe in this section. In each of these examples, we write the vectors in $\A_+$ relative to some orthonormal basis $\{e_1, e_2, e_3\}$ in $ \c^3 $. We also indicate the subset of roots $R_+ \subset \A_+ $ and the corresponding Coxeter group $W$.
All non-Coxeter vectors in $ \A_+$ have multiplicity $1$, so that the relations \eqref{loc1} need to be checked only for $j=1$. This is an easy exercise left to the reader. 

\begin{example}\label{2dex}

\begin{enumerate}
    \item[(1)] 
$\A_+=\{e_1-e_2, ae_1-be_3, ae_2-be_3\}$ with multiplicities $k=(m,1,1)$ , where $b^2=ma^2$ or ${b^2}=(-1-m){a^2}$. In this  example, $R_+=\{e_1-e_2\}$ and $W=\Z_2$.
\item[(2)] 
$\A_+=\{e_1, e_2, ae_1-be_2, ae_1+be_2\}$ with multiplicities $k=(l,m,1,1)$, where $(2l+1){a^2}=\pm(2m+1){b^2}$. In this example, $R_+=\{e_1, e_2\}$ and $W=\Z_2\times \Z_2$.
\item[(3)] 
$\A_+=\{ae_1-be_2, be_2-ce_3, ae_1-ce_3\}$ with multiplicities $k=(1,1,1)$, where  $a^2+b^2+c^2=0$. In this example, $R_+=\varnothing$ and $W=\{e\}$.

\end{enumerate}

\end{example}

\begin{remark} The configurations (1) and (3) are contained in a two-dimensional subspace of $\c^3$ and so we may think of them as two-dimensional.  
When $l,m$ are integers, all of the above configurations can be viewed as locus configurations of type $W=\{e\}$, and as such they can be found in Section 4.2 of \cite{CFV}.
\end{remark}

\subsection{Coxeter configurations}\la{cc}

The simplest examples can be obtained by considering a pair $W\subset W'$ of finite Coxeter groups acting on $V$. Let $R\subset R'$ be the corresponding root systems with a $W'$-invariant {\it integral} multiplicities $k: R'\to \Z_+$. Then we can view $\A=R'$ as a generalised locus configuration of type $W$. Indeed, the locus relations \eqref{loc} or \eqref{loc1} trivially follow from the $W'$-invariance. In the special case when $R$ coincides with one of the $W'$-orbits of roots in $R'$, we may allow non-integer multiplicities $k_\alpha$ for $\alpha\in R$. 

\subsection{Examples related to Lie superalgebras} \la{sa}
These examples were discovered in \cite{SV}; there are two infinite series and three exceptional cases. 


\subsubsection{$A(n, m)$ configuration}
Here $V=\c^{n+m}$ with the standard scalar product, and the group $W=S_n\times S_m$ acts by permuting the first $n$ and the last $m$ of the coordinates. We set 
\begin{equation}\label{i1i2}
I_1=\{1, \dots, n\}\,,\quad I_2=\{n+1, \dots, n+m\}\,. 
\end{equation}
The configuration depends on one parameter $k\ne 0$. It consists of the vectors $\alpha= e_i-e_j$, $i, j\in I_1$, $i\ne j$ with $k_\alpha=k$, the vectors $\alpha=e_i-e_j$, $i, j\in I_2$, $i\ne j$ with $k_\alpha=k^{-1}$, and the vectors $\pm(e_i-\sqrt{k} e_j)$, $i\in I_1$, $j\in I_2$ with $k_\alpha=1$.

Its $W$-invariance is obvious. To check the locus conditions, we use Proposition \ref{2dim}.
It is then easy to see that any two-dimensional $\A_0\subset \A$ either lies entirely in $R$ (so is Coxeter), is of type $\A_0=\{\pm\alpha, \pm\beta\}$ with orthogonal $\alpha, \beta$, or is equivalent to the configuration (1) from Example \ref{2dex} (with $m=k$ or $m=k^{-1}$).

\begin{remark}\label{dualk}
Since $k_\alpha$ enters \eqref{gcm} as a combination $k_\alpha(k_\alpha+1)$, one can always replace $k_\alpha\mapsto -1-k_\alpha$ for $\alpha\in R$ in a locus configuration. For example, in $A(n,m)$ one can take $k_\alpha=-1-k^{-1}$ for $\alpha=e_i-e_j$ with $i,j\in I_2$.    
\end{remark}



\subsubsection{$BC(n, m)$ configuration}
We keep the notation of the previous case, so $V=\c^{n+m}$ with the standard Euclidean product. The configuration depends on parameters $k\ne 0$ and $l_1,l_2$ related by
\begin{equation*}
2l_1+1=k(2l_2+1)\,.
\end{equation*}
It consists of the vectors  $\alpha= \pm e_i\pm e_j$, $i, j\in I_1$, $i\ne j$ with $k_\alpha=k$, the vectors $\alpha=\pm e_i\pm e_j$, $i, j\in I_2$, $i\ne j$ with $k_\alpha=k^{-1}$, the vectors $\alpha=\pm e_i$, $i\in I_1$ with $k_\alpha=l_1$, the vectors $\alpha=\pm e_i$, $i\in I_2$ with $k_\alpha=l_2$, and 
the vectors $\pm e_i\pm \sqrt{k} e_j$, $i\in I_1$, $j\in I_2$ with $k_\alpha=1$.

Let us write down the corresponding Calogero--Moser operator. Using Cartesian coordinates $x_1, \dots, x_n$, $y_1, \dots, y_m$ on $V$, we obtain:
\begin{align*}
L_{BC(n,m)}=\Delta&-2k(k+1)\sum_{i<j}^n(x_i\pm x_j)^{-2}\\
&-2k^{-1}(k^{-1}+1)\sum_{i<j}^m(y_i\pm y_j)^{-2}\\
&-l_1(l_1+1)\sum_{i=1}^n x_i^{-2}-l_2(l_2+1)\sum_{i=1}^m y_i^{-2}\\
&-2(k+1)\sum_{i=1}^n\sum_{j=1}^m(x_i\pm\sqrt{k} y_j)^{-2}\,.
\end{align*}
The first three lines of this expression describe the Calogero--Moser operator for the root system $R=B_n\times B_m$; the remaining sum is invariant under the action of the Coxeter group $W=W(B_n)\times W(B_m)$. 

Again, to check that $\A=B(n,m)$ is a locus configuration, it sufficient to check this for its rank-two subsystems. Some of them lie entirely in $R$ or are isomorphic to $A_1\times A_1$, e.g., $\A_0=\{\pm\alpha, \pm\beta\}$ with orthogonal $\alpha, \beta$. The remaining cases, up to permutations of indices and sign changes, are equivalent to the cases (1)--(2) from Example \ref{2dex}.

\medskip

\begin{remark}
In \cite{SV}, the operator $L_{BC(n,m)}$ is presented in a trigonometric form and in non-Cartesian coordinates. Our parameters $l_1, l_2, k$ are related to the parameters $p,q,r,s,k$ in \cite{SV} through $l_1=p+q$ and $l_2=r+s$.   
\end{remark}



\subsubsection{$AB(1, 3)$ configuration}
In this case $V=\c^4$, and the corresponding Calogero--Moser operator in Cartesian coordinates $(x_1, x_2, x_3, y)$ is given by
\begin{align*}
L_{AB(1,3)}=\Delta&-\sum_{i=1}^3 a(a+1)x_i^{-2}-b(b+1)y^{-2}
-2c(c+1)\sum_{i<j}^3(x_i\pm x_j)^{-2}\\&-2(3k+3)\sum_{\pm}(\sqrt{3k}y\pm x_1\pm x_2\pm x_3)^{-2}\,.
\end{align*} 
Here the last sum is over all $8$ possible combinations of the signs. The parameters $a,b,c,k$ are related by
\begin{equation*}
a=\frac{3k+1}{2}\,,\quad b=\frac12(k^{-1}-1)\,,\quad c=\frac{3k-1}{4}\,.
\end{equation*} 
The $AB(1,3)$ configuration contains a Coxeter configuration of type $R=B_3\times A_1$. The remaining vectors $\alpha=(\pm 1, \pm 1, \pm 1, \pm \sqrt{3k})$ have multiplicities $k_\alpha=1$. One easily checks that any non-Coxeter rank-two subsystem is isomorphic to the cases (1)--(2) from Example \ref{2dex}, thus $AB(1,3)$ is a locus configuration. 

\begin{remark}
In \cite{SV}, the formula for $L_{AB(1,3)}$ contains a misprint: the numerical factor in front of the last sum in \cite[(14)]{SV} should be $\frac12$, not $\frac14$. 
\end{remark}



\subsubsection{$G(1, 2)$ configuration}
In this case $V=\c^4$, and the corresponding Calogero--Moser operator in Cartesian coordinates $(x_1, x_2, x_3, y)$ is given by
\begin{align*}
L_{G(1,2)}=\Delta&-2p(p+1)\sum_{i<j}^3 (x_i-x_j)^{-2}\\
&-3q(q+1)\sum_{i\ne j\ne l}^3 (x_i+x_j-2x_l)^{-2}\\
&-r(r+1)y^{-2}-4(k+1)\sum_{i\ne j}^3(\sqrt{2k}y-x_i+x_j)^{-2}\,.
\end{align*} 
The parameters $p,q,r,k$ are related by
\begin{equation*}
p={2k+1}\,,\quad q=\frac{2k-1}{3}\,,\quad r=\frac{3}{2}(k^{-1}+1)\,.
\end{equation*} 
The $G(1,2)$ configuration contains a Coxeter configuration of type $R=G_2\times A_1$. The remaining vectors $\alpha=\pm(-e_i+e_j+\sqrt{2k}e_4)$ have multiplicities $k_\alpha=1$.

\begin{remark}
The configuration $G(1,2)$ is contained in the hyperplane $x_1+x_2+x_3=0$ in $\c^4$. In \cite{SV}, the operator $L_{G(1,2)}$ is restricted onto this hyperplane (in non-Cartesian coordinates). Our parameters $p,q,r,k$ are related to the parameters $a,b,c,d,k$ in \cite{SV} through $p=a$, $q=b$ and $r=c+d$.   
\end{remark}



\subsubsection{$D(2, 1, \lambda)$ configuration}
In this case $V=\c^3$. Let $\lambda_1, \lambda_2, \lambda_3$ be arbitrary non-zero parameters. Introduce 
\begin{equation*}
m_i=\frac{\lambda_1+\lambda_2+\lambda_3}{2\lambda_i}-1\quad(i=1,2,3)\,.
\end{equation*}
The configuration $D(2,1, \lambda)$ consists of the vectors $\alpha=\pm e_i$ with $k_\alpha=m_i$ and eight additional vectors 
$\alpha=\pm\sqrt{\lambda_1}e_1\pm\sqrt{\lambda_2}e_2\pm\sqrt{\lambda_3}e_3$ with $k_\alpha=1$.

The corresponding Calogero--Moser operator is given by
\begin{align*}
L_{D(2,1,\lambda)}=\Delta&-\sum_{i=1}^3 m_i(m_i+1)x_i^{-2}\\
&-2(\lambda_1+\lambda_2+\lambda_3)\sum_{\pm}(\sqrt{\lambda_1}x_1\pm\sqrt{\lambda_2}x_2\pm\sqrt{\lambda_3}x_3)^{-2}\,.
\end{align*} 


\subsection{$\boldsymbol{A_{n-1,2}}$ configuration}

\noindent $\A_+$ consists of the following vectors in
$\mathbb C^{n+2}$: $$\left\{
\begin{array}{lll}
e_i - e_j, &  1\le i<j\le n, & \text{with }\ k_\alpha=k\,,\\ e_i -
\sqrt{k}e_{n+1}, &  i=1,\ldots ,n\,,  & \text{with }\
k_\alpha=1\,,\\ e_i - \sqrt{k^*}e_{n+2}, & i=1,\ldots ,n\,, &
\text{with }\ k_\alpha=1\,,\\ \sqrt{k}e_{n+1}-\sqrt{k^*}e_{n+2} &
& \text{with }\ k_\alpha=1\,.
\end{array}
\right. 
$$ 
Here $k$ is an arbitrary parameter, $k^*=-1-k$, and $W=S_n$. A new feature in this case is that among the rank-two subsystems we have $e_i-\sqrt{k}e_{n+1}$, $e_i-\sqrt{k^*}e_{n+2}$, $\sqrt{k}e_{n+1}-\sqrt{k^*}e_{n+1}$, which is the case (3) in Example \ref{2dex}, with $a=1, b=\sqrt{k}, c=\sqrt{k^*}$. For $k\in\Z$ this is a locus configuration from \cite{CV1}. 

\subsection{$\boldsymbol{A(n_1, n_2, n_3)}$ configuration}
\la{A123}
The following configuration was described by D.~Gaiotto and M.~Rap\v{c}\'ak in \cite{GR}. It depends on three arbitrary integers $n_1, n_2, n_3\ge 0$ and $a,b,c\in\c$ such that $a^2+b^2+c^2=0$. The space $V$ is $V_1\oplus V_2\oplus V_3$ with $V_i=\c^{n_i}$. We denote as 
$\{e_i\}_{i=1\dots n_1}$ the standard basis in $V_1$, and similarly $\{e'_i\}_{i=1\dots n_2}$ and $\{e''_i\}_{i=1\dots n_3}$ for the other two spaces. We also write $x_i$, $x'_i$, $x''_i$ for the Cartesian coordinates in each of the spaces. For simplicity, we may not specify the index range explicitly so, for instance,  $e'_i$ or $x'_i$ below will automatically assume that $i\in\{1,\dots, n_2\}$.

The configuration is $\A=\A_+\sqcup (-\A_+)$ where $\A_+=R_1\sqcup R_2\sqcup R_3\sqcup R_{12}\sqcup R_{23}\sqcup R_{13}$ with the vectors and multiplicities in each group as follows:
\begin{align*}
R_1&=\{\alpha = e_i-e_j\}_{i<j} \quad\text{with $k_\alpha=b^2/a^2$}\,,\\
R_2&=\{\alpha = e'_i-e'_j\}_{i<j} \quad\text{with $k_\alpha=c^2/b^2$}\,,\\
R_3&=\{\alpha = e''_i-e''_j\}_{i<j} \quad\text{with $k_\alpha=a^2/c^2$}\,,\\
R_{12}&=\{\alpha = ae_i-be'_j\}_{i,j} \quad\text{with $k_\alpha=1$}\,,\\
R_{23}&=\{\alpha = be'_i-ce''_j\}_{i,j} \quad\text{with $k_\alpha=1$}\,,\\
R_{13}&=\{\alpha = ae_i-ce''_j\}_{i,j} \quad\text{with $k_\alpha=1$}\,.
\end{align*}
The Coxeter group $W$ is $S_{n_1}\times S_{n_2}\times S_{n_3}$ with $R_+=R_1\sqcup R_2\sqcup R_3$. An easy inspection shows that all two-dimensional sub-configurations are either Coxeter or are equivalent to cases (1) or (3) in Example \ref{2dex}. The 
deformed Calogero--Moser operator is
\begin{align}
L_\A&=\Delta_1+\frac{b^2c^2}{a^4}\sum_{i<j}\frac{2}{(x_i-x_j)^{2}}+a^2\sum_{i,j}\frac{2}{(bx'_i-cx''_j)^{2}}+\nonumber\\
&+\Delta_2+\frac{a^2c^2}{b^4}\sum_{i<j}\frac{2}{(x'_i-x'_j)^{2}}+b^2\sum_{i,j}\frac{2}{(ax_i-cx''_j)^{2}}+\label{GRex1}\\
&+\Delta_3+\frac{a^2b^2}{c^4}\sum_{i<j}\frac{2}{(x''_i-x''_j)^{2}}+c^2\sum_{i,j}\frac{2}{(ax_i-bx'_j)^{2}}\,.\nonumber
\end{align}
Here $\Delta_i$ denotes the Laplace operators on $V_i$. To compare this with \cite{GR}, one makes a change of variables $z_i=ax_i$, $z'_i=bx'_i$, $z''_i=cx''_i$ and sets $\epsilon_1=a^2$, $\epsilon_2=b^2$, $\epsilon_3=c^2$, $\epsilon_1+\epsilon_2+\epsilon_3=0$, after which $L$ takes the form \eqref{GRex} identical to $t_{2,0}$ in \cite[(2.15)]{GR}.
Note that for $(n_1,n_2,n_3)=(n,m,0)$ we have $R_3=R_{23}=R_{13}=\varnothing $, in which case $A(n_1,n_2,n_3)$ reduces to $A(n, m)$ with $k=b^2/a^2$. 
On the other hand, for $(n_1,n_2,n_3)=(n,1,1)$ it reduces to $A_{n-1,2}$ configuration. 

The fact that the operator \eqref{GRex1} is completely integrable is a corollary of Theorem \ref{gaic}. To spell this out, introduce the following set of ``deformed power sums'' (cf. \cite[(2.15)]{GR}):
\begin{equation*}
    p_d=a^{d-2}\sum_{i=1}^{n_1}x_i^d+b^{d-2}\sum_{i=1}^{n_2}(x'_i)^d+
    c^{d-2}\sum_{i=1}^{n_3}(x''_i)^d\,,\qquad d=1,2,\dots .
\end{equation*}
These polynomials are 
obviously symmetric with respect to $W=S_{n_1}\times S_{n_2}\times S_{n_3}$. Moreover, each $p_d$ is quasi-invariant. Indeed, for $\alpha=ae_i-be'_j$ we have
\begin{equation*}
  \partial_{\alpha}(p_d)=d(ax_i)^{d-1}-d(bx'_j)^{d-1}=0\qquad\text{for}\quad ax_i-bx'_j=0\,,  
\end{equation*}
hence $p_d(x)-p_d(s_\alpha x)$ is divisible by ${(\alpha,x)^2}$, and similarly for the other vectors $\alpha\in\A\setminus R$.
Following the proof of Theorem \ref{gaic}, we set 
\begin{equation*}
L_d:=\frac{1}{2^{d}d!}\ad_{L_\A}^{d}(p_d)\,,
\qquad d=1,2,\dots\,.
\end{equation*}
Note that $L_d$ has the form
$L_d=p_d(\partial) +\ldots$\,, with $L_2=L_\A$. Hence, we obtain the following result. 

\begin{prop}\label{a3int}
The operator \eqref{GRex1} is completely integrable, with $[L_{d_1}, L_{d_2}]=0$ for $d_1, d_2\ge 1$.
\end{prop}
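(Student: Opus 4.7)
The plan is to recognise the proposition as a direct application of Theorem~\ref{gaic} to $\A = A(n_1, n_2, n_3)$. Concretely, I need to verify: (a) $\A$ is a locus configuration of type $W = S_{n_1} \times S_{n_2} \times S_{n_3}$, and (b) each polynomial $p_d$ is a homogeneous $W$-quasi-invariant belonging to $\qaw$. Once these are in place, formula~\eqref{lq1} gives $L_d = L_{p_d}$, so Theorem~\ref{gaic}(2) yields $[L_{d_1}, L_{d_2}] = 0$, while Theorem~\ref{gaic}(3) --- producing $n = n_1 + n_2 + n_3$ algebraically independent commuting operators in $\theta(\qaw) \ni L_\A$ --- yields complete integrability.

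For (a), I would invoke Proposition~\ref{2dim}: it suffices to check that every two-dimensional section $\A \cap V'$ is a locus configuration. Up to permutation of indices and of the triple $(1,2,3)$, each non-Coxeter two-dimensional section is either an orthogonal pair, or --- containing a vector from one of $R_1, R_2, R_3$ together with one from an $R_{ij}$ --- is equivalent to Example~\ref{2dex}(1) with the parameter $m$ taking one of the values $b^2/a^2$, $c^2/b^2$, $a^2/c^2$, matching the built-in constraint $b^2 = m a^2$ on the multiplicities of $R_1, R_2, R_3$; or else --- containing vectors from two of $R_{12}, R_{23}, R_{13}$ --- is equivalent to Example~\ref{2dex}(3), whose defining constraint is precisely $a^2 + b^2 + c^2 = 0$. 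Each of these is a known locus configuration. For (b), $W$-invariance of $p_d$ is clear from its definition as a sum of power sums within each block $V_i$, and since $k_\alpha = 1$ for $\alpha \in \A \setminus R$, quasi-invariance reduces (by the skew-symmetry remark after Definition~\ref{gqi}) to $(\alpha, x) \mid \partial_\alpha p_d$. For the representative $\alpha = a e_i - b e'_j \in R_{12}$ I compute
\begin{equation*}
\partial_\alpha p_d \,=\, d\bigl(a^{d-1} x_i^{d-1} - b^{d-1}(x'_j)^{d-1}\bigr),
\end{equation*}
which vanishes on $\{a x_i = b x'_j\}$ since there $a^{d-1} x_i^{d-1} = (a x_i)^{d-1} = (b x'_j)^{d-1} = b^{d-1} (x'_j)^{d-1}$. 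The computations for vectors in $R_{13}$ and $R_{23}$ are identical after relabelling.

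Neither step presents a serious obstacle: the two-dimensional case analysis in (a) is a routine check of the same flavour as for the Sergeev--Veselov configurations treated earlier in this section (and is sketched in the paragraph preceding the proposition), while (b) is a one-line substitution. The one point meriting a sanity check is the identification $L_2 = L_\A$ itself; this follows from a short commutator computation giving $\ad_{L_\A}^2(|x|^2) = 8\, L_\A$, using $[\Delta, |x|^2] = 2n + 4 E$ with $E$ the Euler operator, $[\Delta, E] = 2\Delta$, and $[u_\A, E] = 2 u_\A$ (Euler's formula applied to the homogeneous degree-$(-2)$ potential $u_\A$, together with $[u_\A, f]=0$ for any function $f$). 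Hence $L_2 = \tfrac{1}{8}\ad_{L_\A}^2(p_2) = L_\A$ as required, and the proposition follows from Theorem~\ref{gaic}.
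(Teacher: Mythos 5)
Your proposal is correct and follows essentially the same route as the paper: verify that $A(n_1,n_2,n_3)$ is a locus configuration of type $W$ by reducing to the two-dimensional subconfigurations of Example~\ref{2dex}(1),(3), verify quasi-invariance of the $p_d$ by computing $\partial_\alpha p_d$ on the corresponding hyperplane, and then invoke Theorem~\ref{gaic}(2),(3) with $L_d = L_{p_d}$ given by formula~\eqref{lq1}. The only addition is your explicit computation verifying $L_2 = L_\A$ (which the paper asserts without proof, though it is also implicit in the general identification \eqref{loqc} applied to $q = |x|^2$ and Theorem~\ref{ci}).
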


\begin{remark}
It is not possible to further extend the $A(n_1, n_2, n_3)$ configuration by allowing four (or more) groups of variables. The obstruction to that is the Calabi-Yau condition $a^2+b^2+c^2=0$: if we had four groups, with parameters $a,b,c,d$, say, we would need the sum of squares to be zero for each three of them which would force $a=b=c=d=0$.    
\end{remark}

\subsection{$\boldsymbol{BC(n_1, n_2, n_3)}$ configuration}
\la{BC123}
This is a $BC$-type generalisation of the $A(n_1,n_2,n_3)$ family. To describe it
we will use the notation and  conventions introduced in the previous section.

The configuration has the form 
$\A=R_1\sqcup R_2\sqcup R_3\sqcup R_{12}\sqcup R_{23}\sqcup R_{13}$ with the vectors and multiplicities in each group as follows:
\begin{align*}
R_1&=
\{\pm e_i\pm e_j \ \text{with $i<j$ and $k_\alpha=b^2/a^2$}\}
\sqcup 
\{\pm e_i\ \text{with $k_\alpha=l_1$}\}\,,
\\
R_2&=
\{\pm e'_i\pm e'_j \ \text{with $i<j$ and $k_\alpha=c^2/b^2$}\}
\sqcup 
\{\pm e'_i\ \text{with $k_\alpha=l_2$}\}\,,
\\
R_3&=
\{\pm e''_i\pm e''_j \ \text{with $i<j$ and $k_\alpha=a^2/c^2$}\}
\sqcup 
\{\pm e''_i\ \text{with $k_\alpha=l_3$}\}\,,
\\
R_{12}&=\{\alpha = \pm ae_i\pm be'_j\}_{i,j} \quad\text{with $k_\alpha=1$}\,,\\
R_{23}&=\{\alpha = \pm be'_i\pm ce''_j\}_{i,j} \quad\text{with $k_\alpha=1$}\,,\\
R_{13}&=\{\alpha = \pm ae_i\pm ce''_j\}_{i,j} \quad\text{with $k_\alpha=1$}\,.
\end{align*}
Here $a^2+b^2+c^2=0$ and $l_1, l_2, l_3$ are such that
\begin{equation*}
a^2(2l_1+1)=\pm b^2(2l_2+1)=\pm c^2(2l_3+1)\,.
\end{equation*}
The Coxeter group $W$ in this case has the root system $R=R_1\sqcup R_2\sqcup R_3$ of type $B_{n_1}\times B_{n_2}\times B_{n_3}$. By an easy inspection, all two-dimensional sub-configurations in $BC(n_1, n_2, n_3)$ are equivalent to those in Example \ref{2dex}.

The deformed Calogero--Moser operator is
\begin{align*}
L_\A&=\Delta_1+\frac{b^2c^2}{a^4}\sum_{i<j, \pm}\frac{2}{(x_i\pm x_j)^{2}}- \sum_{i} \frac{l_1(l_1+1)}{x_i^{2}}+a^2\sum_{i,j,\pm}\frac{2}{(bx'_i\pm cx''_j)^{2}}+\\
&+\Delta_2+\frac{a^2c^2}{b^4}\sum_{i<j, \pm}\frac{2}{(x'_i\pm x'_j)^{2}}- \sum_{i} \frac{l_2(l_2+1)}{(x_i^{'})^2}+b^2\sum_{i,j, \pm}\frac{2}{(ax_i\pm cx''_j)^{2}}+\\
&+\Delta_3+\frac{a^2b^2}{c^4}\sum_{i<j, \pm}\frac{2}{(x''_i\pm x''_j)^{2}}- \sum_{i} \frac{l_3(l_3+1)}{(x_i^{''})^2}+c^2\sum_{i,j, \pm}\frac{2}{(ax_i\pm bx'_j)^{2}}\,.
\end{align*}
When $(n_1,n_2,n_3)=(n,m,0)$ we have $R_3=R_{23}=R_{13}=\varnothing $ in which case $BC(n_1,n_2,n_3)$ reduces to $BC(n, m)$ with parameters $k=b^2/a^2$ and $l_1, l_2$. 

By Theorem \ref{gaic}, the above Calogero--Moser operator $L_\A$ is completely integrable. We also have a result analogous to Proposition \ref{a3int}. To be precise, for $\, d=1,\,2,\ldots\,$,
consider the following  quasi-invariant polynomials
\begin{equation*}
    p_d=a^{2d-2}\sum_{i=1}^{n_1}x_i^{2d}+b^{2d-2}\sum_{i=1}^{n_2}(x'_i)^{2d}+
    c^{2d-2}\sum_{i=1}^{n_3}(x''_i)^{2d}
\end{equation*}
and set 
\begin{equation*}
L_d := \frac{1}{2^{2d}(2d)!}\ad_{L_\A}^{2d}(p_d)\,.
\end{equation*}
\begin{prop}
$\,[L_{d_1}, L_{d_2}]=0\,$ for all $ d_1,\,d_2 \ge 1 $. Moreover,
$ L_1 = L_\A $. 
\end{prop}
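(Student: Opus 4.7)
The plan is to invoke Theorem~\ref{gaic}(2) directly: having identified a family of $W$-invariant quasi-invariants $p_d \in Q_{\A,W}$, the operators $L_{p_d}$ produced by the theorem automatically commute, and the explicit formula \eqref{lq1} identifies them with the $L_d$ defined in the statement. So the work reduces to (a) checking $p_d \in Q_{\A,W}$ and (b) verifying the base case $L_1 = L_\A$.

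For (a), the $W$-invariance of $p_d$ is immediate: $W = W(B_{n_1}) \times W(B_{n_2}) \times W(B_{n_3})$ permutes the variables within each of the three blocks and flips their signs, and $p_d$ is both symmetric in each block and a polynomial in even powers only. For the quasi-invariance conditions in Definition~\ref{gqi}, we need to show that $p_d(x) - p_d(s_\alpha x)$ is divisible by $(\alpha,x)^2$ for each $\alpha \in \A \setminus R$ (recall $k_\alpha = 1$ on these vectors). Since $p_d(x) - p_d(s_\alpha x)$ is $s_\alpha$-antisymmetric, it is automatically divisible by $(\alpha, x)$; therefore it suffices to check that $\partial_\alpha p_d$ vanishes on the hyperplane $(\alpha, x) = 0$. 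Take for instance $\alpha = ae_i - be'_j$: then
\[
\partial_\alpha p_d = 2d\bigl((a x_i)^{2d-1} - (b x'_j)^{2d-1}\bigr),
\]
which vanishes on $ax_i = bx'_j$. For $\alpha = ae_i + be'_j$ the corresponding directional derivative is $2d((ax_i)^{2d-1} + (bx'_j)^{2d-1})$, vanishing on $ax_i + bx'_j = 0$ because $2d-1$ is odd. The remaining vector types in $R_{12} \cup R_{23} \cup R_{13}$ are handled identically. Hence $p_d \in Q_{\A,W}$.

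Now Theorem~\ref{gaic}(2), together with the explicit formula \eqref{lq1}, produces pairwise commuting operators
\[
L_{p_d} = \frac{1}{2^{2d}(2d)!}\,\ad_{L_\A}^{2d}(p_d) = L_d,
\]
proving $[L_{d_1}, L_{d_2}] = 0$ for all $d_1, d_2 \ge 1$.

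For (b), $p_1 = \sum_i x_i^2 + \sum_i (x'_i)^2 + \sum_i (x''_i)^2 = |x|^2$ is the standard quadratic form on $V$, so $L_1 = \tfrac{1}{8}\ad_{L_\A}^2(|x|^2)$. A standard Weyl-algebra calculation gives $[\Delta, |x|^2] = 4E + 2\dim V$, where $E = \sum_k x_k \partial_k$ is the Euler operator, while $[u_\A, |x|^2] = 0$; hence $\ad_{L_\A}(|x|^2) = 4E + 2\dim V$. Applying $\ad_{L_\A}$ once more, the constant drops out and one computes $[\Delta, 4E] = 8\Delta$ and $[-u_\A, 4E] = -4 \cdot E(u_\A) = -4\cdot(-2u_\A) = 8u_\A$, using that $u_\A$ is homogeneous of degree $-2$. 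Thus $\ad_{L_\A}^2(|x|^2) = 8(\Delta - u_\A) = 8 L_\A$, giving $L_1 = L_\A$. The only genuine content is the quasi-invariance check in (a), and that reduces to the elementary identity on the hyperplane $(\alpha,x)=0$; everything else is a direct appeal to results already proved.
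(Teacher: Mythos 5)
Your proof is correct and takes essentially the same route the paper intends: the paper states the proposition without argument, deferring (as in the $A(n_1,n_2,n_3)$ case, Proposition~\ref{a3int}) to Theorem~\ref{gaic}(2) once one checks $p_d\in Q_\A$, and your quasi-invariance computation mirrors the one the paper spells out for $A(n_1,n_2,n_3)$ (with the correct adaptation to the even powers and the $\pm$ signs). One small slip worth fixing: in the verification of $L_1=L_\A$, the intermediate line asserts $[-u_\A,4E]=-4\,E(u_\A)=8u_\A$, whereas in fact $[-u_\A,4E]=4\,E(u_\A)=-8u_\A$ (for a function $g$, $[g,E]=-E(g)$ as multiplication operators). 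You nevertheless write the correct final identity $\ad_{L_\A}^2(|x|^2)=8\Delta-8u_\A=8L_\A$, so the conclusion stands; only the displayed intermediate value has the wrong sign.
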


\begin{remark}
There exists a trigonometric version of the $BC(n_1,n_2,n_3)$ that has additional parameters $k_1,k_2, k_3$ satisfying the relation $a^2k_1=b^2k_2=c^2k_3$. In the case $(n_1,n_2,n_3)=(n,m,0)$ it reduces to the operator \cite[(5)]{SV} with parameters 
$k=b^2/a^2$, $p=k_1$, $q=l_1$, $r=k_2$, $s=l_2$.  
\end{remark}

%
\subsection{Restricted Coxeter configurations}\la{rc}

Another class of generalised locus configurations can be found in \cite{F}. These configurations appear as restrictions of Coxeter root systems onto suitable subspaces (parabolic strata). They are labelled by pairs $(\Gamma, \Gamma_0)$ of Dynkin diagrams where $\Gamma_0\subset \Gamma$ is possibly disconnected. For a given $\Gamma$, the admissible sub-diagrams $\Gamma_0$ are characterized by a certain geometric condition (see \cite{F}, Theorems 1-3). The classical case $\Gamma=A_n, B_n, D_n$ leads to special cases of the configurations already listed above. The list of possibilities for the exceptional cases $\Gamma=F_4, E_{6-8}, H_4, H_3$ includes $43$ cases and can be found in Section 6 of \cite{F}. For example, one has the following configuration $(F_4, A_1)$ in $\c^3$, see \cite[(27)]{F}:
$$\left\{
\begin{array}{lll}
\pm e_i, &  1\le i\le 3, & \text{with }\ k_\alpha=2c+\frac12\,,\\ \pm e_i \pm e_{j}, &  1\le i< j\le 3\,,  & \text{with }\
k_\alpha=c\,,\\ \pm e_1\pm e_2\pm e_3 &  &
\text{with }\ k_\alpha=1\,.
\end{array}
\right. $$
We have checked, case-by-case, that all two-dimensional configurations in \cite{F} satisfy the generalised locus conditions (and, in fact, can be constructed by the method of Proposition \ref{2d} below). Apart from the two-dimensional and Coxeter configurations, the list in \cite[Section 6]{F} contains $23$ additional cases. Note that the fact that all of these are indeed generalised locus configurations does \emph{not} follow directly from their construction in \cite{F}, and so one has to rely upon case-by-case verification. We expect that all configurations in \cite{F} satisfy Definition \ref{defloc}, though we have not checked this for all of the cases. Note that by Proposition \ref{2dim}, it is sufficient to check the conditions of Definition \ref{defloc} for all two-dimensional subconfigurations. 

\medskip

\begin{remark} The configurations described in \ref{cc}, \ref{sa} and \ref{rc} are non-twisted. This is obvious for the case \ref{cc}; for the cases in \ref{sa} it can be checked directly. For those cases in \ref{rc} where $k_\alpha=1$ for $\alpha\in\A\setminus R$, this follows from \cite[Proposition 2]{F}; for the remaining cases it can be verified directly, case by case. The $A_{n-1, 2}$ configuration is twisted; the same is true for $A(n_1,n_2,n_3)$ and $B(n_1,n_2,n_3)$ with $n_1,n_2,n_3\ge 1$. Also, the configurations constructed in Proposition \ref{2d} below
are twisted in general.
\end{remark}

\begin{remark}
Locus configurations with $R=\varnothing$ can be obtained from the above cases by specialising parameters. The list of all such configurations currently known consists of: (1) Coxeter configurations, with all $k_\alpha\in\Z$, (2) $A(n,1)$ with $k\in \Z$, (3) $B(n,1)$ with $k, a, b\in\Z$, (4) $A_{n-1, 2}$ with $k\in\Z$, and (5) the Berest--Lutsenko family in $\c^2$ (see \cite{BL, CFV, C08}).
\end{remark}

\section{Two-dimensional configurations}
\label{twodim}

In dimension two, the problem of describing generalised locus configurations can be reduced to a one-dimensional problem which, in turn, can be solved with the help of Darboux transformations. To begin with, note that for a locus configuration in $\c^2$, the Coxeter group $W$ can be one of the following: (1) $W=\{e\}$ with $R=\varnothing$, (2) $W=\Z_2$ with $R=\{\pm\alpha\}$, or (3) $W=I_N$, the dihedral group of order $2N$, $N\ge 2$. We analyse first the case of $W=I_2$; the other cases will follow a similar pattern. 

\subsection{The $W=I_2$ case}\label{i2}
The Calogero--Moser operator for $W=I_2$ can be written in polar coordinates $r, \varphi$ as    
\begin{equation}\label{dpt} 
L=\frac{\partial^2}{\partial r^2}+r^{-2}L_0\,,\qquad L_0=\frac{\partial^2}{\partial\varphi^2}-\frac{\alpha^2-\frac14}{\sin^{2}\varphi}-\frac{\beta^2-\frac14}{\cos^{2}\varphi}\,,
\end{equation}
where $L_0$ is known as the Darboux--P\"oschl--Teller operator and is closely related to the Jacobi differential operator. A generalised locus configuration $\A$ of type $W=I_{2}$ is obtained by adding to $e_1, e_2$ with multiplicities $\alpha-\frac12$, $\beta-\frac12$ a number of vectors with integral multiplicities. 
Therefore,
\begin{align}\label{dpt1} 
L_\A&=\frac{\partial^2}{\partial r^2}+r^{-2}L_1\,,\qquad L_1=\frac{\partial^2}{\partial\varphi^2}-v(\varphi)\,,
\\
\label{dpt2}
v(\varphi)&=\frac{\alpha^2-\frac14}{\sin^{2}\varphi}+\frac{\beta^2-\frac14}{\cos^{2}\varphi}+\sum_i \frac{k_i(k_i+1)}{\sin^{2}(\varphi-\varphi_i)}\,,
\end{align}
where $\varphi_i\in\c$ describe the positions of the added lines with $k_i\in\Z_+$. The locus conditions \eqref{loc} for $\alpha=(\cos\varphi_i, \sin\varphi_i)$ translate into
\begin{equation}\label{trigloc}
 v(\varphi) -v(s_i\varphi)\quad\text{is divisible by}\quad (\varphi-\varphi_i)^{2k_i}\,,
\end{equation}
for each of the reflections $s_i:\,\varphi\mapsto 2\varphi_i-\varphi$. The problem of describing such functions $v$ is closely related to Darboux transformations. We have the following result.
\begin{prop}\label{dt}
The potential $v$ of the form \eqref{dpt2} satisfies conditions \eqref{trigloc} if and only if there exists a differential $D(\varphi, \frac{\partial}{\partial\varphi})$ with trigonometric coefficients, which intertwines the operators $L_0$ \eqref{dpt} and $L_1$ \eqref{dpt1}, i.e. such that $L_1D=DL_0$. 
\end{prop}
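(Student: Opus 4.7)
The plan is to prove both implications by reducing to a local analysis at each added line $\varphi = \varphi_i$, using Theorem~\ref{IT} in one variable for the forward direction and a direct Frobenius/monodromy analysis of the regular singular points of $L_1$ for the reverse direction.

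For $(\Rightarrow)$, I would apply Theorem~\ref{IT} verbatim in the one-variable trigonometric setting. Specifically, take $R$ to be the algebra of $W$-invariant trigonometric functions regular on $\vreg$, $U_0 = \c[\vreg]^W$, $S$ the Ore set generated by $\delta_k = \prod_i \sin^{k_i}(\varphi-\varphi_i)$, and $B$ the subalgebra of $\D(\vreg)^W$ preserving $U_0$. Define $U \subseteq \delta_k^{-1}U_0$ as the space of functions $f$ such that $f(s_i\varphi)-(-1)^{k_i}f(\varphi)$ is divisible by $(\varphi-\varphi_i)^{k_i}$ at each added line, in strict analogy with \eqref{loc11}. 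The locus condition \eqref{trigloc} then implies $L_1[U]\subseteq U$ by the same local Laurent-expansion argument used in Lemma~\ref{inu}: functions in $U$ are characterized by containing only the exponents $-k_i + 2\Z_{\ge 0}$ and $k_i+1 + 2\Z_{\ge 0}$ in their Laurent expansion at $\varphi_i$, and \eqref{trigloc} says $v$ has no odd Laurent coefficients in degrees $-1, 1, \ldots, 2k_i-1$, which is exactly what is needed to preserve these parities under $L_1$. Condition (2) of Theorem~\ref{IT} is immediate since $L_1-L_0$ is a zero-order operator, so the theorem produces a nonzero $D$ with $L_1 D = D L_0$.

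For $(\Leftarrow)$, suppose a nonzero $D$ with $L_1 D = D L_0$ is given; I would deduce \eqref{trigloc} at each $\varphi_i$ separately. Near $\varphi_i$, $L_0$ is holomorphic, while $L_1-\lambda$ is Fuchsian with indicial exponents $-k_i$ and $k_i+1$ differing by the positive integer $2k_i+1$. Classical Fuchs theory gives the key local equivalence: the absence of logarithmic terms in both Frobenius solutions of $(L_1-\lambda)\chi=0$ for every $\lambda$ is equivalent to the vanishing of the obstructions to the Frobenius recursion at the resonance index, and these obstructions unwind to the system of $k_i+1$ scalar equations asserting that the odd Laurent coefficients of $v$ at $\varphi_i$ in degrees $-1, 1, \ldots, 2k_i-1$ vanish --- which is precisely \eqref{trigloc}. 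To obtain the no-log property from the existence of $D$, choose a pair $\psi_\pm(\varphi;\lambda)$ of holomorphic solutions of $(L_0-\lambda)\psi=0$ near $\varphi_i$, and set $\chi_\pm := D\psi_\pm$. These are single-valued and at worst meromorphic at $\varphi_i$ (since $D$ has trigonometric coefficients, with at most finite-order poles there), and they satisfy $(L_1-\lambda)\chi_\pm=0$. For generic $\lambda$ the $\chi_\pm$ are linearly independent: otherwise $\ker D$ would contain nonzero eigenfunctions of $L_0$ for infinitely many distinct eigenvalues, contradicting $\dim_{\c}\ker D \le \ord(D) < \infty$. Hence for generic (and therefore, by analytic continuation in $\lambda$, for all) $\lambda$, the local solution space of $L_1-\lambda$ at $\varphi_i$ is spanned by single-valued meromorphic functions, i.e.\ has trivial monodromy, giving \eqref{trigloc}.

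The main obstacle is the reverse direction, and within it the bookkeeping of the equivalence between triviality of monodromy ``for every $\lambda$'' and the full system \eqref{trigloc}. For a single $\lambda$, no-log is just one scalar equation coming from the unique resonance in the Frobenius recursion; but demanding no-log for a Zariski-dense set of $\lambda$ polynomially forces $k_i+1$ independent vanishing conditions, one for each odd Laurent coefficient of $v$ of degree $\le 2k_i-1$. This is the ``every-$\lambda$'' enhancement of the classical one-dimensional bispectral argument of Duistermaat--Gr\"unbaum \cite{DG}, adapted to the trigonometric setting; it is elementary but requires writing out the recursion explicitly and tracking the $\lambda$-dependence of each obstruction coefficient.
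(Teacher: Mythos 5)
Your $(\Leftarrow)$ direction — from existence of a $\varphi$-intertwiner $D$ to the locus conditions \eqref{trigloc} — is in substance the paper's own argument. The paper observes that $D$ transports single-valued meromorphic eigenfunctions of $L_0$ near $\varphi_i$ to local eigenfunctions of $L_1$, and then invokes \cite[Prop.~3.3]{DG} to translate ``trivial local monodromy for every $\lambda$'' into the vanishing of the odd Laurent coefficients $v_{-1}, v_1, \ldots, v_{2k_i-1}$. You re-derive the DG characterization via Frobenius/resonance bookkeeping instead of citing it; that is a valid but longer route, and your argument for linear independence of $D\psi_{\pm}$ for generic $\lambda$ (finite-dimensionality of $\ker D$) is the right one.

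Your $(\Rightarrow)$ direction, however, has a genuine gap, and this is exactly the part the paper declines to prove (``the `only if' part will be discussed elsewhere''). Applying Theorem~\ref{IT} ``verbatim in the one-variable trigonometric setting'' requires $L_0$ (here the Darboux--P\"oschl--Teller operator) to be locally ad-nilpotent on the ambient algebra $B$ of trigonometric differential operators. It is not: for a non-constant coefficient $f(\varphi)$, $\ad_{L_0}(f) = f'' + 2f'\partial_\varphi$ is first order, and iterating $\ad_{L_0}$ does not terminate, because there is no grading playing the role of the homogeneous polynomial grading that underlies the ad-nilpotency of $L_W$ in $B_k$ (\cite[Lemma 3.3(v)]{BEG} depends on the PBW structure $\c[V]\otimes\c W\otimes\c[V^*]$ of the rational Cherednik algebra). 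Without ad-nilpotency of $L_0$, the degree function of Lemma~\ref{filtl} is undefined, so your claim that condition (2) of Theorem~\ref{IT} ``is immediate since $L_1-L_0$ is a zero-order operator'' does not make sense: condition (2) concerns the ad-nilpotent degree, not the differential order, and in the rational case $\deg(L-L_0)=-2$ comes from homogeneity, a structure absent here. The paper itself flags this obstruction in its closing remark (Section~\ref{S9}): ``Our methods cannot be applied verbatim to [the trigonometric] case and require non-trivial modifications.'' Note also that the 2D shift operator $S$ produced by Theorem~\ref{gaic} for $W=I_{2n}$ depends on both $r$ and $\varphi$, so it does not immediately yield a purely angular intertwiner $D(\varphi,\partial_\varphi)$; some separation-of-variables or reduction argument is needed, and that is the missing idea.
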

We will only prove the ``if'' part; the ``only if'' part will be discussed elsewhere. 
\begin{proof} Suppose such an intertwiner $D$ exists. This means that for generic $\lambda\in\c$, $D$ sets up a bijection between $\Ker(L_0-\lambda)$ and $\Ker(L_1-\lambda)$. Note that eigenfunctions of $L_0$ are meromorphic near $\varphi=\varphi_i$; hence, so are eigenfunctions of $L_1$. Thus, for generic $\lambda$, all solutions $f$ to the one-dimensional Schr\"odinger equation \begin{equation*}
\left(\frac{\partial^2}{\partial\varphi^2}-v(\varphi)\right)f=\lambda f    
\end{equation*} 
are meromorphic (single-valued) near $\varphi=\varphi_i$.
By \cite[Prop. 3.3]{DG}, this is equivalent to the following property of the Laurent series for $v$:
\begin{equation*}\label{trigloc1}
    v=\sum_{j=-2}^\infty v_j(\varphi-\varphi_i)^j\,,\qquad v_{-2}=k_i(k_i+1)\,,\quad v_{-1}=v_1=\dots =v_{2k_i-1}=0\,.
\end{equation*}
This is obviously equivalent to \eqref{trigloc}.
\end{proof}
Proposition \ref{dt} tells us that $L_1$ is related to $L_0$ by a higher-order Darboux transformation. Under suitable circumstances, one can obtain $D$ by iterating elementary (first order) Darboux transformation, leading to an explicit formula for the potential $v$. To state the result, recall 
that $L_0$ has a well-known family of eigenfunctions of the form 
\begin{equation*}
\psi_n(x)=(\sin x)^{\alpha+\frac12}(\cos x)^{\beta+\frac12} P_n^{\alpha, \beta}(\cos 2x)\,,\quad n=0, 1, 2, \dots\,,
\end{equation*} 
where $P_n^{\alpha, \beta}(z)$ are the classical Jacobi polynomials. Since $L_0$ does not change under $\alpha\mapsto -\alpha$ or $\beta\mapsto -\beta$, we obtain four families of (formal) eigenfunctions by using $\pm \alpha, \pm \beta$ in the above $\psi_n$. Let $\mathcal F$ denote the union of these four families of functions.

\begin{prop}
\label{2d}
\begin{enumerate}
\item[(1)] For distinct $f_1, \dots, f_m\in\mathcal F$, the potential \begin{equation}\label{wr}
v=\frac{\alpha^2-\frac14}{\sin^{2}\varphi}+\frac{\beta^2-\frac14}{\cos^{2}\varphi}-2\frac{d^2}{d\varphi^2}\log \mathcal W\,,\quad \mathcal W=\mathrm{Wr}(f_1, \dots, f_m)\,,
\end{equation}  
satisfies the conditions \eqref{trigloc} and, therefore, the singularities of $u=r^{-2}v(\varphi)$ form a locus configuration of type $W=I_2$. 

\item[(2)] Assuming $\alpha, \beta$ are generic, the formula \eqref{wr} produces \emph{all} locus configurations of type $W=I_2$ in $\c^2$.
\end{enumerate}
\end{prop}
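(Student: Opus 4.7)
The plan is to translate the locus property into an intertwining relation via Proposition~\ref{dt}, and then to match Darboux--Crum intertwiners with Wronskians of eigenfunctions of $L_0$ drawn from the four-parameter family $\mathcal F$. Part (1) is constructive: given the $f_i$, exhibit a $D$. Part (2) is inverse: starting from $v$, extract an $L_0$-invariant subspace and identify its basis with members of $\mathcal F$.

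For Part (1), each $f_i\in\mathcal F$ is an eigenfunction of $L_0$ because $L_0$ depends on $\alpha^2$ and $\beta^2$ only, hence is invariant under $\alpha\mapsto-\alpha$ and $\beta\mapsto-\beta$; a direct exponent computation shows that the eigenvalues attached to different $\psi_n^{\epsilon_1\alpha,\epsilon_2\beta}$, $\epsilon_i\in\{\pm 1\}$, are pairwise distinct for generic $(\alpha,\beta)$. Crum's iteration of the Darboux transformation then produces
\[
D \,:=\, \mathcal W^{-1}\cdot\mathrm{Wr}(f_1,\dots,f_m,\,\cdot\,),\qquad \mathcal W=\mathrm{Wr}(f_1,\dots,f_m),
\]
a differential operator of order $m$ with trigonometric coefficients satisfying $L_1 D = D L_0$, where $L_1 = L_0 - 2(\log\mathcal W)''$ is precisely the operator with potential \eqref{wr}. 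The ``if'' direction of Proposition~\ref{dt}, which is established in the proof of that proposition, then yields the locus conditions \eqref{trigloc}, so $u=r^{-2}v(\varphi)$ is a locus configuration of type $I_2$.

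For Part (2), start from a locus configuration of type $I_2$ with potential \eqref{dpt2}; the ``only if'' direction of Proposition~\ref{dt} (invoked as a black box) supplies a nonzero intertwiner $D$ of some order $m$ with $L_1 D = D L_0$. The kernel $K:=\ker D$ is an $L_0$-invariant subspace of dimension $m$ of the local solution space. Generic distinctness of $L_0$-eigenvalues forces $K$ to diagonalise as $K=\c\phi_1\oplus\dots\oplus\c\phi_m$ with each $\phi_i$ an honest $L_0$-eigenfunction. A standard Wronskian computation (the converse side of Crum's construction) yields
\[
v-v_0 \,=\, -2\bigl(\log\mathrm{Wr}(\phi_1,\dots,\phi_m)\bigr)'',
\]
so formula \eqref{wr} will be recovered as soon as one shows each $\phi_i$ lies in $\mathcal F$.

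The main obstacle is precisely this last identification. Two constraints combine to force $\phi_i\in\mathcal F$. First, since $D$ has trigonometric coefficients, every element of $\ker D$ is single-valued and meromorphic on $\c$ away from the poles of $v$, so each $\phi_i$ is a single-valued solution of a Jacobi hypergeometric equation via $z=\cos 2\varphi$. Second, the required form of $v$ (only double poles with coefficients $k_j(k_j+1)$) constrains $\mathcal W$ to be, up to a constant, of the form $\sin^a\varphi\,\cos^b\varphi\prod_j\sin^{c_j}(\varphi-\varphi_j)$ with $a,b\in\Z$ and $c_j\in\Z_{>0}$. For generic $(\alpha,\beta)$ the local exponents of $L_0$ at $\sin\varphi=0$ and $\cos\varphi=0$ are $\tfrac12\pm\alpha$ and $\tfrac12\pm\beta$, and matching integer exponents of $\mathcal W$ against single-valuedness identifies each $\phi_i$ with a Jacobi-polynomial eigenfunction of one of the four sign combinations $(\pm\alpha,\pm\beta)$, i.e.\ with an element of $\mathcal F$. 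Making this monodromy/exponent argument airtight --- in particular ruling out other single-valued hypergeometric branches --- is the delicate technical step; I expect a careful analysis in the Jacobi variable $z$, supplemented if needed by a bispectral argument tracking the spectral parameter $\mu$, will complete it.
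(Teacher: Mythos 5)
Your proof of part (1) is correct and takes essentially the same route as the paper: Crum's iterated Darboux transformation yields a trigonometric intertwiner $D$ with $L_1 D = D L_0$, and the proven ``if'' direction of Proposition~\ref{dt} then delivers the locus conditions for $v$. The extra check that the eigenvalues attached to $\psi_n^{\pm\alpha,\pm\beta}$ are pairwise distinct for generic $(\alpha,\beta)$ is unnecessary here: Crum's construction needs only that the $f_i$ span an $L_0$-invariant subspace with nonvanishing Wronskian, which holds for any set of linearly independent $L_0$-eigenfunctions regardless of eigenvalue multiplicities.

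For part (2), the paper explicitly defers the argument (``part (2) will be discussed elsewhere''), so there is no paper proof to compare against. Your sketch relies on the ``only if'' direction of Proposition~\ref{dt}, which the paper also leaves unproved, and then on an exponent/monodromy identification of $\ker D$ with a subset of $\mathcal F$ that you explicitly flag as the delicate open step. As it stands, this is a plausible roadmap rather than a proof of part (2) — which is consistent with the paper, since the paper proves only part (1).
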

We will only prove part (1); part (2) will be discussed elsewhere. 
\begin{proof}
Let 
\begin{equation*}
v_0=\frac{\alpha^2-\frac14}{\sin^{2}\varphi}+\frac{\beta^2-\frac14}{\cos^{2}\varphi}\,,\qquad v=v_0-2\frac{d^2}{d\varphi^2}\log \mathcal W
\,.\end{equation*}
 By a standard result on Darboux transformations (see e.g. \cite{Cr}), the operators   
\begin{equation*}
L_0=\frac{\partial^2}{\partial\varphi^2}-v_0\,,\qquad L_1=\frac{\partial^2}{\partial\varphi^2}-v
\end{equation*}
are intertwined by a (monic) differential operator $D$ whose kernel is spanned by all of $f_i$. The choice of $f_i$ makes it clear that $D$ will have trigonometric coefficients. By Proposition \ref{dt}, the potential $v$ satisfies \eqref{trigloc}. This implies locus conditions \eqref{loc} for $u=r^{-2}v(\varphi)$, as needed. 
\end{proof}
\begin{remark} For Proposition \ref{2d}(2) to be valid, it is enough to assume that $\alpha, \beta, \alpha\pm\beta\notin\Z$.  For special values of $\alpha, \beta$ there are more general solutions $v(\varphi)$ than those described by \eqref{wr}. The corresponding Darboux transformations have been studied in the context of exceptional orthogonal polynomials, see \cite{GGMM} and references therein. Their full classification is not known, to the best of our knowledge.  
\end{remark}

\subsection{The $W=I_{N}$ case}\label{in}
The Calogero--Moser operator for $W=I_N$ can be written as
$L=\Delta-r^{-2}v_0$,
where
\begin{equation*}
v_0=\begin{cases} (\alpha^2-\frac14)N^2\sin^{-2}N\varphi\quad&\text{for $N$ odd}\,,\\(\alpha^2-\frac14)n^2\sin^{-2}n\varphi+(\beta^2-\frac14)n^2\cos^{-2}n\varphi\quad&\text{for $N=2n$ even}\,. \end{cases}
\end{equation*}
The group $W$ is generated by the reflection $\varphi\mapsto -\varphi$ and rotation $\varphi\mapsto \varphi+2\pi/N$. Note, however, that any configuration is automatically invariant under $\varphi\mapsto \varphi +\pi$. Thus, even when $N$ is odd, the full symmetry group can be taken as $W=I_{2N}$. This allows us to consider the case of odd $N$ as a special case of $W=I_{2N}$, with $\beta=1/2$. Thus, below we restrict ourselves to the case $W=I_{2n}$ and $v_0=(\alpha^2-\frac14)n^2\sin^{-2}n\varphi+(\beta^2-\frac14)n^2\cos^{-2}n\varphi$. 

\medskip

A locus configuration $\A$ of type $W=I_{2n}$ must be invariant under $\varphi\mapsto \varphi+\pi/n$, therefore, similarly to the $W=I_2$ case,
\begin{equation}\label{cm2d} 
L_\A=\frac{\partial^2}{\partial r^2}+r^{-2}\left(\frac{\partial^2}{\partial\varphi^2}-v(\varphi)\right)\,,\qquad v(\varphi)=v_0+\sum_i\frac{k_i(k_i+1)n^2}{\sin^{2}(n\varphi-\varphi_i)}\,,
\end{equation}
for some $\varphi_i\in\c$ and $k_i\in\Z_+$. As we have seen in the proof of Proposition \ref{dt}, the locus conditions \eqref{loc} express the property that eigenfunctions of $L_1=\frac{\partial^2}{\partial\varphi^2}-v(\varphi)$ are single-valued near each of the singular points $\varphi=\varphi_i$. Clearly, this property is preserved under rescaling  $\varphi\mapsto \varphi/n$, which puts $L_1$ in the form
\begin{equation*} 
L_1=n^2\left(\frac{\partial^2}{\partial\varphi^2}-\frac{(\alpha^2-\frac14)}{\sin^{2}\varphi}-\frac{(\beta^2-\frac14)}{\cos^{2}\varphi}-\sum_i \frac{k_i(k_i+1)}{\sin^{2}(\varphi-\varphi_i)}\right)\,.
\end{equation*}
This reduces our locus configuration to the one of type $W=I_2$. Hence, we have the following result.

\begin{prop}
\label{2ton}
For a generalised locus configuration $\A\subset\c^2$, write the potential $u_\A$ \eqref{gcmu} in polar coordinates as $u_\A=r^{-2}v(\varphi)$. The mapping 
\begin{equation*}
u_\A\mapsto u_{\A'}\,,\qquad r^{-2}v(\varphi)\mapsto n^2r^{-2}v(n\varphi)
\end{equation*}
establishes a one-to-one correspondence between locus configurations $\A$ of type $W=I_2$ and locus configurations $\A'$ of type $W=I_{2n}$.
\end{prop}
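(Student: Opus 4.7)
The plan is to verify directly that the angular rescaling $\varphi \mapsto n\varphi$ sets up the bijection between the two classes of configurations, by matching both the shape of the potential and the locus conditions on either side.

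First I would unpack the two sides. An $I_2$-type locus configuration on $\c^2$ has potential $r^{-2} v(\varphi)$ with $v$ of the form \eqref{dpt2}, while an $I_{2n}$-type locus configuration has potential $r^{-2}\tilde v(\varphi)$ with $\tilde v$ of the form appearing in \eqref{cm2d}. A direct substitution shows that setting $\tilde v(\varphi) := n^2 v(n\varphi)$ turns
\[
v(\varphi) = \frac{\alpha^2-\tfrac14}{\sin^2\varphi} + \frac{\beta^2-\tfrac14}{\cos^2\varphi} + \sum_i \frac{k_i(k_i+1)}{\sin^2(\varphi-\varphi_i)}
\]
into
\[
\tilde v(\varphi) = \frac{(\alpha^2-\tfrac14)n^2}{\sin^2 n\varphi} + \frac{(\beta^2-\tfrac14)n^2}{\cos^2 n\varphi} + \sum_i \frac{k_i(k_i+1)n^2}{\sin^2(n\varphi-\varphi_i)},
\]
which is exactly of the form \eqref{cm2d}; the inverse assignment $v(\varphi) := n^{-2}\tilde v(\varphi/n)$ is equally straightforward. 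Thus the map is a bijection at the level of formal angular potentials.

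Second, I would check that $W$-invariance is correctly transported. The dihedral group $I_{2n}$ is generated by $\varphi\mapsto -\varphi$ and $\varphi\mapsto\varphi+\pi/n$; the function $\tilde v(\varphi) = n^2 v(n\varphi)$ is automatically $\pi/n$-periodic because $v$ is $\pi$-periodic (every summand being a $\sin^{-2}$ of a shift), and it is even in $\varphi$ precisely when $v$ is even, i.e., when the collection $\{(\varphi_i,k_i)\}$ is invariant under $\varphi_i\mapsto -\varphi_i$. This is exactly the $I_2$-invariance of the starting data, so the $W$-invariance requirements match on both sides.

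Third, and most importantly, I would verify that the locus relations \eqref{trigloc} are preserved under the rescaling. As noted in the proof of Proposition~\ref{dt}, the locus condition at $\varphi_i$ with weight $k_i$ is equivalent to the Laurent expansion
\[
v(\varphi) \,=\, \frac{k_i(k_i+1)}{(\varphi-\varphi_i)^2} + v_0 + v_2(\varphi-\varphi_i)^2 + v_4(\varphi-\varphi_i)^4 + \cdots ,
\]
with all odd-power coefficients up through $(\varphi-\varphi_i)^{2k_i-1}$ vanishing. A one-line computation shows that if $v$ has Laurent coefficients $\{v_j\}$ at $\varphi_i$, then $\tilde v(\varphi) = n^2 v(n\varphi)$ has Laurent coefficients $n^{j+2} v_j$ at $\varphi_i/n$; in particular the vanishing of any given coefficient is preserved. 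Therefore the locus condition for $v$ at $\varphi_i$ with weight $k_i$ is equivalent to the locus condition for $\tilde v$ at $\varphi_i/n$ with the same weight, and by $I_{2n}$-invariance the condition automatically holds at all translates $\varphi_i/n + j\pi/n$. This establishes the required bijection.

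The main content of the proof is in the third step, but even there the work is not hard: the locus conditions are a local Laurent-type property of $v$ at its singular points, and such properties are transparently invariant under an affine change of the angular coordinate. There is no serious analytic obstacle; the verification amounts to bookkeeping of the scaling factors $n^2$ in front of $v$ and inside its argument.
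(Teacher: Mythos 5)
Your proof is correct and follows essentially the same route as the paper: both arguments reduce the claim to the observation (established in the proof of Proposition~\ref{dt}) that the locus conditions are a local property of $v$ at its singular angles which is manifestly invariant under the affine substitution $\varphi\mapsto\varphi/n$. The paper phrases this local property as single-valuedness of eigenfunctions of $L_1$, whereas you use the equivalent Laurent-coefficient formulation and spell out the scaling $v_j\mapsto n^{j+2}v_j$ explicitly; this is a presentational difference, not a different method.
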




Let us illustrate this with two examples of type $W=I_2$, with the Calogero-Moser operator of the following form:
\begin{equation*}
L=\Delta-\frac{k_1(k_1+1)}{x_1^2}-\frac{k_2(k_2+1)}{x_2^2}-\frac{k_3(k_3+1)(1+a^2)}{(x_1-ax_2)^2}-\frac{k_3(k_3+1)(1+a^2)}{(x_1+ax_2)^2}\,.
\end{equation*}
The parameters $k_1, k_2, k_3$ and $a$ are as follows:

\medskip

(1) $k_1, k_2$ arbitrary, $k_3=1$, $a=\sqrt{\frac{2k_1+1}{2k_2+1}}$;

(2) $k_1=\frac{3a^2}{4}-\frac{1}{4}$, $k_2=\frac{3}{4a^2}-\frac{1}{4}$, $k_3=2$, $a$ arbitrary.

\medskip

\noindent In both cases, the locus conditions \eqref{loc1} for $\alpha=e_1\pm ae_2$ are easy to verify directly. 
The first case corresponds to $m=n=1$ in the $B(n,m)$ family. The second case was proposed and studied in \cite{T1}, where the integrability of $L$ (including its elliptic version) was confirmed. Note that the first case can be obtained by setting $m=1$, $f_1=\psi_1$ in Proposition \ref{2d}. 

Now let us apply the substitution $\varphi\mapsto 2\varphi$ in accordance with Proposition \ref{2ton}. This leads to the Calogero--Moser operators of type $W=I_4$ of the form
\begin{multline*}
L=\Delta-\frac{k_1(k_1+1)}{x_1^2}-\frac{k_1(k_1+1)}{x_2^2} -\frac{k_2(k_2+1)}{(x_1+x_2)^2}-\frac{k_2(k_2+1)}{(x_1-x_2)^2}
\\-\frac{k_3(k_3+1)(1+b^2)}{(x_1-bx_2)^2}-\frac{k_3(k_3+1)(1+b^2)}{(x_1+bx_2)^2}
\\-\frac{k_3(k_3+1)(1+b^2)}{(x_2-bx_1)^2}-\frac{k_3(k_3+1)(1+b^2)}{(x_2+bx_1)^2}\,.
\end{multline*}
Here $k_1, k_2, k_3$ are the same as above, while $a$ and $b$ are related by $a={2b}/({1-b^2})$. In the case $k_3=1$, the above $L$ coincides with \cite[(28)]{F}. 

As another example, applying the substitution $\varphi\mapsto 3\varphi$ to the case $(k_1, k_2, k_3, a)=(\frac13, 4, 1, \frac{\sqrt{5}}{3\sqrt{3}})$, one obtains the configuration that coincides (up to an overall rotation) with the configuration $(\mathcal H_4, \mathcal A_2)$ in \cite{F}.     

\medskip

\begin{remark} Most configurations constructed in Propositions \ref{2d}, \ref{2ton} are twisted. For the study of non-twisted planar configurations we refer to \cite{FJ} (see, in particular, Proposition 3.1 and Theorem 3.4 in {\it loc.~cit.}).
\end{remark}

\subsection{The $W=\Z_2$, $R=\{\pm\alpha\}$ case} Since $\A\to -\A$, the symmetry group $W=\Z_2$ can be extended to $W=I_2$, with arbitrary $\alpha$ and with $\beta=1/2$. Hence, the results of \ref{i2} can be applied. In particular, for generic values of $\alpha$ all such locus configurations can be obtained from Proposition \ref{2d}. 

\subsection{The $W=\{e\}$, $R=\varnothing$ case}\label{blu}
In this case, $\A$ must be a plane locus configuration in the sense of \cite{CFV}. All such $\A$ belong to the so-called Berest--Lutsenko family, see \cite{BL} and \cite[Theorem 4.3]{CFV}. Namely, 
\begin{equation*}
    u_\A=r^{-2}v(\varphi)\quad\text{with}\quad v=-2\frac{d^2}{d\varphi^2}\log \mathrm{Wr}(f_1, \dots, f_m)\,,\quad f_i=\cos(l_i\varphi+\theta_i)\,,
\end{equation*}
where $1\le l_1<\dots<l_m$ are arbitrary integers and $\theta_i\in\c$. 
Hence, \eqref{wr} can be seen as a generalisation of the above family. An important difference is that here, in addition to discrete parameters $l_i$, we also have continuous parameters $\theta_i$.

\section{Deformed Calogero--Moser operators with harmonic oscillator confinement}
\la{S8}
With a generalised locus configuration $\A\subset V$ one can associate the Calogero--Moser operator with an extra ``oscillator term'':

\begin{equation}\label{gcmo}
L_{\A}^\omega:=\Delta-\omega^2x^2-\sum_{\alpha\in \A_+} \frac{k_\alpha(k_\alpha+1)(\alpha,\alpha)}{(\alpha,x)^2}\,,
\end{equation}
where $x^2=(x,x)$ is the Euclidean square norm in $V=\R^n$, and $\omega$ is an arbitrary parameter. In particular, when $\A=R$ is the root system of a Coxeter group $W$, we have
\begin{equation}\label{cmoo}
L_{W}^\omega:=\Delta-\omega^2x^2-\sum_{\alpha\in R_+} \frac{k_\alpha(k_\alpha+1)(\alpha,\alpha)}{(\alpha,x)^2}\,.
\end{equation}
For the classical groups $W=A_n, B_n, D_n$ the operator $L_W^\omega$ is known to be Liouville integrable (see \cite{F} and references therein). For the exceptional groups this does not seem to be known (for dihedral groups the complete integrability is easy to show). Still, for any Coxeter group the operator $L_W^\omega$ has several hallmarks of integrability, also shared by $L_\A^\omega$. These are collected in the following theorem.

\begin{theorem}\label{gaicw} For a Coxeter group $W$, let $\A\subset \c^n$ be a locus configuration of type $W$, with $\delta$ given by \eqref{del}. Let $S$ be the shift operator from Theorem \ref{gaic}\,$(1)$, see \eqref{shiftopd}. For $q\in\c[V]^W$ and $q\in\qaw$, respectively, let $L_{q,0}$ and $L_q$ be the differential operators from Theorem \ref{gaic}\,$(3)$, see \eqref{loqc}, \eqref{lq1}.

\begin{enumerate}
    \item[(1)]  Set $S^\omega=e^{-\omega x^2/2} S e^{\omega x^2/2}$, and write $(S^\omega)^*$ for its formal adjoint. Then 
\begin{gather}
  \la{into}
L_\A^\omega S^\omega=S^\omega (L_W^\omega-2\omega N)\,,\qquad N=\deg\delta\,, 
\intertext{and}
[S^\omega (S^\omega)^*, L_\A^\omega]=0\,.\label{into1}  
\end{gather}

\item[(2)] For a homogeneous $q\in\c[V]^W$, set $L_{q,0}^\omega=e^{-\omega x^2/2} L_{q,0} \,e^{\omega x^2/2}$. Then 
\begin{equation*}
L_{q,0}^\omega  L_{W}^\omega =(L_{W}^\omega +2\omega r)L_{q,0}^\omega\,,\qquad r=\deg q\,.
\end{equation*}
Hence, the operator $L_{q,0}^\omega L_{q,0}^{-\omega}$ commutes with $L_{W}^\omega$.  

\item[(3)] Similarly, for a homogeneous $q\in Q_\A$, set $L_{q}^\omega=e^{-\omega x^2/2} L_{q} \,e^{\omega x^2/2}$. 
Then 
\begin{equation*}
L_{q}^\omega  L_{\A}^\omega =(L_{\A}^\omega +2\omega r)L_{q}^\omega\,,\qquad  r=\deg q\,.
\end{equation*}
Hence, the operator $L_{q}^\omega L_{q}^{-\omega}$ commutes with $L_{\A}^\omega$.

\end{enumerate}

\end{theorem}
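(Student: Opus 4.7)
My plan is to reduce every claim in the theorem to a single gauge-transformation identity combined with the Euler grading on the algebra $B_k$. Setting $G := e^{-\omega x^2/2}$, a direct computation using $G\partial_i G^{-1} = \partial_i + \omega x_i$ yields
$$
G\, L^\omega\, G^{-1} \,=\, L + 2\omega\,\eu + \omega n
$$
for either $L = L_W$ or $L = L_\A$, where $\eu = \sum_i x_i\partial_i$ and $n = \dim V$; the key cancellation is that the $\omega^2 x^2$ produced by $G\Delta G^{-1}$ kills the $-\omega^2 x^2$ sitting inside $L^\omega$. By the definitions in the theorem, $S^\omega = GSG^{-1}$, $L_{q,0}^\omega = GL_{q,0}G^{-1}$ and $L_q^\omega = GL_qG^{-1}$. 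Consequently every identity in the theorem becomes, after conjugation by $G$, a relation inside the un-gauged algebra.

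The second ingredient is the $\eu$-homogeneity of $S$, $L_{q,0}$ and $L_q$. Since $L_W$ and $L_\A$ both have $\eu$-degree $-2$, each of $\ad_{L_W}$, $\ad_{L_\A}$, $\ad_{L_\A, L_W}$ lowers $\eu$-degree by two; combined with $[\eu, \delta_k] = N\delta_k$ and $[\eu, q] = rq$ for $r = \deg q$, the explicit formulas \eqref{shiftopd}, \eqref{loqc}, \eqref{lq1} give
$$
[\eu, S] = -N S, \qquad [\eu, L_{q,0}] = -r\, L_{q,0}, \qquad [\eu, L_q] = -r\, L_q.
$$

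With these two ingredients each part reduces to a one-line computation. For part~(1), conjugating \eqref{into} by $G$ turns it into $(L_\A + 2\omega\eu)S = S(L_W + 2\omega\eu - 2\omega N)$ (the $\omega n$ terms cancel); using $L_\A S = SL_W$ from Theorem~\ref{gaic}(1), the remaining content is $2\omega([\eu, S] + NS) = 0$, which is exactly the homogeneity of $S$. For \eqref{into1}, I take the formal adjoint of \eqref{into} --- valid because $L_\A^\omega$ and $L_W^\omega$ are formally self-adjoint --- to obtain $(S^\omega)^* L_\A^\omega = (L_W^\omega - 2\omega N)(S^\omega)^*$, whence
$$
L_\A^\omega\, S^\omega (S^\omega)^* \,=\, S^\omega (L_W^\omega - 2\omega N)(S^\omega)^* \,=\, S^\omega (S^\omega)^*\, L_\A^\omega.
$$
Parts~(2) and (3) follow by the same template, with $L_\A S = SL_W$ replaced by $[L_{q,0}, L_W] = 0$ or $[L_q, L_\A] = 0$ from Theorem~\ref{gaic}, and the resulting algebraic content reducing to $[\eu, L_{q,0}] = -rL_{q,0}$ or $[\eu, L_q] = -rL_q$, both already noted. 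For the ``hence'' statements, observe that $L_W^{-\omega} = L_W^\omega$ and $L_\A^{-\omega} = L_\A^\omega$ (the $\omega$-dependence enters only through $\omega^2 x^2$); running the same argument with $\omega \mapsto -\omega$ yields the companion identity $L_q^{-\omega} L_\A^\omega = (L_\A^\omega - 2\omega r) L_q^{-\omega}$, and multiplying this on the left by $L_q^\omega$ while applying the identity in part~(3) makes the shifts $\pm 2\omega r$ cancel, giving $[L_q^\omega L_q^{-\omega}, L_\A^\omega] = 0$.

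No substantive obstacle is expected; the entire argument is formal once the gauge identity and the Euler-grading statements are in place. The only technical point requiring care is consistent sign bookkeeping --- in particular, the choice between $e^{+\omega x^2/2}$ and $e^{-\omega x^2/2}$ as the conjugating operator --- since this choice controls the signs of the constants $\pm 2\omega N$ and $\pm 2\omega r$ that appear in the theorem.
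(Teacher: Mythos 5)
Your approach is identical to the paper's: conjugate by the Gaussian factor $e^{\pm\omega x^2/2}$ to reduce everything to the $\omega$-free, Euler-graded picture, then read off each intertwining relation from the homogeneity degrees $[\eu,S]=-NS$, $[\eu,L_{q,0}]=-rL_{q,0}$, $[\eu,L_q]=-rL_q$. The paper's proof is the same calculation; its $L^\omega := e^{\omega x^2/2}L_\A^\omega e^{-\omega x^2/2}=L_\A-2\omega E$ is your $GL_\A^\omega G^{-1}=L_\A+2\omega\eu+\omega n$ with the opposite direction of conjugation (and with the harmless constant $\omega n$ dropped). The adjoint argument for \eqref{into1} and the $\omega\mapsto-\omega$ trick for the ``hence'' statements also match the paper.

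There is, however, a conjugation inconsistency in your write-up that you flagged (``sign bookkeeping'') but did not resolve, and resolving it exposes a sign typo in the theorem statement itself. With the theorem's convention $S^\omega=GSG^{-1}$, the conjugation that undoes the gauge and recovers $S$ is $G^{-1}(\cdot)G$, not $G(\cdot)G^{-1}$: one has $G^{-1}S^\omega G=S$ but $GS^\omega G^{-1}=G^2SG^{-2}\neq S$. Your displayed equation $(L_\A + 2\omega\eu)S = S(L_W + 2\omega\eu - 2\omega N)$ does not come from either direction of conjugating \eqref{into}. Applying $G^{-1}(\cdot)G$ (the only direction that produces $S$) gives
$$(L_\A - 2\omega\eu)S = S(L_W - 2\omega\eu - 2\omega N),$$
which after using $L_\A S=SL_W$ reduces to $[\eu,S]=+NS$, contradicting the homogeneity $[\eu,S]=-NS$ that you correctly established. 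The internally consistent statement, given $S^\omega=e^{-\omega x^2/2}Se^{\omega x^2/2}$, is in fact $L_\A^\omega S^\omega=S^\omega(L_W^\omega+2\omega N)$, and likewise $-2\omega r$ rather than $+2\omega r$ in parts (2)--(3). The paper's own proof carries the same slip: its line $(L_\A-2\omega E)S=S(L_W-2\omega E)+2\omega[E,S]$ should read $-2\omega[E,S]$, which reverses the sign of the constant. A one-variable sanity check ($W=\{e\}$, $\A_+=\{e_1\}$, $k=1$, $S=\partial-x^{-1}$, so $S^\omega=\partial+\omega x - x^{-1}$) confirms $L_\A^\omega S^\omega=S^\omega(L_W^\omega+2\omega)$. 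So: right method, right homogeneity facts, but the intermediate conjugated identity you quote is not what \eqref{into} transforms into; you should either fix the sign in the intermediate identity (and then note the typo in \eqref{into}) or redefine $S^\omega$ with $e^{\omega x^2/2}$ on the left.
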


\begin{proof}
Define
\begin{equation*}
L_0^\omega=e^{\omega x^2/2} L_W^\omega e^{-\omega x^2/2}\,,\qquad  L^\omega=e^{\omega x^2/2} L_\A^\omega e^{-\omega x^2/2}\,.
\end{equation*}
By direct calculation, $L_0^\omega=L_W-2\omega E$ and $L^\omega=L_\A-2\omega E$, where $E=\sum_{i=1}^n x_i\partial_i$ is the Euler operator.
From the homogeneity of $L_W$, $L_\A$, and $S$,
\begin{equation*}
[E, L_W]=-2L_W\,,\quad [E, L_\A]=-2L_\A\,,\quad [E, S]=-NS\,.
\end{equation*} 
Thus, using that $L_\A S=SL_W$, we obtain
\begin{equation*}
(L_\A-2\omega E)S=S(L_W-2\omega E)+2\omega[E,S]=S(L_W-2\omega E-2\omega N)\,,
\end{equation*}
or $L^\omega S= S(L_0^\omega-2\omega N)$. Conjugating this relation by $e^{\omega x^2/2}$ gives \eqref{into}.
Furthermore, taking formal adjoints in \eqref{into}, we obtain $(S^\omega)^* L_\A^\omega =(L_W^\omega-2\omega N)(S^\omega)^*$.  Combining this with \eqref{into} gives $L_\A S^\omega (S^\omega)^*=S^\omega (S^\omega)^* L_\A$, which is \eqref{into1}.

For part (3), we first note that $L_q$ given is homogeneous of degree $-r$. Using this and $L_qL_\A=L_\A L_q$, we get  
\begin{equation*}
L_q(L_\A-2\omega E)=(L_\A-2\omega E)L_q+2\omega r L_q\,,\quad\text{or}\quad L_q L^\omega
=(L^\omega +2\omega r)L_q\,.
\end{equation*}
Conjugating this by $e^{\omega x^2/2}$ gives $L_{q}^\omega  L_{\A}^\omega =(L_{\A}^\omega +2\omega r)L_{q}^\omega$, as needed. Changing $\omega\mapsto -\omega$, we obtain $L_{q}^{-\omega}  L_{\A}^\omega =(L_{\A}^\omega -2\omega r)L_{q}^{-\omega}$. The commutativity of $L_{q}^\omega L_{q}^{-\omega}$ and $L_\A^\omega$ is then obvious. This proves part (3). Part (2) is entirely similar.
\end{proof}

\begin{remark}
Even for $L_W^\omega$, the result of part (2) seems new. For locus configurations with $W=\{e\}$, the existence of an intertwiner $S^\omega$ satisfying \eqref{into} was established in \cite{CO} by a considerably more involved argument.
\end{remark}

\medskip

Let us apply these results to construct a large family of {\it quantum superintegrable systems} in two dimensions.
Take a locus configuration $\A$ of type $W=I_{2n}$ in the plane. The deformed Calogero--Moser operator  $L=L_\A^\omega$ in polar coordinates is given by
\begin{equation*}
L_\A^\omega=\frac{\partial^2}{\partial r^2}+r^{-2}\left(\frac{\partial^2}{\partial\varphi^2}-v(\varphi)\right)-\omega^2r^2\,,     
\end{equation*}
where $v$ is as in \eqref{cm2d}.
Obviously, $L_\A^\omega$ commutes with $L_1=\frac{\partial^2}{\partial\varphi^2}-v(\varphi)$, hence it is completely integrable. According to Theorem \ref{gaicw}, operators $S^\omega(S^\omega)^*$ as well as $L_q^\omega L_q^{-\omega}$ for $q\in\qaw$ also commute with $L_\A^\omega$. Therefore, $L_\A^\omega$ is (maximally) superintegrable for any locus configuration of type $W=I_2$ or $W=I_{2n}$ discussed in \ref{i2}, \ref{in}. The same is true for locus configurations $\A$ of type $W=\{e\}$ in \ref{blu}. Hence, we have the following result.
\begin{prop}\label{suprop}
For any generalised locus configuration $\A$ in the plane, the operator \eqref{gcmo} is superintegrable. 
\end{prop}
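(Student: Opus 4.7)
The plan is to produce, for any planar locus configuration $\A$, three algebraically independent operators commuting with $L_{\A}^{\omega}$, which constitutes maximal superintegrability in dimension $n=2$. First I would note that each summand of $u_{\A}$ is homogeneous of degree $-2$ in $x$, so in polar coordinates on $V=\R^{2}$ one has $u_{\A}=r^{-2}v(\varphi)$ and
\[
L_{\A}^{\omega}=\partial_{r}^{2}+r^{-1}\partial_{r}+r^{-2}L_{1}-\omega^{2}r^{2},\qquad L_{1}:=\partial_{\varphi}^{2}-v(\varphi).
\]
Since $L_{1}$ acts only in $\varphi$, it commutes with every $\varphi$-independent operator in sight, so $[L_{\A}^{\omega},L_{1}]=0$. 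This covers uniformly every case classified in Sections~\ref{i2}--\ref{blu} (including the Berest--Lutsenko family with $W=\{e\}$), since a planar locus configuration is by definition a finite collection of lines through the origin.

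For the third integral I would invoke Theorem~\ref{gaic}\,(3): the algebra $\qaw$ has Krull dimension $n=2$. The polynomial $x^{2}:=(x,x)$ belongs to $\qaw$ tautologically (orthogonal reflections preserve $(\cdot,\cdot)$), so a Krull-$2$-dimensional $\qaw$ must contain a homogeneous $q$ of degree $r\ge 2$ algebraically independent of $x^{2}$. By Theorem~\ref{gaicw}\,(3), the operator
\[
M:=L_{q}^{\omega}L_{q}^{-\omega}
\]
commutes with $L_{\A}^{\omega}$, has order $2r$ and leading symbol $q(\xi)^{2}$.

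It then remains to verify algebraic independence of the three principal symbols $|\xi|^{2}$, $(x_{1}\xi_{2}-x_{2}\xi_{1})^{2}$ and $q(\xi)^{2}$ in $\c[x_{1},x_{2},\xi_{1},\xi_{2}]$. I would apply the Jacobian criterion: the $3\times 3$ minor of their Jacobian in the columns $(x_{1},\xi_{1},\xi_{2})$ evaluates to
\[
-8\,\xi_{2}\,(x_{1}\xi_{2}-x_{2}\xi_{1})\,q(\xi)\,\bigl(\xi_{1}\partial_{\xi_{2}}q-\xi_{2}\partial_{\xi_{1}}q\bigr),
\]
which is generically nonzero exactly when $q$ is not annihilated by the angular derivative on $\xi$-space, i.e.\ when $q(\xi)\notin\c[|\xi|^{2}]$. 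By our choice of $q$, this holds, and we obtain three algebraically independent commuting integrals of $L_{\A}^{\omega}$, proving superintegrability.

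\textbf{The main obstacle} is the existence of the quasi-invariant $q$ independent of $x^{2}$, which is precisely the content of Theorem~\ref{gaic}\,(3); everything else reduces to the polar separation of the homogeneous potential, the commutation identity of Theorem~\ref{gaicw}\,(3), and an elementary Jacobian computation.
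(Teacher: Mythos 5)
Your proposal is correct and follows essentially the same line of argument as the paper: separate variables in polar coordinates to obtain the angular integral $L_1$, then invoke Theorem~\ref{gaicw} for an additional operator commuting with $L_{\A}^{\omega}$. The paper's (terse) proof offers both $S^\omega(S^\omega)^*$ and $L_q^\omega L_q^{-\omega}$ as the third integral and leaves algebraic independence implicit; you pick the latter and supply the missing Jacobian verification, using Theorem~\ref{gaic}(3) to guarantee a homogeneous quasi-invariant $q$ independent of $x^2$ --- this is a useful sharpening, though not a different method. (The small cosmetic discrepancy in the polar form --- you keep the $r^{-1}\partial_r$ term, while the paper's displayed formula has implicitly conjugated by $r^{1/2}$, absorbing it into the $\tfrac14$-shifts in $v_0$ --- does not affect the argument.)
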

Some special cases of these systems have been studied in the literature on quantum superintegrability. 

\begin{example}
\la{Ex8.4}
Taking $\A$ from Example \ref{2dex}\,(2), we have
\begin{equation*}
L_\A^\omega=\Delta
-\omega^2(x^2+y^2)-\frac{l(l+1)}{x^2}-\frac{m(m+1)}{y^2}-\frac{4(a^2+b^2)(a^2x^2+b^2y^2)}{(a^2x^2-b^2y^2)^2}\,,
\end{equation*}
where $l,m$ are arbitrary and $(2l+1)a^2=\pm(2m+1)b^2$. 
In this case 
$$\A_+\setminus R=\{ae_1-be_2, ae_1+be_2\}$$ so the shift operator $S$ has order two: $S=a^2\frac{\partial^2}{\partial x^2}-b^2 \frac{\partial^2}{\partial y^2}+\ldots$. Hence, the commuting operator 
$S^\omega(S^\omega)^*$ is of order four: 
$$S^\omega(S^\omega)^*=\left(a^2\frac{\partial^2}{\partial x^2}-b^2 \frac{\partial^2}{\partial y^2}\right)^2+\ldots\,.$$
\end{example}
This example appears in \cite{PTV}; references to more recent work can be found in \cite{MPR} in which the above operator $L_\A^\omega$ appears in Eq. (1).

\section{Affine configurations}
\la{S9}
Our main results can be easily extended to affine (i.e., noncentral) hyperplane arrangements.
As before, we start with a Coxeter group $W$ with root system $R$, in its reflection representation $V$ equipped with a $W$-invariant scalar product $(\cdot , \cdot)$. Let $\VV$ be the vector space of affine-linear functions on $V$. We identify $\VV$ with $V\oplus\c c$, where vectors in $V$ are considered as linear functionals on $V$ via the scalar product $(\cdot, \cdot)$ and where $c\equiv 1$ on $V$. 
The action of $W$ extends onto $\VV$ in an obvious way, with $w(c)=c$ for all $w\in W$. For any $\aalpha=\alpha+rc\in \VV$ we have the orthogonal reflection with respect to the hyperplane $\aalpha(x)=0$ in $V$,
\begin{equation*}
s_{\aalpha}(x)=x-2\aalpha(x)\alpha/(\alpha, \alpha)\,,\quad x\in V\,.
\end{equation*}
Given a finite affine hyperplane arrangement in $V$ with prescribed multiplicities, we encode it in a finite set $\A_+=\{\aalpha\}\subset \VV$ and a collection of multiplicities $k_{\aalpha}\in\c$. The hyperplanes that pass through the origin $0\in V$ will be thus associated with vectors $\alpha\in V$. If the configuration is \emph{central} (with all hyperplanes passing through $0$), we are back to the previously considered case.  
As before, we extend the map $k:\,\A_+\to \c$ to $\A:=\A_+\sqcup (-\A_+)$ by putting $k_{-{\aalpha}}=k_{\aalpha}$.
With such a configuration of hyperplanes we associate a generalised Calogero--Moser operator 
 \begin{equation}\label{gcma}
L_{\A}=\Delta-u_{\A}\,,\qquad u_{\A}=\sum_{\aalpha\in \A_+} \frac{k_{\aalpha}(k_{\aalpha}+1)(\alpha,\alpha)}{(\aalpha(x))^2}\,.
\end{equation}
Definitions \ref{defloc}, \ref{gqi} require obvious modifications in the affine case.

\begin{defi}\label{defloca}
An affine configuration $\{\A, k\}$ is a \emph{locus configuration of type $W$} if
\begin{enumerate}
    \item[(1)] $R\subset \A$, and both $\A$ and $k:\,\A\to\c$ are $W$-invariant;
    \item[(2)] For any $\aalpha\in\A\setminus R$, $\,k_{\aalpha}\in\Z_+$, with $u_{\A}(x)-u_{\A}(s_{\aalpha} x)$ divisible by ${\aalpha}^{2k_{\aalpha}}$.
\end{enumerate}
\end{defi} 

\begin{defi}\label{gqia} For a locus configuration $\A$ of type $W$, $q\in\c[V]^W$ is \emph{quasi-invariant} if, for any $\aalpha\in\A\setminus R$,
\begin{equation*}
q(x)-q(s_{\aalpha} x)\ \ \text{is divisible by}\ {\aalpha}^{2k_{\aalpha}}\,.
\end{equation*}

\end{defi}
With these modifications, our results in Section \ref{S5} extend to the affine case in a straightforward manner. Below we discuss an analogue of Theorem \ref{gaic}: first in the general case, and then in dimension one where, as we explain, it is closely related to classical works \cite{AMM, AM, DG}. 

\subsection{General case}
\la{S9.2}
For an affine locus configuration $\A$ of type $W$, write $L=L_\A$, $L_0=L_W$. Note that in the affine case the ring $\qaw$ is no longer graded. Recall that we have the filtration \eqref{filtl} on $\c[V]^W$ associated with $L_0$; as explained in \ref{S5.1}, this filtration coincides with the standard filtration by degree. We write $\grd\,{\qaw}$ for the associated graded ring. We have the following analogue of Theorem \ref{gaic}.
\begin{theorem} 
\la{gaicaff}
\begin{enumerate}
\item[(1)] There exists a nonzero differential $($shift$)$ operator  
$ S$ such that $L  S=S L_0$. 

\item[(2)] For any quasi-invariant $q\in \qaw$ there exists a differential operator $L_q$ such that $L_q S=S L_{q,0}$ where $L_{q,0}=\Res(\e T_q\e)$. The operators $L_q$ pairwise commute and commute with $L$, and the map $q\mapsto L_q$ defines an algebra embedding 
$\theta\,:\ \grd\,{\qaw} \hookrightarrow \D(V\!\setminus\! H_{\A})^W$. 

\item[(3)] The algebras $\qaw$ and $\grd\,{\qaw}$ have Krull dimension $n=\dim V$; thus,  $L$ is completely integrable. 
\end{enumerate}
\end{theorem}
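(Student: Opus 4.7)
The strategy mirrors that of Theorem~\ref{gaic}: deduce parts (1)--(3) from Theorem~\ref{IT} and Proposition~\ref{commprop} after setting up the affine analogue of the data in Section~\ref{S5.1}. I would take $B = B_k$ the spherical Cherednik algebra of type $W$, $R = \c[V]^W$, and $S$ the Ore set generated by
\[
\delta_k \,:=\, \prod_{\aalpha \in \A_+ \setminus R} \aalpha^{k_\aalpha}
\]
(replaced by $\delta_k^2$ if it is only $W$-semiinvariant), and set $L_0 = L_W$, $L = L_\A$, $U_0 = \c[\vreg]^W$. The role of $\U_\A$ is played by the subspace of $\delta_k^{-1}\c[\vreg]^W$ consisting of $f$ such that $f(s_{\aalpha} x) - (-1)^{k_{\aalpha}} f(x)$ is divisible by $\aalpha^{k_{\aalpha}}$ for every $\aalpha \in \A \setminus R$. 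The crucial invariance $L_{\A}(\U_{\A}) \subseteq \U_{\A}$, together with maximality of $\U_\A$, is the affine version of Lemma~\ref{inu}; it follows by the same local Laurent-expansion argument, since the analysis is purely local near a generic point of each hyperplane $H_{\aalpha}$ and is insensitive to whether $\aalpha$ is linear or affine. The remaining inclusions $\delta_k\c[\vreg]^W \subseteq \U_\A \subseteq \delta_k^{-1}\c[\vreg]^W$ and the symbol condition $\sigma_0(L) = \sigma_0(L_0)$ (since $L - L_0$ is multiplication by a rational function of homogeneity degree $-2$) verify the remaining hypotheses of Theorem~\ref{IT}, and part~(1) follows at once with $S = \tfrac{1}{2^N N!}\,\ad_{L, L_0}^N(\delta_k)$, $N = \deg \delta_k$, as an explicit shift operator.

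For part~(2), I would invoke Proposition~\ref{commprop} and Corollary~\ref{commcor} with $Q = \qaw$. As noted in Section~\ref{S5.1}, the ad-nilpotent filtration associated to $L_0$ on $\c[V]^W$ coincides with the polynomial-degree filtration, so for $q \in \qaw$ with top homogeneous component of degree $r$, the integer $N_{q,0}$ from \eqref{Loq} equals $r$. Thus $L_{q,0} = \tfrac{1}{2^r r!}\,\ad_{L_0}^r(q)$ and $L_q = \tfrac{1}{2^r r!}\,\ad_L^r(q)$ depend only on the class of $q$ in $\grd\,\qaw$; the shift identity $L_q S = S L_{q,0}$, pairwise commutativity, and the commutation $[L_q, L] = 0$ all follow from Proposition~\ref{commprop} applied to the abelian ad-nilpotent family $\{L_{q,0}\}_{q \in \qaw}$ (cf.\ Remark~\ref{Rfamily}). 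Injectivity of $\theta\colon \grd\,\qaw \hookrightarrow \D(V\!\setminus\! H_\A)^W$ is then immediate, since the principal symbol of $L_q$ recovers the top-degree homogeneous part of $q$.

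For part~(3), the Krull-dimension claim reduces to the containment $\delta_k^2\,\c[V]^W \subseteq \qaw$. Indeed, if $q = \delta_k^2 p$ with $p \in \c[V]^W$, then both $q$ and $q\circ s_{\aalpha}$ vanish to order $\geq 2k_{\aalpha}$ along $\aalpha = 0$ (because $\aalpha^2 \circ s_\aalpha = \aalpha^2$, so the ideal $(\aalpha^{2k_\aalpha})$ is $s_\aalpha$-stable), whence $q - q \circ s_\aalpha$ satisfies Definition~\ref{gqia}. Since $\delta_k^2\,\c[V]^W$ is a nonzero ideal in $\c[V]^W$ of Krull dimension $n$, so are $\qaw$ and $\grd\,\qaw$. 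Consequently $\theta(\grd\,\qaw)$ contains $n$ algebraically independent commuting operators, each commuting with $L$. To include $L$ in a complete integrable family, I would appeal to the standard fact that the field of fractions of any commutative subring of $\D(V\!\setminus\! H_\A)$ has transcendence degree at most $n$; hence $L$ is algebraic over $\theta(\grd\,\qaw)$, and a transcendence basis of the enlarged commutative ring that contains $L$ can always be extracted.

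The main technical obstacle is the affine version of Lemma~\ref{inu}. Although the argument is formally identical to \cite[Proposition~5.1]{CEO}, one must carefully verify that the characterisation of $\U_\A$ by the allowed exponents in the Laurent expansion of $f$ along $\aalpha = 0$, and the characterisation of the locus relations by the vanishing of certain odd-degree coefficients in the Laurent expansion of $u_\A$ along the same hyperplane, carry over intact when $\aalpha$ is affine rather than linear. Once this local invariance is in hand, the abstract machinery of Section~\ref{pro} produces parts (1)--(3) without further effort.
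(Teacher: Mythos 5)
Your proposal is correct and follows exactly the paper's (very terse) route: the paper simply says the affine case is ``proved by the same arguments as Theorem~\ref{gaic}'' and writes down the affine $\delta_k$ and the shift operator $S = \frac{1}{2^N N!}\ad_{L,L_0}^N(\delta)$, which is precisely the set-up you describe. Your identification of the one genuine checking-point — that the local Laurent-expansion argument behind Lemma~\ref{inu} is insensitive to whether $\aalpha$ is linear or affine — is the key observation, and your treatment of the ungraded ring $\qaw$ via its associated graded is the right fix for part~(2). One small inaccuracy: in verifying $\sigma_0(L)=\sigma_0(L_0)$ you cannot argue that $L-L_0$ has ``homogeneity degree $-2$'', since the affine potential is not homogeneous; the correct argument is that writing $L-L_0 = \delta_k^{-2}b$ with $b=\delta_k^2(L-L_0)\in\c[V]^W$, the total degree of $b$ is $\le 2N-2 < 2N = \deg_{L_0}(\delta_k^2)$, which gives $\deg(L-L_0)\le -2$ under the $\ad$-nilpotent degree extended to the localisation. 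This does not affect the conclusion.
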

\begin{proof}
These results are proved by the same arguments as Theorem \ref{gaic}. For example, the shift operator $S$ is constructed in a similar way:
\begin{equation*}
    S=\frac{1}{2^NN!}\,\ad_{L, L_0}^N(\delta)\,,\quad\text{with}\quad \delta=\prod_{{\aalpha}\in\A_+\setminus R}\aalpha^{k_{\aalpha}}\,,\quad N=\deg\delta\,.
\end{equation*}
\end{proof}

\begin{remark}
The commutative ring $\theta(\grd\,\qaw)$, or even the larger ring obtained by adjoining $L$, is no longer maximal in the affine case. This can be already seen in dimension one, see Remark \ref{notmax}.
\end{remark}
 
 \subsection{One-dimensional case}
\la{S9.1}
In the case $V=\c$, where we have two options: $W=\{e\}$, $R=\varnothing$ or $W=\Z_2$, $R=\{\pm 1\}$. As we will see, this has a close relation with the works \cite{AMM, DG}. 
First, rescaling the elements of $\A$ if needed, we may assume that the affine-linear functions $\aalpha\in\A_+$ are of the form $x-x_i$. Hence, we may think of $\A_+$ as a finite collection of distinct points $x_i$, $i\in I$ and multiplicities $k_i\in\c$. 

\subsubsection{$W=\{e\}$} In this case $R=\varnothing$, so each $x_i$ comes with $k_i\in\Z_+$, leading to the operator
\begin{equation}\la{dg01}
L=\frac{d^2}{dx^2} - u(x)\,,\qquad u(x)=\sum_{i\in I}\frac{k_i(k_i+1)}{(x-x_i)^2}\,.
\end{equation}
The locus conditions require that $u(x)-u(s_ix)$ is divisible by $(x-x_i)^{2k_i}$, for every $i\in I$, where $s_i: x\mapsto 2x_i-x$. This is equivalent to the following relations:
\begin{equation}\la{dg02}
\sum_{j\in I\setminus\{i\}}\frac{k_j(k_j+1)}{(x_i-x_j)^{2s+1}}=0\quad\text{for $1\le s\le k_i$ and all $i\in I$.}
\end{equation}
In the case when $k_i=1$ for all $i$, these relations describe the so-called rational ``locus'' in \cite{AMM}. The more general relations \eqref{dg02} are due to Duistermaat and Gr\"unbaum \cite{DG}, who interpreted them as conditions for trivial local monodromy of $L$ near $x=x_i$ and showed the following.   

\begin{prop}[Theorem 3.4, \cite{DG}]\label{kdv}
For any operator $L$ of the form \eqref{dg01} with properties \eqref{dg02}, there exists a differential operator $D$ with rational coefficients, intertwining $L$ and $L_0=\frac{d^2}{dx^2}$, i.e. such that $LD=DL_0$.   \end{prop}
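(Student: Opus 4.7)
The plan is to derive Proposition~\ref{kdv} as a one-dimensional specialization of the abstract shift-operator criterion in Theorem~\ref{IT}. Take $B := A_1(\c) = \c\langle x,\partial\rangle$, $R := \c[x]$, $\mathbb{K} := \c(x)$, and localize at the Ore set $S := \{\delta^n : n \ge 0\}$ generated by $\delta := \prod_{i\in I}(x-x_i)^{k_i}$. Set $L_0 := d^2/dx^2 \in B$ and $L := L_0 - u(x) \in A := B[\delta^{-1}]$, with polynomial module $U_0 := \c[x]$. The role of the subspace $U$ in Theorem~\ref{IT} is played by the affine analogue $U_\A$ of the space defined in \eqref{loc11}: the rational functions $f$ whose Laurent expansion at each $x_i$ involves only exponents of $(x-x_i)$ in $\{-k_i + 2\Z_{\ge 0}\}\cup\{k_i + 1 + 2\Z_{\ge 0}\}$.

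The three hypotheses of Theorem~\ref{IT} are then essentially built into the setup. Condition (2) on principal symbols holds trivially, since $L - L_0 = -u(x)$ has degree $-2$ in the ad-nilpotent filtration of Lemma~\ref{filtl}. Condition (1b) amounts to the sandwich $\delta\, U_0 \subseteq U_\A \subseteq \delta^{-1} U_0$, which is immediate from the description of $U_\A$. The substantive content is condition (1a), the invariance $L[U_\A] \subseteq U_\A$, which is the one-dimensional instance of Lemma~\ref{inu}.

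The main obstacle, and the only place where the locus relations \eqref{dg02} actually enter, is therefore this invariance property. This is a purely local matter at each $x_i$: expanding $u$ in a Laurent series about $x_i$, the relations \eqref{dg02} are equivalent to the vanishing of all coefficients of $(x-x_i)^{2s-1}$ for $s=1,\ldots,k_i$ in that expansion. Given $f \in U_\A$ with the prescribed exponent parity pattern at $x_i$, a direct term-by-term check then shows that $f'' - u f$ inherits this pattern: the leading singularity $a_{-k_i}k_i(k_i+1)(x-x_i)^{-k_i-2}$ of $f''$ cancels against the corresponding term of $uf$, while all potentially forbidden odd-parity coefficients of order below $k_i$ vanish precisely because of the DG locus relations. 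This is the argument of \cite[Prop.~3.3]{DG}; an essentially identical computation appears in the proof of \cite[Prop.~5.1]{CEO}, which is the statement invoked by Lemma~\ref{inu}.

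With conditions (1) and (2) of Theorem~\ref{IT} verified, the theorem produces a nonzero $D \in A = B[\delta^{-1}]$ satisfying $LD = DL_0$; since $A \subset \D(\c(x))$, this $D$ automatically has rational coefficients, which is the content of Proposition~\ref{kdv}. As a by-product one obtains the explicit formula $D = (2^N N!)^{-1}\,\ad_{L,L_0}^N(\delta)$ with $N = \deg\delta$, exactly as in the proof of Theorem~\ref{gaicaff}(1); the ad-nilpotency ensuring termination is Lemma~\ref{shiftl} in this one-variable setting.
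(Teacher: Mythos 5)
Your proof is correct, but it follows a genuinely different route from the source the paper cites for this statement. In the paper, Proposition~\ref{kdv} is not proved at all: it is quoted verbatim from Duistermaat and Gr\"unbaum, and the paper explicitly notes that the proof in \cite{DG} proceeds by iterating \emph{elementary first-order rational Darboux transformations}, giving an inductive construction of $D$ and of the potentials $u$. You instead derive the statement as the one-variable specialization of the abstract shift-operator criterion of Theorem~\ref{IT}, which is precisely the approach the paper takes when it \emph{generalizes} \cite{DG} to higher dimension in Theorem~\ref{gaicaff}. Your verification of the hypotheses is sound: the ad-nilpotency assumption $B=\{a\in A: a[\c[x]]\subseteq\c[x]\}=A_1(\c)$ is the standard identification $\D(\a^1)=A_1$; condition (2) reduces to $\deg(L-L_0)=\deg(-u)=-2<0$; the sandwich $\delta\,\c[x]\subseteq U_\A\subseteq\delta^{-1}\c[x]$ is immediate from the Laurent-exponent description; and the key invariance $L[U_\A]\subseteq U_\A$ is exactly the local computation at each $x_i$ in which the relations~\eqref{dg02} kill the coefficients $u_1,u_3,\dots,u_{2k_i-1}$ in the expansion of $u$ about $x_i$, so that $f''-uf$ keeps the prescribed parity pattern --- this is the content of \cite[Prop.~5.1]{CEO} and of \cite[Prop.~3.3]{DG}, as you say. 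What the two approaches buy: the Darboux-iteration argument of \cite{DG} is constructive and also \emph{classifies} all potentials $u$ satisfying~\eqref{dg02} (they are exactly the Adler--Moser rational KdV potentials), whereas the route through Theorem~\ref{IT} produces $D$ in a single closed-form step, needs no induction, and --- as the paper exploits --- is the only argument that survives in dimension $n>1$ where first-order Darboux factorizations are unavailable. One small caveat: your by-product formula $D=(2^N N!)^{-1}\ad_{L,L_0}^N(\delta)$ with $N=\deg\delta$ requires the sharper degree bound $\ad_{L,L_0}^{N+1}(\delta)=0$; Theorem~\ref{IT} alone only guarantees vanishing at step $2\deg(\delta)+1$, and the improvement to $N+1$ comes from the additional $x$-filtration argument (referenced to \cite{C98} in the proof of Theorem~\ref{gaic}), which you invoke implicitly but do not spell out.
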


In fact, in \cite{DG} it is proved that $D$ can be obtained by iterating elementary rational Darboux transformations (of order one); this gives an effective method for constructing all such $L$ (see also \cite{AM}). 
The corresponding $u(x)$ are called rational KdV potentials due to their link to rational solutions of the KdV equation. 

 \begin{remark}
\la{notmax}  In the case when all $k_i=1$, it is known that the number $N$ of poles must be of the form $N=l(l+1)/2$ for some $l\in\Z_+$, and the maximal commutative ring containing $L$ is generated by $L$ and an operator $A$ of order $2l+1$ (see \cite{AMM}). In comparison, the commuting operators $L_q$ in Theorem \ref{gaicaff}(2) are obtained by $L_q=\frac{1}{2^rr!}\ad_{L}^{r}q$, $r=\deg q$. Now, for a quasi-invariant polynomial $q$, its derivative should vanish at $N$ points, hence $\deg q\ge N+1$ which is $>2l+1$ for $l>3$. Hence, the ring obtained by adjoining $L$ to $\theta(\grd\,{\qaw})$ is not maximal in that case.
 \end{remark}

\subsubsection{$W=\Z_2$} In this case $\A$ has to be invariant under $x\mapsto -x$, so each $\aalpha=x-x_i$ appears together with $-x-x_i=-(x+x_i)$, with the same multiplicity $k_i\in\Z_+$. Thus, we may interpret $\A$ as a finite subset of $\mathcal P$ in $\c\setminus\{0\}$, symmetric around $0$, with $k_p\in\Z_+$ satisfying $k_{-p}=k_p$. The corresponding operator is
\begin{equation}\la{dg1}
L=\frac{d^2}{dx^2} - u(x)\,,\qquad u(x)=\frac{k(k+1)}{x^2}+\sum_{p\in \mathcal P}\frac{k_p(k_p+1)}{(x-p)^2}\,,
\end{equation}
where $k$ is arbitrary. The locus conditions in this case mean that (cf. \cite[(4.45)--(4.46)]{DG})
\begin{equation}\la{dg2}
\frac{k(k+1)}{p^{2j+1}}+\sum_{q\in\mathcal P\setminus\{p\}}\frac{k_q(k_q+1)}{(p-q)^{2s+1}}=0\quad\text{for $1\le s\le k_p$ and all $p\in\mathcal P$.}
\end{equation}
The following result is due to Duistermaat and Gr\"unbaum. \cite{DG}.
\begin{prop}[cf. Proposition 4.3, \cite{DG}]\label{even}
For any operator $L$ of the form \eqref{dg1} with properties \eqref{dg2}, there exists a differential operator $D$ with rational coefficients, intertwining $L$ and $L_0=\frac{d^2}{dx^2}-\frac{k(k+1)}{x^2}$, i.e. such that $LD=DL_0$.   
\end{prop}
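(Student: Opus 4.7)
The plan is to deduce Proposition~\ref{even} as a direct special case of Theorem~\ref{gaicaff}(1), applied to the one-dimensional affine locus configuration of type $W = \Z_2$ that encodes the data of \eqref{dg1}--\eqref{dg2}. No new machinery is required; the content of the proof is to identify the configuration correctly and verify that the hypothesis \eqref{dg2} is literally the locus condition of Definition~\ref{defloca}(2).

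First I would set up the configuration. Take $V = \c$, $W = \Z_2$ acting by $x \mapsto -x$, root system $R = \{\pm 1\}$ with multiplicity $k$, and let the non-Coxeter part of $\A$ consist of the affine-linear functions $\aalpha_p(x) := x - p$ for $p \in \mathcal{P}$, each with multiplicity $k_{\aalpha_p} := k_p$. The symmetry $\mathcal P = -\mathcal P$ together with $k_{-p} = k_p$ makes $\A$ and $k$ invariant under $W$, and $R \subset \A$, so condition~(1) of Definition~\ref{defloca} holds. By construction the operator $L_\A$ from \eqref{gcma} coincides with the operator $L$ of \eqref{dg1}, while $L_W = L_0 = d^2/dx^2 - k(k+1)/x^2$.

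Next I would translate the relations \eqref{dg2} into Definition~\ref{defloca}(2). For $\aalpha = \aalpha_p$ with $p \in \mathcal P$, the reflection $s_{\aalpha}$ is $x \mapsto 2p - x$. Substituting $t := x - p$, the singular term $k_p(k_p+1)/t^2$ drops out of the skew-symmetric combination $u_\A(p+t) - u_\A(p-t)$, leaving an odd analytic function of $t$ near $0$. Expanding the remaining summands $k(k+1)/(p+t)^2$ and $k_q(k_q+1)/(p+t-q)^2$ for $q \in \mathcal P \setminus\{p\}$ in geometric series and extracting the coefficient of $t^{2s-1}$ yields (up to the common factor $-2s$) exactly the left-hand side of \eqref{dg2}. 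Hence the conditions \eqref{dg2} for $1 \le s \le k_p$ are equivalent to divisibility of $u_\A(x) - u_\A(s_{\aalpha}x)$ by $(x-p)^{2k_p}$. Thus $\A$ is an affine locus configuration of type $W = \Z_2$ in the sense of Definition~\ref{defloca}.

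Finally I would invoke Theorem~\ref{gaicaff}(1), which produces a nonzero differential operator
$$
D \;:=\; S \;=\; \frac{1}{2^N N!}\,\ad_{L,L_0}^{N}(\delta), \qquad \delta \,=\, \prod_{p \in \mathcal P}(x-p)^{k_p}, \quad N \,=\, \deg \delta,
$$
satisfying $L\,D = D\,L_0$. Rationality of the coefficients of $D$ is automatic: $L, L_0 \in \c(x)[\partial_x]$, $\delta \in \c[x]$, and iterated commutators preserve the subring $\c(x)[\partial_x]$. The main (and only) nontrivial step is the Taylor-expansion identification of \eqref{dg2} with Definition~\ref{defloca}(2); once that bookkeeping is done, Proposition~\ref{even} is a one-line consequence of Theorem~\ref{gaicaff}.
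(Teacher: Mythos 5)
Your proof is correct, but it takes a genuinely different route from the paper. For Proposition~\ref{even} the paper offers no fresh argument: it simply attributes the statement to Duistermaat and Gr\"unbaum and points to their original proof, which builds $D$ by iterating first-order rational Darboux transformations starting from $L_0 = \frac{d^2}{dx^2}+\frac{1}{4x^2}$. You instead specialise Theorem~\ref{gaicaff}(1) to the one-dimensional affine locus configuration of type $W=\Z_2$. This is precisely the specialisation the paper invites in the paragraph after the proposition (``Comparing Propositions~\ref{kdv} and~\ref{even} with Theorem~\ref{gaicaff}(1), we see that the Calogero--Moser operators $L_\A$ for locus configurations provide a multi-variable generalisation\ldots''), but it never actually carries it out. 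There is no circularity in your route, since Theorem~\ref{gaicaff}(1) rests entirely on the abstract machinery of Section~\ref{pro} (Lemma~\ref{shiftl}, Lemma~\ref{filtl}, Theorem~\ref{IT}) together with the local analysis of Lemma~\ref{inu}, none of which invokes~\cite{DG}. Your approach buys a uniform derivation from the paper's own general theorem; the paper's citation buys structural information that the abstract argument does not reproduce, notably that $k$ must be a half-integer once $\mathcal P\ne\varnothing$ and that $D$ factors through rank-one Darboux steps. Two small points: the Taylor coefficient you compute carries the constant $-4s$, not $-2s$ (harmless, as you only need it nonzero); and the equivalence of \eqref{dg2} with the divisibility condition of Definition~\ref{defloca}(2) is already asserted in the paper immediately before the proposition, so you could have cited it rather than rederiving it.
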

In fact, assuming $\mathcal P\ne \varnothing$, it follows from \cite{DG} that (1) $k$ must be a half-integer (see \cite[Eq. (4.44)]{DG}), and (2) $D$ can be found by iterating rational Darboux transformations of order one, starting from $L_0=\frac{d^2}{dx^2}+\frac{1}{4x^2}$. 


\medskip

Comparing Propositions \ref{kdv} and \ref{even} with Theorem \ref{gaicaff}(1), we see that the Calogero--Moser operators $L_\A$ for locus configurations provide a \emph{multi-variable generalisation} of ``even'' family \eqref{dg1}--\eqref{dg2}, with the Coxeter group $W$ taking place of $W=\Z_2$, as well as the KdV family \eqref{dg01}--\eqref{dg02} (if $W=\{e\}$). 
 (It is interesting that in dimension $>1$ the multiplicities $k_\alpha$ for $\alpha\in R$ do not have to be half-integers.) As we have seen, there are plenty of examples of locus configurations with different groups $W$. Unfortunately, we know very few genuinely affine examples in dimension $>1$. Here is one two-dimensional example; it is of type $A_2$, and it can be viewed as a deformation of the root system of type $G_2$. It can be realised in $\R^3$ as $\widetilde G_2=A_2\cup \widetilde A_2$, where
\begin{equation*}
\la{newaff}
\begin{array}{lll}
A_2 &=\{\pm(e_i - e_j),\   1\le i<j\le 3\},  \text{with }\ k_\alpha=-1/3\,,\\ 
\widetilde A_2 &=\{\pm (3e_i -e_1- e_2-e_3 +c\delta), \ 1\le i \le 3\} \,,  \text{with }\
k_\alpha=1\,.
\end{array}
\end{equation*}
The corresponding Calogero--Moser operator is
\begin{multline*}\la{g2w}
L_{\widetilde G_2}=\Delta-\frac49\sum_{1\le i<j\le 3}\frac{1}{(x_i-x_j)^2}\\-\frac{12}{(2x_1-x_2-x_3+c)^2}-\frac{12}{(2x_2-x_1-x_3+c)^2}-\frac{12}{(2x_3-x_1-x_2+c)^2}\,.
\end{multline*}
 Here $c$ is arbitrary; for $c=0$ we have a $G_2$ configuration.   

\begin{remark} 
A trivial way of producing examples in dimension $>1$ is by taking direct sums of one-dimensional configurations. Another possibility is to use the methods 
of \cite[Sec. 5.3]{CFV}. Such examples are reducible in a certain sense, so not so interesting.
\end{remark}

\begin{remark}
By analogy with the results of \cite{DG}, it is natural to expect that the Calogero--Moser operators for generalised locus configurations are \emph{bispectral}. In particular, we expect them to be {\it self-dual} when the configuration is central (cf. \cite[Theorem 2.3]{CFV}). Affine configurations, such as $\widetilde G_2$, should lead to examples of non-trivial bispectral duality.    
\end{remark}

\begin{remark}
In \cite{SV1}, the deformed Calogero--Moser operators were considered in their trigonometric form. Our methods cannot be applied verbatim to that case and require non-trivial modifications. We hope to return to this problem elsewhere. Some results about the trigonometric locus configurations can be found in \cite[Section 4]{C08}. Let us also mention a paper \cite{FVr}, where a trigonometric version of the above operator $L_{\widetilde G_2}$ is proposed.
\end{remark}

\end{document}